\documentclass[12pt,a4paper,reqno]{amsart}

\usepackage[english]{babel}
\usepackage{mathtools}
\usepackage{titlesec}
\usepackage[T1]{fontenc}
\usepackage{mathrsfs}
\usepackage{enumitem}
\usepackage{amsmath,amsfonts,amssymb,mathrsfs}

\usepackage{thmtools}

\usepackage{bigints} 
\usepackage{algorithmicx} 
\usepackage{algpseudocode} 
\usepackage[colorlinks=true, urlcolor=tealgreen,citecolor=tealgreen]{hyperref}
\usepackage{parskip}
\usepackage{parskip}
\usepackage{algorithm}
\usepackage[margin=1in]{geometry}
\usepackage{newtxtext} 
\usepackage{xcolor}
\usepackage{ragged2e}
\usepackage{lipsum}
\usepackage{xfrac}
\usepackage[autostyle,french=guillemets]{csquotes}
\usepackage[square, sort, numbers]{natbib}
\usepackage{float}
\usepackage{subfigure}

\usepackage{comment}
\usepackage{fancyhdr}
\renewcommand{\refname}{References}
\makeatletter
\renewcommand\bibsection{%
  \section*{{\normalfont\large\bfseries\sc\refname}\@mkboth{\MakeUppercase{\refname}}{\MakeUppercase{\refname}}}%
}%
\makeatother

\ifcsname c@theorem\endcsname\else
  \newtheorem{theorem}{Theorem}[section]

  \theoremstyle{definition}
  \newtheorem{assumption}[theorem]{Assumption}
  \newtheorem{lemma}[theorem]{Lemma}
  \newtheorem{proposition}[theorem]{Proposition}
  \theoremstyle{remark}
  \newtheorem{remark}[theorem]{Remark}
  \theoremstyle{plain}

\providecommand{\dif}{\mathrm{d}}

\providecommand{\Rzeros}{\mathcal{R}_{0}^{s}}

\providecommand{\indic}{\mathbf{1}}

\usepackage{tikz}
\usepackage{tikz-cd}
\usetikzlibrary{matrix,positioning,shapes,calc,petri,arrows,
babel,decorations.pathreplacing,arrows.meta,decorations.pathmorphing}

\usepackage{thmtools}
\usepackage[unq]{unique}

\declaretheoremstyle[headindent=0pt,headfont=\bfseries,notefont=\normalfont\itshape,spaceabove=1.5em]{mystyle1}
\declaretheoremstyle[headindent=0pt,headfont=\bfseries,notefont=\normalfont,bodyfont=\itshape,spaceabove=1.5em]{mystyle2}
\declaretheoremstyle[headfont=\itshape,notefont=\normalfont\itshape]{remark1}

\declaretheorem[name=Corollary,style=mystyle2,numberwithin=section]{corollary}

\definecolor{cinnabar}{rgb}{0.89, 0.26, 0.2}
\definecolor{dukebluep}{rgb}{0.0, 0.0, 0.61}
\definecolor{tealgreen}{rgb}{0.0, 0.51, 0.5}
\definecolor{bluep}{rgb}{0.2, 0.2, 0.6}
\definecolor{pinegreen}{rgb}{0.0, 0.47, 0.44}

\renewcommand{\footnotesize}{\normalsize}

\usepackage{xpatch}
\xpatchcmd{\author}{\relax#1\relax}{\relax\detokenize{#1}\relax}{}{}

\tikzset{
    ultra thin/.style= {line width=0.1pt},
    very thin/.style=  {line width=0.2pt},
    thin/.style=       {line width=0.4pt},
    semithick/.style=  {line width=0.6pt},
    thick/.style=      {line width=0.8pt},
    very thick/.style= {line width=1.2pt},
    ultra thick/.style={line width=1.6pt},
    multra thick/.style={line width=1pt},
}

\renewenvironment{abstract}
{\begin{quote}
\noindent\rule{\linewidth}{.5pt}\par\vspace*{\dimexpr-1em+7pt\relax}{\scshape\abstractname.}}
{\par\vspace*{-1em}\noindent \rule{\linewidth}{.5pt}\vspace*{1.5em}
\end{quote}
}

\let\svoverline\overline
\newcommand{\uc}[1]{\mkern 5mu\svoverline{\mkern-1mu#1\mkern-0.3mu \rule{0pt}{3.5mm}}\mkern 0mu}
\let\overline\uc

\titlespacing{\thetitle}
{0pt}
{-20pt}
{0pt}

\titleformat{\section}
{\centering\normalfont\large\scshape\bfseries}
{\thesection.}
{0.7em}
{}

\titlespacing{\section}
{0em}
{1.5em}
{1em}

\titleformat{\subsection}
{\normalfont\large\bfseries}
{\thesubsection}
{0.7em}
{}

\titlespacing{\subsection}
{0em}
{1.5em}
{1em}

\titleformat{\subsubsection}[runin]
{\normalfont\scshape}
{\thesubsubsection}
{0.7em}
{}

\titlespacing{\subsubsection}
{0em}
{1.5em}
{0.4em}
\renewcommand{\footnotesize}{\scriptsize}
\allowdisplaybreaks[1]

\title{\large\scshape Spread of Chronic Wasting Disease under Stochastic Environmental Conditions and its Control using Deep Reinforcement Learning}
\date{}

\begin{document}

\maketitle


{
\begin{tabular}{@{}c@{}}
Wei {\scshape Yin}\textsuperscript{1}, Wesley J. {\scshape Marrero}\textsuperscript{2},
Kamal {\scshape Jnawali}\textsuperscript{3},
Lale {\scshape Asik}\textsuperscript{4},\\
Michael G.  {\scshape Tyshenko}\textsuperscript{5}, and Tamer {\scshape Oraby}\textsuperscript{1,*}\\
\\
1. School of Mathematical and Statistical Sciences, \\ The University of Texas Rio Grande Valley, Edinburg, TX, USA. \\
2. Thayer School of Engineering, Dartmouth College, Hanover, NH, USA.\\
3. Department of Mathematics, State University of New York at Oswego, Oswego, NY, USA.\\
4. Mathematical Science Department, Wenzhou-Kean University, Zhejiang, China.  \\
5.  Risk Sciences International, Ottawa, ON, Canada.\\
\\
{}\textsuperscript{*} Corresponding Author: \texttt{tamer.oraby@utrgv.edu}
\\
\end{tabular}\par\bigskip}


\begin{abstract}
Chronic wasting disease (CWD) is a fatal neurodegenerative prion condition affecting deer, elk, moose, reindeer, muntjac, and potentially other members of the cervid family (Cervidae). Because free-ranging cervid populations experience environmental variability and demographic randomness, deterministic descriptions may miss important dynamics such as stochastic fade-out. We develop a stochastic compartmental Susceptible-Infectious-Environmental (SIU) model, formulated as a system of stochastic differential equations with normal (Skorokhod) reflection that keeps the susceptible class nonnegative. We investigate how environmental variability shapes cervid populations as CWD pressure intensifies and as control measures are applied. For the deterministic version, we characterize the disease-free equilibria, derive the basic reproduction number as the sum of a direct and an environmental contribution, and show that the endemic branch emerges through a forward transcritical bifurcation at $R_0=1$. For the reflected stochastic system, we establish local well-posedness and positivity, and obtain the disease-free invariant law. We derive the top Lyapunov exponent for invasion and show that it is unaffected by reflection. We then examine CWD mitigation using a memory-based deep reinforcement learning agent trained with Proximal Policy Optimization over a hybrid discrete-continuous action space, contrasting adjustments to the hunting rate, environmental decontamination, and integrated approaches. In the deterministic setting, hunting alone can suppress the disease but at the cost of a substantial (about $58\%$) decline in the total population, and decontamination alone can control the disease but only under sustained maximal effort, an approach usually limited to cervid farms. Whereas the combined policy more than doubles the cervid population while driving infection prevalence and environmental contamination to near-zero levels. In the stochastic setting, the learned policy contains the disease in roughly $80\%$ of independent realizations, with about $10\%$ producing large outbreaks, and its reliability degrades as noise intensity increases. Across all scenarios, the agent consistently raises and sustains environmental decontamination, identifying it as the dominant control lever. 

\keywords{Chronic Wasting Disease; Stochastic Differential Equations; Reinforcement Learning; Environmental Transmission.}
\end{abstract}
\newpage
\section{Introduction}\label{Sec:Intro}

Chronic wasting disease (CWD) is a transmissible spongiform encephalopathy (TSE) caused by misfolded prion protein \cite{williams1980chronic}. TSEs include diseases like bovine spongiform encephalopathy (BSE) in cattle, scrapie in sheep, transmissible mink encephalopathy (TME) in mink, Feline spongiform encephalopathy (FSE), exotic ungulate spongiform encephalopathy (EUE), and Creutzfeldt–Jakob disease (CJD) and variant Creutzfeldt–Jakob disease (vCJD) in humans \cite{imran2011overview}.

CWD can be transmitted between deer by various methods, including direct contact \cite{williams2002chronic}, vertical transmission from mother to offspring \cite{nalls2013mother}, and indirectly through contamination of shed prions in the environment \cite{miller2004environmental}. The incubation period of CWD ranges from 12–34 months \cite{kahn2004chronic}, and the disease can be transmitted during this period \cite{mathiason2009infectious}. At the end of the clinical disease, CWD-affected deer show changes in behavior and symptoms of physical wasting, increased thirst, increased urination, excessive salivation, difficulty swallowing, motor coordination impairment, and drooping of ears \cite{gilch2011chronic}.

Unmanaged CWD reduces the size of free-ranging deer populations due to high disease prevalence \cite{edmunds2016chronic}. The persistence of prions shed into the environment has been suggested to be a key driver of transmission. Hot spots frequented by deer for water, food, or shelter complicate management efforts \cite{almberg2011modeling}. Deer and elk field survival studies have shown that CWD-infected animals, as expected, have lower survival rates, which reduce population growth rates and have implications for genetic diversity over time, as population connectivity can influence disease transmission and spread \cite{devivo2017endemic, rogers2011diversity}.

There is no prophylactic or therapeutic treatment to prevent or treat CWD \cite{joly2006spatial, belay2004chronic}. Some cervid genotypes appear to be more resistant to CWD transmission, but even these genotypes can still become infected. There are concerns that cervid herds in North America will become extinct due to CWD’s continued geographic spread and environmental contamination if not controlled through effective deer population management.

Wildlife services and researchers in Canada and the US recognize that eradication of CWD is likely unrealistic at this time, with efforts focused on controlling its geographic spread \cite{rivera2019chronic}. A review of several CWD management plans includes several common objectives such as surveillance, actions to reduce the rate of spread and prevalence of CWD within the managed areas (targeted culling and harvest management); providing information and management support to areas where new CWD foci are detected; communication to other stakeholders (e.g., the public, hunter associations); and supporting CWD scientific research \cite{conner2021relationship}. Farmed deer CWD management strategy includes depopulation of the infected herd and restocking after two years \cite{mysterud2019review}. Free-range deer CWD management strategy relies mainly on population reduction to reduce disease transmission. Selective culling of deer in CWD endemic areas is used to control CWD prevalence \cite{conner2021relationship}.

Short-term disease management used by many states and provinces includes intentional, local reduction of infected animal numbers using culling that targets hot-spot areas and reduces population density. The strategy is an effective way to reduce the local prevalence of CWD. However, researchers have shown that such localized culling strategies have yielded mixed results and had little impact on long-term changes in CWD transmission. Long-term management of endemic CWD, with increasing prevalence and environmental contamination, requires different strategies to be more effective \cite{wolfe2018evaluation, conner2021relationship}.

Mathematical modeling has been widely used to study various disease dynamics. Many CWD modeling studies have focused on direct transmission and how subject-to-subject contacts relate to host density \cite{gross2001chronic, schauber2003chronic, wasserberg2009host}. Some other studies suggest that CWD is transmitted both directly and indirectly \cite{miller2004environmental, mathiason2009infectious, miller1998epidemiology}, through contact with saliva \cite{haley2009detection, mathiason2006infectious}, urine \cite{haley2009detection}, feces \cite{tamguney2009asymptomatic}, infected carcasses, and prion-contaminated environments \cite{david2011soil, schramm2006potential}. The simulation model proposed in \cite{almberg2011modeling} suggests that disease prevalence and the severity of population decline are driven by the duration that prions remain infectious in the environment, and highlights that environmental prion persistence is a critical factor in the timing and extent of effective control efforts.

However, unlike captive (farmed) deer populations, free-range deer interact with a number of unpredictable environmental conditions that have complex processes affecting their population dynamics, ultimately modulating CWD transmission and spread. This “noise in the system” or stochasticity needs to be considered when using adaptive management strategies. Besides environmental stochasticity, there is the influence of demographic stochasticity, defined as the variation in population dynamics due to the probabilistic nature of individual processes such as birth, death, or disease transmission – all of which can affect small populations \cite{lambert2018demographic}. Finally, stakeholder (human behavior) and adaptive management actions (or inaction) can also introduce stochasticity that affects cervid population dynamics. To date, CWD decision-making, as a form of stochasticity, has not been well integrated into modeling approaches.

Stochastic modeling can help to better understand the underlying processes that affect disease and has been proven helpful in assessing disease management strategies for species conservation \cite{wasserberg2009host}. Deterministic modeling has previously been used to predict optimized cervid harvesting strategies \cite{potapov2016chronic, belsare2021getting}. Due to the effects of environmental variability on population dynamics, modeling approaches indicate that structured population demographics are better captured by stochastic rather than deterministic approaches \cite{boyce2006demography, belsare2021getting}. In addition, stochastic processes can significantly affect population dynamics and may result in a 'stochastic fadeout' that leads to population extinction \cite{jnawali2022stochasticity}. This potential for extinction must be taken into account when using adaptive management practices with CWD that continue to spread geographically and increase in prevalence over time. A mathematical model incorporating stochasticity allows us to compare management strategies and more practically predict the effects of CWD harvesting.

Reinforcement learning (RL) is the subfield of machine learning that focuses on optimal decision-making under uncertainty \cite{sutton1998reinforcement, powell2007approximate, wiering2012reinforcement}. In RL, an agent is trained by interacting with an environment to gain maximal rewards. RL has achieved remarkable success in managing sequential decision-making and control in complex dynamic systems, such as Go and others \cite{fackeldey2022approximative, silver2016mastering}. In recent years, the application of RL to control tasks has advanced greatly, particularly for systems where traditional mathematical optimization or control approaches may be intractable. 

These developments have enabled promising applications across diverse fields, including video games \cite{mnih2015human}, robotics \cite{le2024comprehensive}, autonomous driving \cite{kiran2021deep}, epidemiology and healthcare \cite{abdellatif2023reinforcement, gemignani2025reinforcement, weltz2022reinforcement}.

Since 2020, reinforcement learning (RL) methods have been increasingly adopted in public health applications \cite{weltz2022reinforcement}. During the COVID-19 pandemic, several studies employed RL frameworks to inform epizootic control and policy design. For instance, \cite{weltz2022reinforcement} utilized a traditional SEIRD model to simulate disease spread and used RL to determine optimal mobility restriction levels of 0\%, 25\%, and 75\% \cite{ohi2020exploring}. Similarly, \cite{canto2024computing} proposed an RL-based approach to compute epizootic containment policies and minimize both public health impacts and economic losses. A simulation-deep RL framework was introduced in \cite{bushaj2023simulation} to identify optimal interventions under varying vaccination strategies, while \cite{guo2022pacar} developed a deep RL decision-making framework to evaluate non-pharmaceutical interventions, vaccination, and variant dynamics. Additional studies have applied RL to epizootic control in related contexts \cite{chadi2022reinforcement, wan2021multi, kwak2021deep}. Beyond epizootic modeling, RL techniques have also been employed to optimize sequential treatment strategies for a wide range of chronic diseases \cite{kosorok2019precision, tsiatis2019dynamic}. The connection between stochastic optimal control and reinforcement learning is investigated in \cite{quer2024connecting}, where importance sampling is applied to capture rare events, the problem is reformulated as an optimal control problem, and a parameterized approach recasts it as a stochastic optimization problem sharing the same Markov Decision Process framework as reinforcement learning.

In this study, we investigate epizootic decision-making by examining the interventions necessary to control a CWD outbreak among cervids as a function of the decision-maker's purpose. Our contribution is twofold, spanning both the analysis of the model and the design of control. On the modeling side, we adopt a stochastic compartmental Susceptible-Infectious-Environmental (SIU) model that incorporates both environmental and demographic stochasticity, and we regulate the susceptible class at the origin through normal (Skorokhod) reflection so that recruitment noise tied to the total population cannot drive the susceptible count negative. For the deterministic version, we identify the disease-free equilibria, apply the next-generation matrix method to obtain the basic reproduction number $R_0$ as the sum of a direct-transmission contribution and an environmental contribution, and show that the disease-free and endemic states exchange stability through a forward transcritical bifurcation at $R_0=1$. For the reflected stochastic system, we establish local existence, uniqueness, and positivity of solutions, derive the disease-free invariant distribution of the resulting stochastic logistic dynamics, and obtain a diffusion-corrected next-generation proxy $\mathcal{R}_0^s$ together with a top-Lyapunov-exponent criterion for disease invasion; we further show that the reflection leaves the disease-free measure and the transverse invasion linearization unchanged. On the control side, we compare intervention strategies, including hunting rate adjustment policies, environmental decontamination, and a combination of both, and we contrast the resulting policies under deterministic and stochastic dynamics. We also investigate how varying the basic reproduction number $R_0$ influences the optimal control strategies derived by the RL agent in a stochastic environment, and how the decision-maker's priorities, encoded in the reward weights, reshape the learned strategies. This analysis provides insights into the robustness and adaptability of the learned policies under different epizootic intensities. By explicitly incorporating stochasticity and parameter variability, our framework demonstrates its potential ability to apply to a broader class of infectious disease models beyond the one considered here.

To determine these interventions, we propose a deep reinforcement learning (DRL) framework comprising a simulation model and a DRL agent that applies interventions in the simulated environment. A memory-based agent is employed, since memory-based architectures such as recurrent neural networks and long short-term memory units allow the agent to retain and process historical information \cite{bakker2001reinforcement, goodfellow2016deep, hochreiter1997long}, which is valuable in epizootic settings where the dynamics depend not only on current states but also on prior intervention histories and latent environmental factors. Because our interventions combine discrete decontamination decisions with continuous hunting-rate adjustments, we adopt a hybrid action space and train the agent using Proximal Policy Optimization (PPO), a policy-gradient method well suited to mixed action spaces \cite{schulman2017proximal}. The model is parameterized for Texas white-tailed deer, with the hunting-rate bounds informed by reported statewide harvest rates and the maximum rate capped at the demographic extinction threshold identified by the deterministic stability analysis, so that the agent is never permitted to select a policy predicted to extirpate the host population.

The remainder of this paper is organized as follows. Section \ref{Sec:Model} introduces the stochastic SIU model and develops the mathematical analysis, including the deterministic equilibria, the basic reproduction number, and the transcritical bifurcation. It is followed by the well-posedness, disease-free invariant measure, and Lyapunov-based invasion criterion for the reflected stochastic system. It concludes with the reinforcement learning environment and training algorithm. Section \ref{Sec:Res} presents the experimental results for the deterministic and stochastic settings, including the three intervention strategies, sensitivity to initial conditions and reward weights, and robustness to noise and to the transmission intensity. Section \ref{Sec:Disc} provides the discussion, limitations, and concluding remarks. Technical details on the reflected model, the numerical scheme, parameter values, and auxiliary lemmas are collected in the Appendix.

\section{Mathematical model and analysis}\label{Sec:Model}
\subsection{Model}
Animal populations experience fluctuating environmental conditions, demographic heterogeneity, and random disease-transmission events, so deterministic models often miss important sources of variation. A stochastic approach offers a more realistic framework that more explicitly captures environmental variability and inherent randomness that shapes population dynamics. For example, stochastic processes can capture disease fadeout, in which a disease unexpectedly goes extinct even when deterministic models predict persistence \cite{lambert2018demographic}. This is especially relevant for CWD, whose spatial spread and increasing prevalence require careful consideration of extinction probabilities when evaluating management interventions. However, a deterministic simplification of the stochastic model is computationally less expensive and mathematically easier to analyze. It also offers a baseline for comparison with stochastic simulations, helping to identify the effects of randomness and environmental variability on disease transmission and control strategies.

Our work adopts a stochastic compartmental Susceptible-Infectious-Environmental (SIU) model based on the previously established framework designed for studying CWD \cite{oraby2014modeling, oraby2016using, vasilyeva2015aggregation}. We use a susceptible-infected (SI) model of CWD transmission between cervids with a transmission rate $\beta$. Population dynamics are governed by the birth rate $\nu$ and the death rate $\mu$ from natural causes and recreational hunting. We assume from the outset that $\nu>\mu$, see equation \eqref{population}. Let the CWD-specific death rate be denoted by $\gamma$. Environmental transmission is incorporated into the model through a compartment $U$ with a transmission rate $\beta_e$. Infectious prions are shed into the environment at a rate of $\xi$ and naturally cleared at a rate of $\epsilon$. Stochasticity in the birth, death, and environmental degradation is represented by white noise terms $\dot W_{1},\dot W_{2},\dot W_{3}$ with magnitudes proportional to $\sigma_1$, $\sigma_2$, and $\sigma_3$, respectively. The processes $W_{1},W_{2},W_{3}$ are independent standard Brownian motions. In numerical studies of SDE models, noise intensities are usually assumed to be nonnegative and sufficiently small to preserve biologically meaningful dynamics \cite{gray2011stochastic}. 

In this CWD model, the demographic driver $\sigma_{1}N\,\dif W_{1}$ in the susceptible equation is a fluctuating recruitment term whose noise is tied to the total population $N$ rather than to $S$ itself. Thus, $\sigma_{1}>0$ does not vanish on the set $\{S=0\}$. Since a negative susceptible count is not biologically feasible, we regulate $S$ at the origin by normal (Skorokhod) reflection, injecting the minimal recruitment flux needed to keep $S\geq 0$. To do so, let $L$ be a local time of $S$ at $0$ which is a continuous, nondecreasing, adapted process with
\begin{equation}\label{eq:localtime}
L_{0}=0,\qquad \dif L_{t}\ge 0,\qquad
\int_{0}^{t}\indic_{\{S_{s}>0\}}\,\dif L_{s}=0 .
\end{equation}
The reflection direction is the inward normal $+\mathbf{e}_{S}$, so $L$ acts only when $S=0$ and injects the least susceptible flux consistent with $S\ge 0$.  This preserves nondegenerate demographic noise; see also Corollary \ref{cor:dfe-invariance}.

Equation \eqref{M1} illustrates the resulting model:
\begin{align}\label{M1}
\frac{dS}{dt} &= \nu N\left(1-\frac{N}{K}\right) - \frac{\beta S I}{N} - \beta_e S U - \mu S\left(1-\frac{N}{K}\right) - \alpha_1 S + \sigma_1 N \dot{W}_1 + \sigma_2 S \dot{W}_2 + \dot{L} \nonumber\\
\frac{dI}{dt} &= \frac{\beta S I}{N} + \beta_e S U - \mu I\left(1-\frac{N}{K}\right) - \gamma I - \alpha_1 I + \sigma_2 I \dot{W}_2 \\
\frac{dU}{dt} &= \xi I - \epsilon U - \alpha_2 U + \sigma_3 U \dot{W}_3 \nonumber
\end{align}
where $S+I=N$ and the population size is regulated by a carrying capacity $K$. Here, $S$ denotes the number of susceptible individuals, $I$ is the number of infected individuals, and $U$ is the environmental load of the CWD prion. Because births, deaths, and infections occur dynamically, the population size $N$ is not fixed. 

\begin{assumption}\label{ass:params}
We assume that all rate constants $\nu,\mu,\beta,\beta_{e},\gamma,\alpha_{1},\alpha_{2},
\xi,\epsilon,K$ and noise intensities $\sigma_{1},\sigma_{2},\sigma_{3}$ are
nonnegative, with $K>0$. We also assume that the initial state $(S_{0},I_{0},U_{0})\in D$ satisfies $N_{0}=S_{0}+I_{0}>0$.
\end{assumption}

\subsection{The CWD Deterministic Model.} We begin by studying the stability of the deterministic version of the model, given by the following system
\begin{align}\label{M2deterministic}
\frac{dS}{dt} &= \nu N\left(1-\frac{N}{K}\right) - \frac{\beta S I}{N} - \beta_e S U - \mu S\left(1-\frac{N}{K}\right) - \alpha_1 S  \nonumber\\
\frac{dI}{dt} &= \frac{\beta S I}{N} + \beta_e S U - \mu I\left(1-\frac{N}{K}\right) - \gamma I - \alpha_1 I \\
\frac{dU}{dt} &= \xi I - \epsilon U - \alpha_2 U   \nonumber
\end{align}
where $S+I=N$.

\paragraph*{\textbf{Disease-Free Equilibrium States.}}

In the disease-free case, $I=0$ and $U=0$, so $N=S$ and the model reduces to
\begin{align}\label{population}
\frac{dS}{dt}&=S\left[(\nu-\mu)\left(1-\frac{S}{K}\right)-\alpha_1\right].
\end{align}
Recall that we assume that $\nu>\mu$. This yields two disease-free equilibria
\begin{enumerate}
    \item $\mathcal{E}_{1}\equiv(0,0,0)$.
    \item $\mathcal{E}_2\equiv\left(K\left(1-\frac{\alpha_1}{\nu-\mu}\right),0,0\right)$, which exists only if $0<\alpha_1<\nu-\mu$.
\end{enumerate}

To study local asymptotic stability of \eqref{M2deterministic}, we compute the Jacobian of the system of ODEs with $N=S+I$,

$$
\mathcal{J}(S,I,U)=
{\Small
\begin{pmatrix}
\nu\left(1-\frac{2N}{K}\right)-\mu\left(1-\frac{N}{K}\right)+\frac{\mu S}{K}-\alpha_1-\frac{\beta I^2}{N^2}-\beta_e U
&
\nu\left(1-\frac{2N}{K}\right)+\frac{\mu S}{K}-\frac{\beta S^2}{N^2}
&
-\beta_e S
\\[6pt]
\frac{\beta I^2}{N^2}+\beta_e U+\frac{\mu I}{K}
&
\frac{\beta S^2}{N^2}-\gamma-\alpha_1-\mu\left(1-\frac{N}{K}\right)+\frac{\mu I}{K}
&
\beta_e S
\\[6pt]
0 & \xi & -(\epsilon+\alpha_2)
\end{pmatrix}
}.
$$
At any disease-free equilibrium $(S,0,0)$, and so $N=S$, the first column below the diagonal vanishes, and the matrix reduces to
$$
\mathcal{J}(S,0,0)=
\begin{pmatrix}
(\nu-\mu)\left(1-\frac{2S}{K}\right)-\alpha_1
&
\nu\left(1-\frac{2S}{K}\right)+\frac{\mu S}{K}-\beta
&
-\beta_e S
\\[4pt]
0
&
\beta-\gamma-\alpha_1-\mu\left(1-\frac{S}{K}\right)
&
\beta_e S
\\[4pt]
0 & \xi & -(\epsilon+\alpha_2)
\end{pmatrix}.
$$
Because this matrix is block upper-triangular, one eigenvalue is the $(1,1)$ entry. The remaining two are those of the lower-right block governing the infected-environmental subsystem $(I,U)$, and a way in which the $(1,2)$ entry does not affect the spectrum.

\begin{enumerate}
\item For $\mathcal{E}_{1}\equiv(0,0,0)$, taking the limit along the disease-free axis ($S/N\to1$),
$$
\mathcal{J}(0,0,0)=
\begin{pmatrix}
\nu-\mu-\alpha_1 & \nu-\beta & 0\\[4pt]
0 & \beta-\gamma-\alpha_1-\mu & 0\\[4pt]
0 & \xi & -(\epsilon+\alpha_2)
\end{pmatrix},
$$
with eigenvalues $\nu-\mu-\alpha_1$, $\beta-\gamma-\alpha_1-\mu$, and $-(\epsilon+\alpha_2)<0$. Hence $\mathcal{E}_1$ is locally asymptotically stable if $\max\left(\nu, \beta-\gamma\right)<\mu+\alpha_1$.

\item For $\mathcal{E}_2\equiv\left(S_2^{*}=K\left(1-\frac{\alpha_1}{\nu-\mu}\right),0,0\right)$, thus
$1-\frac{S_2^{*}}{K}=\frac{\alpha_1}{\nu-\mu}$, and
$$
\mathcal{J}(\mathcal{E}_2)=
\begin{pmatrix}
\alpha_1-(\nu-\mu)
&
\nu\left(1-\frac{2S_2^{*}}{K}\right)+\frac{\mu S_2^{*}}{K}-\beta
&
-\beta_e S_2^{*}
\\[4pt]
0
&
\beta-\gamma-\frac{\nu\alpha_1}{\nu-\mu}
&
\beta_e S_2^{*}
\\[4pt]
0 & \xi & -(\epsilon+\alpha_2)
\end{pmatrix}.
$$
The first eigenvalue is $\alpha_1-(\nu-\mu)<0$, which holds automatically whenever $\mathcal{E}_2$ exists. The remaining two are the eigenvalues of
$$
B_2=
\begin{pmatrix}
\beta-\gamma-\frac{\nu\alpha_1}{\nu-\mu} & \beta_e S_2^{*}\\[4pt]
\xi & -(\epsilon+\alpha_2)
\end{pmatrix},
$$
so $\mathcal{E}_2$ is locally asymptotically stable if and only if $\operatorname{tr}B_2<0$ and
$\det B_2>0$, i.e.
$$
\beta-\gamma-\frac{\nu\alpha_1}{\nu-\mu}<\epsilon+\alpha_2
\qquad\text{and}\qquad
(\epsilon+\alpha_2)\left(\gamma+\frac{\nu\alpha_1}{\nu-\mu}-\beta\right)>\beta_e\,\xi\,S_2^{*}.
$$
\end{enumerate}

\paragraph{\textbf{The Basic Reproduction Number.}}
To obtain the basic reproduction number, we apply the next-generation matrix method \cite{van2002reproduction} at $\mathcal{E}_2$. Order the infected compartments as $(I,U)$ and write, for brevity, let
$$
a\equiv\gamma+\alpha_1+\mu\left(1-\frac{S_2^{*}}{K}\right)
=\gamma+\frac{\nu\alpha_1}{\nu-\mu},
\qquad
d\equiv\epsilon+\alpha_2,
\qquad
S_2^{*}=K\left(1-\frac{\alpha_1}{\nu-\mu}\right).
$$
At $\mathcal{E}_2$, we have $S/N=1$, so we decompose the infected subsystem into new infections $\mathcal{F}$ and transitions $\mathcal{V}$. The environmental shedding $\xi I$ is a transition, not a primary infection. That gives
$$
F=\begin{pmatrix}\beta & \beta_e S_2^{*}\\[2pt]0 & 0\end{pmatrix},
\qquad
V=\begin{pmatrix}a & 0\\[2pt]-\xi & d\end{pmatrix},
\qquad
FV^{-1}=\begin{pmatrix}\dfrac{\beta}{a}+\dfrac{\beta_e\xi S_2^{*}}{ad} & \dfrac{\beta_e S_2^{*}}{d}\\[8pt]0 & 0\end{pmatrix}.
$$
The reproduction number is the spectral radius $R_0=\rho(FV^{-1})$, which is
\begin{equation}\label{R0}
R_0=R_{\mathrm{dir}}+R_{\mathrm{env}}:=\frac{\beta}{a}+\frac{\beta_e\,\xi\,S_2^{*}}{a\,d}
=\frac{\beta}{\gamma+\frac{\nu\alpha_1}{\nu-\mu}}
+\frac{\beta_e\,\xi\,K\left(1-\frac{\alpha_1}{\nu-\mu}\right)}
{\left(\gamma+\frac{\nu\alpha_1}{\nu-\mu}\right)(\epsilon+\alpha_2)},
\end{equation}
The reproduction number is the sum of a direct-transmission contribution $R_{\mathrm{dir}}$ and an environmental (indirect) contribution $R_{\mathrm{env}}$. In this case, an infectious host transmits directly at rate $\beta$ over the mean infectious period $1/a$ and additionally sheds $\xi/a$ units of environmental load. Each of them persists for time $1/d$ and generates new infections at rate $\beta_e S_2^{*}$.

The reproduction number \eqref{R0} controls the stability of $\mathcal{E}_2$ established above. Indeed, with $\det B_2=ad-\beta d-\beta_e\xi S_2^{*}$ one has the exact identity $\det B_2 = a\,d\,(1-R_0)$. Since $ad>0$, the determinant condition $\det B_2>0$ is equivalent to $R_0<1$. Furthermore $\det B_2>0$ implies $a>\beta$ and hence $\operatorname{tr}B_2<0$ automatically, so the trace condition is redundant. Therefore, the disease-free equilibrium $\mathcal{E}_2$ is locally asymptotically stable if and only if $R_0<1$, and unstable when $R_0>1$.

\paragraph{\textbf{Disease-Endemic Equilibrium States.}}

When the disease is endemic ($I^{*}>0$, $U^{*}>0$), setting all derivatives in \eqref{M2deterministic} to zero gives. First, setting $\dot U=0$ and the total-population balance $\dot N=0$,
$$
U^{*}=\frac{\xi I^{*}}{\epsilon+\alpha_2},
\qquad
N^{*}=S^{*}+I^{*}=K\left(1-\frac{\alpha_1+\gamma P^{*}}{\nu-\mu}\right),
\qquad P^{*}\equiv\frac{I^{*}}{S^{*}+I^{*}} \text{ (prevalence)}.
$$
Second, setting $\dot I=0$ gives
equation
\begin{equation}\label{Pequation}
\begin{aligned}
&\frac{\beta_e \xi K}{(\epsilon+\alpha_2)(\nu-\mu)}
\Bigl[\gamma (P^{*})^{2}-\bigl(\gamma+\nu-\mu-\alpha_1\bigr)P^{*}+\bigl(\nu-\mu-\alpha_1\bigr)\Bigr]\\
&\qquad-\left(\frac{\nu\gamma}{\nu-\mu}+\beta-\gamma\right)P^{*}+\beta-\gamma-\frac{\nu\alpha_1}{\nu-\mu}=0.
\end{aligned}
\end{equation}
Solving \eqref{Pequation} for $P^{*}$ then determines the endemic state in the original variables,
\begin{equation}\label{NUequation}
\begin{aligned}
I^{*}=P^{*}N^{*},\qquad S^{*}=(1-P^{*})\,N^{*},\qquad U^{*}=\frac{\xi I^{*}}{\epsilon+\alpha_2}.
\end{aligned}
\end{equation}
Let $\kappa\equiv\dfrac{\beta_e\xi K}{(\epsilon+\alpha_2)(\nu-\mu)}$ and $m\equiv\nu-\mu$, hence equation \eqref{Pequation} is the quadratic equation $f(P^{*}):=A(P^{*})^{2}+BP^{*}+C=0$ with
$$
A=\kappa\gamma,\quad
B=-\kappa(\gamma+m-\alpha_1)-\frac{\nu\gamma}{m}-(\beta-\gamma),\quad
C=\kappa(m-\alpha_1)+\beta-\gamma-\frac{\nu\alpha_1}{m},
$$
and discriminant
$$
\Delta=B^{2}-4\kappa\gamma\,C
=\Big[\kappa(\gamma+m-\alpha_1)+\frac{\nu\gamma}{m}+(\beta-\gamma)\Big]^{2}
-4\kappa\gamma\Big[\kappa(m-\alpha_1)+\beta-\gamma-\frac{\nu\alpha_1}{m}\Big].
$$
Using $\kappa(m-\alpha_1)=\beta_e\xi S_2^{*}/(\epsilon+\alpha_2)$ and $a=\gamma+\frac{\nu\alpha_1}{m}$, the constant term factors through the reproduction number \eqref{R0},
$$
C=a\,(R_0-1),
$$
so $\operatorname{sign}C=\operatorname{sign}(R_0-1)$. Since $f(1)=-\nu(\gamma+\alpha_1)/m<0$, then $P=1$ always separates the two roots, and therefore the larger one is inadmissible. With $A>0$, a unique biologically admissible endemic prevalence therefore exists if and only if $R_0>1$, given by the lower root
$$
P^{*}=\frac{-B-\sqrt{\Delta}}{2A}\in(0,1),\qquad R_0>1 .
$$

Figure \ref{fig:R0} (a) illustrates how the reproduction number $R_0$ varies over the intervention plane. $R_0$ attains its highest values when both the hunting rate $\alpha_1$ and the decontamination rate $\alpha_2$ are small, and it declines monotonically as either of these rates increases. Mechanistically, hunting shortens the effective infectious period via the removal term $a=\gamma+\nu\alpha_1/(\nu-\mu)$ while at the same time reducing the disease-free host density $S_2^{*}$. In parallel, decontamination accelerates the removal of pathogens from the environment, as described by the term $\epsilon+\alpha_2$. Across most of the parameter space, the environmental transmission route dominates, $R_{\mathrm{env}}\gg R_{\mathrm{dir}}$ (see Figure \ref{fig:R0} (b)), because the slow intrinsic disease dynamics (large $1/\gamma$) render direct transmission relatively inefficient. As a result, the $R_0=1$ contour is much more responsive to changes in $\alpha_2$ than it would be for a pathogen transmitted solely by direct contact. Moreover, for intermediate levels of hunting, decontamination acts as a more powerful control lever for driving the system below the epizootic threshold. The contour intersects the demographic threshold $\alpha_1=\nu-\mu$ tangentially as $S_2^{*}$ diminishes, since once hunting alone extirpates the host population, $R_{\mathrm{env}}$ collapses and the epizootic threshold becomes irrelevant.

\begin{figure}[H]
    \centering
    \subfigure[]{\includegraphics[width=7cm]{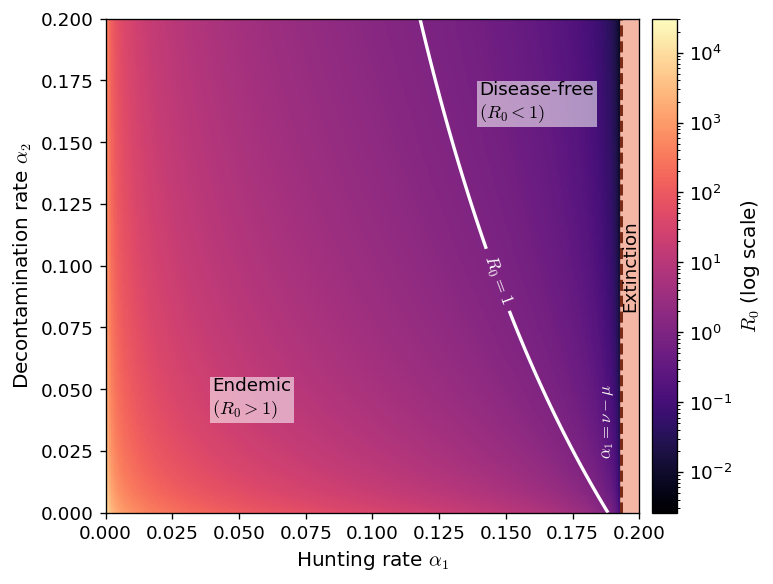}}
    \subfigure[]{\includegraphics[width=7cm]{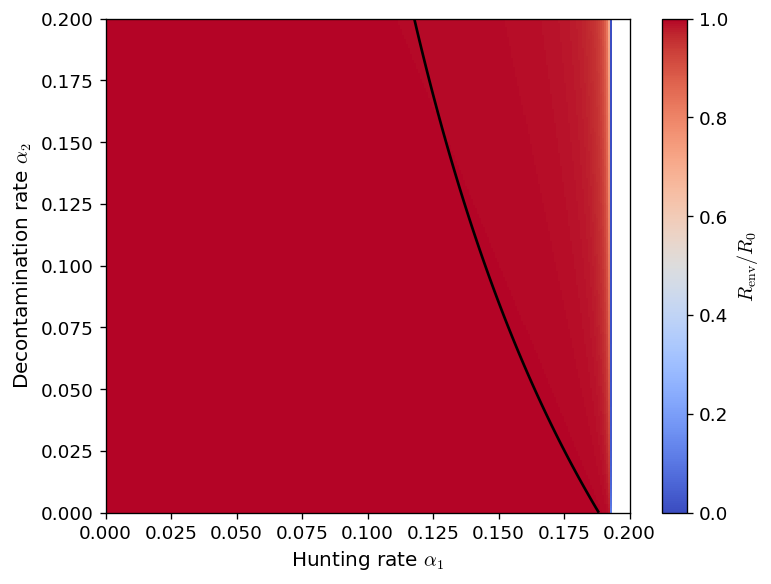}}
    \caption{Basic reproduction number $R_0$ shown on a logarithmic scale over the $(\alpha_1,\alpha_2)$ parameter plane in (a), and the relative contribution of the environmental component $R_\mathrm{env}/R_0$ in (b), including the contour where $R_0=1$. $R_0$ attains its highest values when both hunting and decontamination rates are low, and declines as either of these rates increases. Across most of the parameter space, the environmental transmission component $R_{\mathrm{env}}$ exceeds the direct transmission component $R_{\mathrm{dir}}$. The hatched region indicates host extinction, occurring once the extinction threshold $\alpha_1\geq\nu-\mu\approx0.1931$ is reached. In all of those calculations $K=1000$.}
    \label{fig:R0}
\end{figure}

Figure \ref{fig:heatmapNSIU} decomposes the endemic equilibrium into its constituent components. The total population $N^{*}$ is highest when hunting pressure is low and then declines steadily as $\alpha_1$ increases. Besides natural death, deer are removed both directly via hunting and indirectly through disease-induced mortality $\gamma I^{*}$. To the right of the $R_0=1$ contour, the infection is eliminated and $N^{*}=S^{*}=S_2^{*}$. The susceptible density $S^{*}$ is non-monotonic, achieving its maximum \emph{along} the $R_0=1$ ridge because as decontamination or hunting approaches the threshold, transmission is curtailed and individuals who would otherwise have been infected instead remain in the susceptible class. Once past this threshold, $S^{*}$ simply mirrors the decline in $S_2^{*}$. The infected density $I^{*}$ is greatest when both control measures are weak and drops to zero upon crossing the $R_0=1$ boundary, consistent with the analytically derived forward bifurcation (below). The environmental contamination level $U^{*}=\xi I^{*}/(\epsilon+ \alpha_2)$ is maximized in the absence of decontamination and is much more sensitive to changes in $\alpha_2$ than in $\alpha_1$. This shows that even relatively low decontamination rates markedly reduce environmental contamination, whereas hunting only mildly and indirectly reduces $U^{*}$ by lowering the density of infectious shedders. Overall, these panels suggest that combining a high decontamination rate with a moderate hunting rate most effectively suppresses infection and environmental contamination while maintaining a sustainable cervid population.
\begin{figure}[H]
   \centering
    \subfigure[]{\includegraphics[width=7cm]{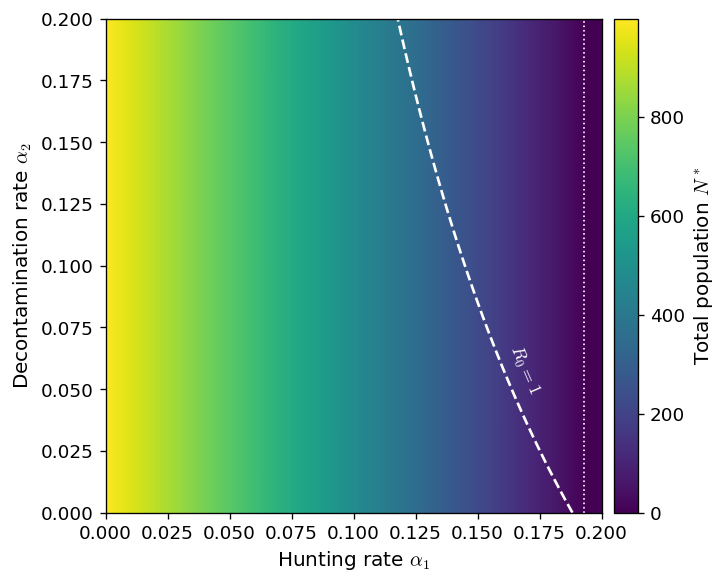}}
    \subfigure[]{\includegraphics[width=7cm]{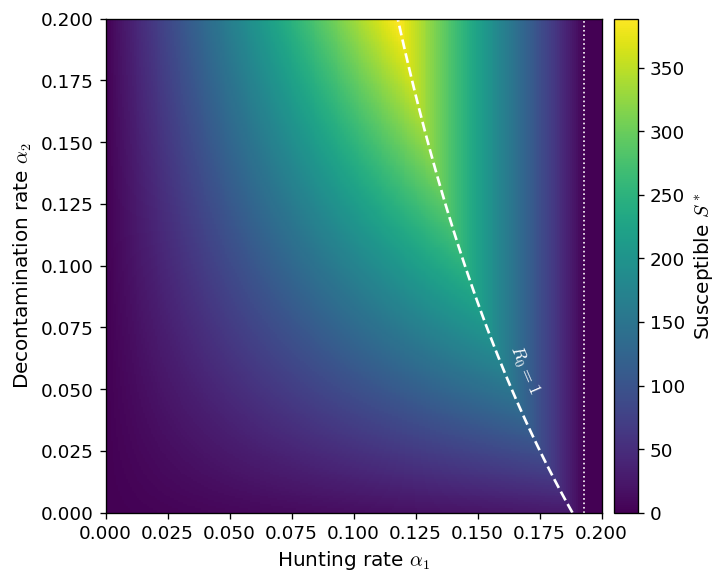}}
    \subfigure[]{\includegraphics[width=7cm]{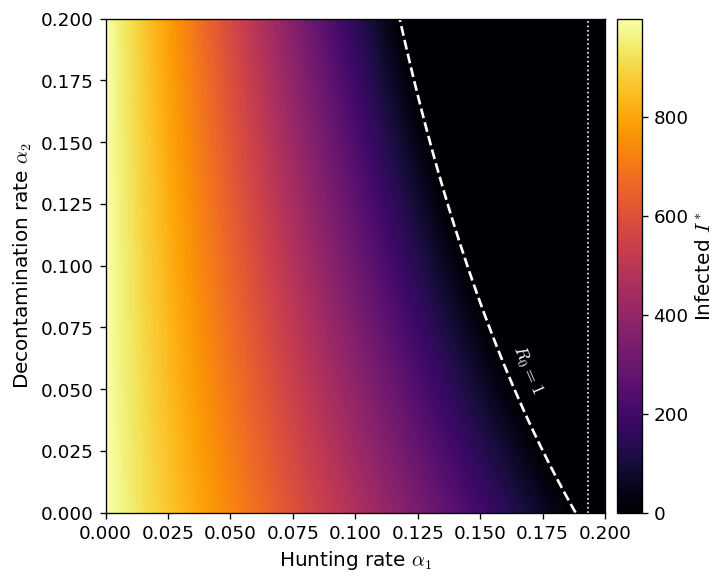}}
    \subfigure[]{\includegraphics[width=7cm]{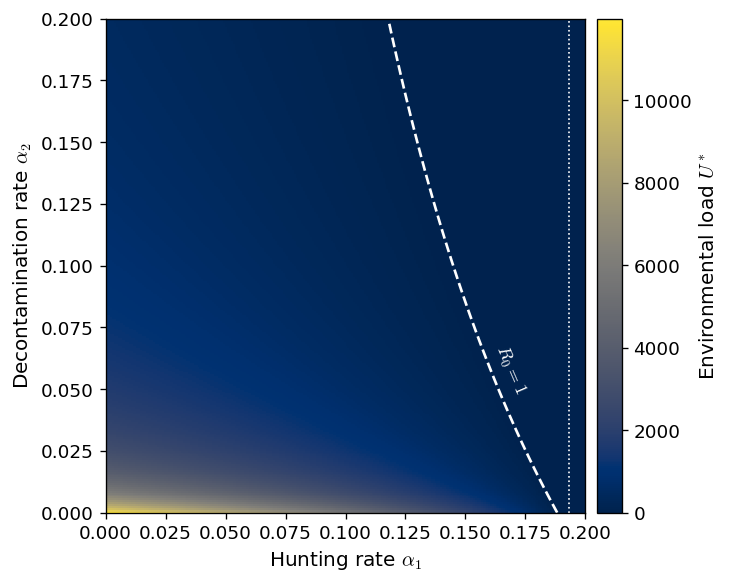}}
    \caption{Endemic equilibrium states under intervention strategies $(\alpha_1,\alpha_2)$ for (a) total population size $N^{*}$; (b) susceptible individuals $S^{*}$; (c) infected individuals $I^{*}$; (d) environmental pathogen concentration $U^{*}$, as determined by the equilibrium expressions \eqref{NUequation}. The white dashed curve corresponds to $R_0=1$. To the right of that curve, the infection dies out, so that $I^{*}=0$ and $N^{*}=S^{*}=S_2^{*}$. The dotted vertical line marks the host-extinction boundary $\alpha_1=\nu-\mu$. In all of those calculations $K=1000$.}
    \label{fig:heatmapNSIU}
\end{figure}

\paragraph{\textbf{Transcritical bifurcation.}}
At $R_0=1$ we have $C=0$, so equation \eqref{Pequation} simplifies to $P^{*}(AP^{*}+B)=0$, which yields the solutions $P^{*}=0$ and $P^{*}=-B/A$. The solution $P^{*}=0$ corresponds to the disease-free equilibrium $\mathcal{E}_2$, meaning that the endemic equilibrium branch emerges from $\mathcal{E}_2$ precisely at $R_0=1$, characteristic of a transcritical bifurcation. A Taylor expansion of the biologically relevant root near this threshold gives
$$
P^{*}=\frac{a\,(R_0-1)}{-B}+O\big((R_0-1)^{2}\big).
$$
Since $B<0$ whenever $\mathcal{E}_2$ exists, this branch is forward (supercritical). Together with $f(1)<0$, this rules out the occurrence of a backward bifurcation. Therefore, $\mathcal{E}_2$ and the endemic equilibrium exchange stability at $R_0=1$, with the disease-free equilibrium $\mathcal{E}_2$ being stable for $R_0<1$ and the endemic equilibrium being stable for $R_0>1$.

Figure \ref{fig:bifurcation} shows the endemic prevalence $P^{*}=I^{*}/N^{*}$ along a horizontal slice at fixed $\alpha_2=0.05$. As the hunting rate increases, the system sequentially traverses three distinct regimes. First, there is an endemic regime in which $P^{*}>0$ and the disease-free equilibrium $\mathcal{E}_2$ is unstable. Next comes a disease-free regime, which appears once the system crosses $R_0=1$; in this regime, $P^{*}=0$ and $\mathcal{E}_2$ is stable. Finally, at $\alpha_1=\nu-\mu\approx0.1931$, the system enters a population-extinction regime. The endemic branch bifurcates continuously from $\mathcal{E}_2$ at $R_0=1$, and its prevalence increases as $\alpha_1$ decreases. This pattern corresponds to a forward (supercritical) transcritical bifurcation, featuring a clean exchange of stability and no backward branch. In practical terms, hunting can drive the infection to extinction before it threatens population persistence, but only over a limited parameter range. Beyond that demographic threshold, the same control mechanism that suppresses the disease will, if intensified further, eliminate the host population itself.
\begin{figure}[H]
    \centering
    \subfigure[]{\includegraphics[width=0.49\textwidth]{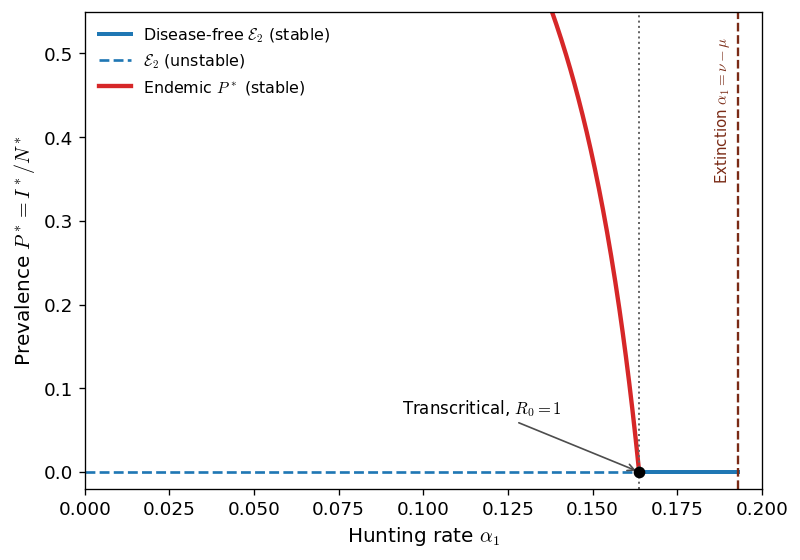}}
    \caption{Forward transcritical bifurcation in the hunting rate $\alpha_1$ with $\alpha_2$ fixed at $0.05$. The endemic prevalence $P^{*}$ (solid line, stable) branches from the disease-free equilibrium $\mathcal{E}_2$ at $R_0=1$; $\mathcal{E}_2$ is unstable (dashed line) for $R_0>1$ and stable (solid line) for $R_0<1$, up to the point of host extinction at $\alpha_1=\nu-\mu$. In all of those calculations $K=1000$.}
    \label{fig:bifurcation}
\end{figure}

\subsection{The CWD Stochastic Model.} We follow by studying the well-posedness and stability of the stochastic version of the CWD model in \eqref{M1}, rewritten as 
\begin{subequations}\label{eq:reflected}
\begin{align}
\dif S &= f_{S}\,\dif t
        + \sigma_{1} N\,\dif W_{1}
        + \sigma_{2} S\,\dif W_{2}
        + \dif L, \label{eq:reflected-S}\\
\dif I &= f_{I}\,\dif t
        + \sigma_{2} I\,\dif W_{2}, \label{eq:reflected-I}\\
\dif U &= f_{U}\,\dif t
        + \sigma_{3} U\,\dif W_{3}, \label{eq:reflected-U}
\end{align}
\end{subequations}
with drifts
\begin{align*}
f_{S} &= \nu N\left(1-\frac{N}{K}\right)
        - \frac{\beta S I}{N} - \beta_{e} S U
        - \mu S\left(1-\frac{N}{K}\right) - \alpha_{1} S,\\
f_{I} &= \frac{\beta S I}{N} + \beta_{e} S U
        - \mu I\left(1-\frac{N}{K}\right) - \gamma I - \alpha_{1} I,\\
f_{U} &= \xi I - (\epsilon+\alpha_{2}) U.
\end{align*}

\paragraph{\textbf{Local existence, uniqueness, and positivity.}}
For each $m \geq 1$, define the stopping times
$$ \tau_m = \inf\left\{ t \ge 0 : N_t \le m^{-1} \ \text{or}\ \|(S_t, I_t, U_t)\| \ge m \right\}, $$
with the convention that $\inf \varnothing = \infty$. The explosion time is then defined by
$$ \tau_e := \lim_{m \to \infty} \tau_m.$$
We say that a solution is global and nonexplosive if $\tau_e = \infty$ almost surely. In what follows, we assume that the maximal reflected solution is nonexplosive, that is, $\tau_e = \infty$ almost surely. Equivalently, we take as a standing hypothesis that the full $(S,I,U)$ system does not explode, and every subsequent statement is to be interpreted on $[0,\tau_e)$ or, after localization, on $[0,\tau_m]$. 

\begin{theorem}[Local reflected well-posedness]\label{thm:wellposed}
Under Assumption \ref{ass:params}, for every admissible initial state, there exists a pathwise-unique maximal strong solution $(S,I,U,L)$ of \eqref{eq:reflected}, defined on a stochastic interval $[0,\tau_e)$. Almost surely, for every $t<\tau_e$,
$$S_t\geq0,\qquad I_t>0,\qquad U_t>0,\qquad N_t>0.$$
\end{theorem}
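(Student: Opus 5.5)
Working on the state space $D_m=\{N\ge m^{-1},\ \|(S,I,U)\|\le m\}\cap\{S\ge0\}$, I would localize and apply the standard theory of reflected SDEs on a half-space. On $D_m$ the drifts $f_S,f_I,f_U$ are rational in $(S,I,U)$ with denominators bounded below by $m^{-1}$. So they, and the diffusion coefficients $\sigma_1N$, $\sigma_2S$, $\sigma_2I$, $\sigma_3U$, are globally Lipschitz once extended outside $D_m$ by a smooth cutoff. The constraint set is $\{S\ge0\}$ with constant normal reflection direction $+\mathbf{e}_S$. For Lipschitz coefficients this Skorokhod problem is solved explicitly by the one-dimensional Skorokhod map,
$$
L_t=\sup_{s\le t}\bigl(-Y_s\bigr)^+,
$$
where $Y$ is the unreflected $S$-input. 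This map is Lipschitz in the sup norm, so a Picard iteration (or the Lions--Sznitman / Tanaka results for convex domains) yields a pathwise-unique strong solution $(S^{(m)},I^{(m)},U^{(m)},L^{(m)})$ for each $m$. Uniqueness gives consistency: solutions for $m$ and $m+1$ agree up to $\tau_m$. Patching them defines a maximal solution on $[0,\tau_e)$, and pathwise uniqueness there follows from uniqueness on every $[0,\tau_m]$. The property $S_t\ge0$ holds by construction of the reflection.

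For positivity of $I$ and $U$, I would use a comparison/variation-of-constants argument instead of a Lyapunov function. The $I$-equation is linear in $I$ apart from the nonnegative source $\beta_eSU$:
$$
\dif I=\bigl(g_t I+\beta_eS_tU_t\bigr)\dif t+\sigma_2I\,\dif W_2,\qquad g_t=\frac{\beta S_t}{N_t}-\mu\Bigl(1-\frac{N_t}{K}\bigr)-\gamma-\alpha_1,
$$
where $g$ is bounded on $[0,\tau_m]$. Setting $\Phi_t=\exp\bigl(\int_0^t(g_s-\tfrac12\sigma_2^2)\dif s+\sigma_2W_{2,t}\bigr)>0$, Itô's formula gives
$$
I_t=\Phi_t\Bigl(I_0+\int_0^t\Phi_s^{-1}\beta_eS_sU_s\,\dif s\Bigr).
$$
Similarly,
$$
U_t=\Psi_t\Bigl(U_0+\int_0^t\Psi_s^{-1}\xi I_s\,\dif s\Bigr),\qquad \Psi_t=\exp\bigl(-(\epsilon+\alpha_2+\tfrac12\sigma_3^2)t+\sigma_3W_{3,t}\bigr).
$$
These are coupled, so I would argue jointly. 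Let $\rho=\inf\{t:I_t\le0\text{ or }U_t\le0\}$. On $[0,\rho)$ both integrands are nonnegative, because $S\ge0$. Hence $I_t\ge\Phi_tI_0$ and $U_t\ge\Psi_tU_0$, and both lower bounds stay strictly positive up to $\rho\wedge\tau_m$. By continuity this forces $\rho\ge\tau_m$ for all $m$, so $I,U>0$ on $[0,\tau_e)$.

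The representations require $I_0>0$ and $U_0>0$. If the admissible set $D$ permits $I_0=0$ or $U_0=0$ (with $N_0>0$), the conclusion holds only for $t>0$, provided the coupling seeds positivity through $\xi I$ or $\beta_eSU$. I would note this as a restriction on $D$, or interpret the statement for $t>0$.

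Finally, $N_t>0$ for $t<\tau_e$ holds because $N\ge I>0$; equivalently, $N>m^{-1}$ before $\tau_m$.

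The main obstacle I expect is the reflected existence step. The noise coefficient $\sigma_1N$ does not vanish at $S=0$, so a plain SDE would leave the orthant, and the reflection must be handled jointly with a diffusion matrix that depends on $S$. The key point is that the domain is a half-space with a fixed oblique-free normal direction. This reduces the problem to the Lipschitz one-dimensional Skorokhod map, so I would build the contraction estimate in $L^2(\Omega;C[0,T])$ using Burkholder--Davis--Gundy together with the Lipschitz bound of the Skorokhod map.
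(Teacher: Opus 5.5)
Your proposal is correct and follows the paper's proof essentially step for step. The shared steps are localization with a smooth cutoff away from $\{N=0\}$, strong existence and pathwise uniqueness for the truncated half-space reflected SDE, patching along $\tau_m$, and positivity of $I,U$ via the same variation-of-constants representations with $\Phi,\Psi$ and a joint first-hitting time; the paper cites Saisho/Tanaka/Lions--Sznitman for the reflected step, where you make the one-dimensional Skorokhod map and the Picard/BDG contraction explicit. Your caveat that the bounds $I_t\ge\Phi_tI_0$ and $U_t\ge\Psi_tU_0$ require $I_0,U_0>0$ is well taken: the paper uses this implicitly (its $\mathcal D$ has $i,u>0$), Assumption \ref{ass:params} only places the initial state in the closed orthant $D$, and the invariance of $\mathcal F_+$ shows the strict inequalities genuinely fail when $I_0=U_0=0$.
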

\begin{proof} First, we prove the pathwise-unique maximal strong solution of $(S,I,U,L)$. Let $\mathcal D = \{(s,i,u)\in\mathbb R^3 : s\geq 0, i>0,u>0\}$. The drift vector and the diffusion matrix are locally Lipschitz on $\mathcal D$. This is true since all of the coefficients are polynomials, except for the incidence term $\beta SI/N=\beta I - \frac{\beta I^2}{N}$. But it is still a smooth function on the open set $\{N>0\}$. Hence, the coefficients and their first derivatives are bounded on any compact subset of $\mathcal D$. We then select a smooth cutoff function supported away from $\{N=0\}$ so that the truncated coefficients coincide with the original ones on
$$
\left\{N\geq m^{-1},\ \|(S,I,U)\|\leq m\right\}
$$
for every $m \ge 1$, while being globally Lipschitz on the reflecting half-space
$$
H = \{(s,i,u)\in\mathbb R^3 : s\ge0\}.
$$
Because $H$ is a convex set whose Skorokhod map with normal reflection is Lipschitz, the classical Tanaka–Lions–Sznitman theory guarantees both pathwise uniqueness and the existence of a strong solution for each truncated reflected SDE; see \cite[Theorem 5.1]{Saisho1987}. Pathwise uniqueness implies that any two truncated solutions agree up to their respective exit times from the compact sets. Hence, we concatenate these solutions to obtain a pathwise unique maximal strong solution $(S,I,U,L)$ on the interval $[0,\tau_e)$; see \cite{Tanaka1979,LionsSznitman1984}. A similar reasoning applies to the regulator process $L_t$, due to its assumptions in \eqref{eq:localtime}.

Next, we establish the strict positivity of $I$ and $U$, since the bound $S_t\ge 0$ is enforced by reflection and then $N_t=S_t+I_t>0$ follows once $I_t>0$ is proven. Both $I$ and $U$ satisfy linear equations with a \emph{nonnegative} source, so we use the method of variation of constants. On $[0,\tau_m]$ let
$$
b^{I}_t=\frac{\beta S_t}{N_t}-\mu\left(1-\frac{N_t}{K}\right)-\gamma-\alpha_1,
\qquad
b^{U}_t=-(\epsilon+\alpha_2).
$$
The process $b^{I}$ is bounded since $0\le S_t/N_t\le 1$, $\|(S,I,U)\|\leq m$ and $N_t\ge m^{-1}$. Introduce the strictly positive stochastic exponentials
$$
\Phi_t=\exp\left(\int_0^{t}\Big(b^{I}_s-\frac12\sigma_2^2\Big)\,ds
      +\sigma_2 W_{2,t}\right),
\qquad
\Psi_t=\exp\left(-\Big(\epsilon+\alpha_2+\frac12\sigma_3^2\Big)t
      +\sigma_3 W_{3,t}\right),
$$
which solve $d\Phi=\Phi\,b^{I}\,dt+\sigma_2\Phi\,dW_2$ and $d\Psi=\Psi\,b^{U}\,dt+\sigma_3\Psi\,dW_3$, respectively. Applying It\^o's product rule to $\Phi^{-1}I$ and $\Psi^{-1}U$ removes the linear and martingale parts and gives
$$
I_t=\Phi_t\left(I_0+\beta_e \int_0^{t}\Phi_s^{-1}\, S_s U_s\,ds\right),
\qquad
U_t=\Psi_t\left(U_0+\xi \int_0^{t}\Psi_s^{-1}\, I_s\,ds\right).
$$
for $t<\tau_e$. Let $\tau_0=\inf\{t\ge 0:\min(I_t,U_t)\le 0\}$. On $[0,\tau_0)$, we have $S_t,I_t,U_t\ge 0$, so the sources $\beta_e S_sU_s$ and $\xi I_s$ are nonnegative. In that case, $I_t\ge\Phi_tI_0>0$ and $U_t\ge\Psi_tU_0>0$ and by continuity these persist at $t=\tau_0$. Therefore $\tau_0\ge\tau_e$ implies that $I_t>0$, $U_t>0$, and hence $N_t>0$ for all $t<\tau_e$.
\end{proof}

The disease-free analysis presented below does not depend on the assumption that $\tau_e = \infty$ almost surely. In the disease-free case, the dynamics collapse to the one-dimensional stochastic logistic diffusion \eqref{eq:logistic}, whose explicit solution remains strictly positive and nonexplosive for all finite times, as we will see next.

\paragraph{\textbf{Disease-free measure and invariance of the threshold.}}
On the disease-free set
$$\mathcal F_{+}=\{(S,I,U):S>0,\ I=U=0\},$$
the processes $I_t\equiv0$ and $U_t\equiv0$ satisfy their respective stochastic equations. By pathwise uniqueness, this implies that $\mathcal F_{+}$ is invariant. Restricted to $\mathcal F_{+}$, we have $N=S$, and the model reduce to population dynamics given by
\begin{equation}\label{eq:logistic}
\dif N = N(r-cN)\,\dif t+\sigma_NN\,\dif W,
\end{equation}
where $r=(\nu-\mu)-\alpha_1$, $c=\frac{\nu-\mu}{K}$, $\sigma_N^2=\sigma_1^2+\sigma_2^2$ and $W_t=\frac{\sigma_1W_{1,t}+\sigma_2W_{2,t}}{\sigma_N}$, with $\sigma_N>0$. Since $W_1$ and $W_2$ are independent, the resulting process $W$ is a standard Brownian motion.

For $N_0>0$, the solution to \eqref{eq:logistic} is given by
$$N_t= \frac{ N_0\exp\left[\left(r-\frac12\sigma_N^2\right)t+\sigma_NW_t
\right]}{1+cN_0\displaystyle\int_0^t\exp\left[\left(r-\frac12\sigma_N^2\right)s+\sigma_NW_s \right]\dif s},$$
which shows that $N_t=S_t>0$ for every finite $t\ge0$ almost surely.
Hence, the reflecting boundary $\{S=0\}$ is never hit in finite time, and the regulator satisfies $L_t\equiv0$ on $\mathcal F_{+}$.

\begin{lemma}\label{lem:dfe}
Assume $c>0$, $\sigma_N>0$, and $N_0>0$. If $r>\frac12\sigma_N^2$, then the stochastic logistic diffusion \eqref{eq:logistic} is positive recurrent on $(0,\infty)$ and admits a unique invariant probability distribution $\pi_0$ given by the $\mathrm{Gamma}(k,\theta)$ density
$$\pi_0(n)=\frac{1}{\Gamma(k)\theta^k} n^{k-1}e^{-n/\theta},$$
for $n>0$, where $k=\frac{2r}{\sigma_N^2}-1>0$ and $\theta=\frac{\sigma_N^2}{2c}$. In that case, $\overline N=\mathbb E_{\pi_0}[N]=K\left(1-\frac{\alpha_1+\frac12\sigma_N^2}{\nu-\mu}\right)$, and $\operatorname{Var}_{\pi_0}(N)=\overline N\,\frac{\sigma_N^2}{2c}$. Moreover, $\pi_0$ is ergodic.
\end{lemma}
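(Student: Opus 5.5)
The plan is to treat \eqref{eq:logistic} as a one-dimensional regular diffusion on $(0,\infty)$ and use Feller's boundary classification via the scale function and speed measure. With drift $b(n)=n(r-cn)$ and diffusion coefficient $a(n)=\sigma_N^2n^2$, the scale density is
$$s(n)=\exp\Big(-\int_1^n\frac{2b(y)}{a(y)}\,\dif y\Big)=n^{-2r/\sigma_N^2}\exp\Big(\frac{2c}{\sigma_N^2}(n-1)\Big).$$
The speed density is
$$m(n)=\frac{1}{a(n)s(n)}\propto n^{2r/\sigma_N^2-2}\,e^{-2cn/\sigma_N^2}.$$

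First I will check that both endpoints are natural or non-attracting, so that the diffusion is recurrent on $(0,\infty)$. Near $0$, $s(n)\sim n^{-2r/\sigma_N^2}$, and $\int_0 s=\infty$ exactly when $2r/\sigma_N^2\ge 1$. This holds under $r>\tfrac12\sigma_N^2$, so the scale diverges at $0$ and $0$ is not attained. Near $\infty$, $s$ grows exponentially because $c>0$, so $\int^\infty s=\infty$. Recurrence follows from divergence of the scale at both ends. It is also consistent with the explicit solution already displayed, which gives positivity and nonexplosion.

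Next comes positive recurrence. The total speed mass $\int_0^\infty m(n)\,\dif n$ is finite precisely when the exponent satisfies $2r/\sigma_N^2-2>-1$, that is $k=2r/\sigma_N^2-1>0$; the factor $e^{-2cn/\sigma_N^2}$ handles integrability at infinity. Normalizing $m$ yields the $\mathrm{Gamma}(k,\theta)$ density with $\theta=\sigma_N^2/(2c)$. Invariance can be confirmed directly: $\pi_0$ solves the stationary Fokker--Planck equation $\tfrac12(a\pi)''-(b\pi)'=0$ with zero probability flux, which is immediate from $\pi\propto m$. Uniqueness and ergodicity then follow from the standard theory of one-dimensional regular positive-recurrent diffusions. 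Every point is accessible, since the diffusion coefficient is nondegenerate on $(0,\infty)$. Hence the normalized speed measure is the unique invariant law, and the ergodic theorem $\frac1t\int_0^t g(N_s)\,\dif s\to\int g\,\dif\pi_0$ holds a.s. for $g\in L^1(\pi_0)$. I would cite a standard reference (e.g.\ Karlin--Taylor or Khasminskii) for this last step rather than reprove it.

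Finally, the moments are computed from the Gamma law: $\mathbb E_{\pi_0}N=k\theta=\frac{(2r-\sigma_N^2)}{2c}=\frac{r-\frac12\sigma_N^2}{c}$. Substituting $r=\nu-\mu-\alpha_1$ and $c=(\nu-\mu)/K$ gives $K\big(1-\frac{\alpha_1+\frac12\sigma_N^2}{\nu-\mu}\big)$. Likewise $\operatorname{Var}=k\theta^2=\overline N\,\theta$.

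The only delicate step is the boundary behaviour at $0$, namely confirming non-attainability and finiteness of the speed mass under exactly the condition $r>\tfrac12\sigma_N^2$. Everything else is routine once the scale and speed densities are written down.
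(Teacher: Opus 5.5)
Your proposal is correct and follows the same route as the paper: the scale density $s'(n)=n^{-2r/\sigma_N^2}e^{2cn/\sigma_N^2}$ diverges at both ends, the speed density $m(n)\propto n^{2r/\sigma_N^2-2}e^{-2cn/\sigma_N^2}$ is integrable exactly when $r>\frac12\sigma_N^2$, and normalizing it gives the Gamma law, with uniqueness and ergodicity cited from Karlin--Taylor and Khasminskii. You also check the zero-flux Fokker--Planck equation and compute $\overline N=k\theta$ and $\operatorname{Var}_{\pi_0}(N)=k\theta^2=\overline N\theta$ explicitly; the paper's proof leaves these two steps implicit.
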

\begin{proof}
Write the diffusion equation \eqref{eq:logistic} as $dN = b(N)\,dt + \sigma(N)\,dW$ with drift $b(n) = n(r - c n)$ and diffusion coefficient $\sigma(n) = \sigma_N n$. The corresponding scale density, up to a positive multiplicative constant, is
$$
s'(n)
= \exp\left(-\int_1^n \frac{2b(y)}{\sigma^2(y)}\,dy\right)
= n^{-2r/\sigma_N^2} e^{2c n/\sigma_N^2}.
$$
Thus, the speed density is
$$
m(n) = \frac{2}{\sigma^2(n) s'(n)} \propto n^{2r/\sigma_N^2 - 2} e^{-2c n/\sigma_N^2}.
$$
Because $c > 0$, this speed density is integrable at infinity. It is integrable at zero when $\frac{2r}{\sigma_N^2} - 2 > -1$ which is equivalent to $r > \frac{1}{2}\sigma_N^2$. Under this condition, both boundary scale integrals diverge, while the total speed measure is finite. Therefore, the diffusion is positive recurrent, and its normalized speed measure serves as its unique invariant probability distribution \cite{karlin1981second}. Ergodicity then follows from positive recurrence together with the uniqueness of this invariant distribution
\cite{khasminskii2012stochastic}.
\end{proof}

\begin{lemma}\label{cor:dfe-invariance}
Assume $c>0$, $\sigma_N>0$, and $N_0>0$. Reflection does not modify the restriction of \eqref{eq:reflected} to the disease-free region $\mathcal F_+=\{S>0,\ I=U=0\}$. In particular, it leaves unchanged the disease-free invariant measure $\pi_0$, its mean $\overline N$, and the transverse infection variational equation along the disease-free trajectory. Consequently, the following derived quantities, built from $\pi_0$ and the formal linearization, coincide between the reflected and unreflected systems
\end{lemma}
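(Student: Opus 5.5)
The plan has three steps: first show that on the disease-free region the regulator vanishes, then transfer $\pi_0$ and $\overline N$ from Lemma \ref{lem:dfe}, and finally compute the transverse linearization. The argument rests on pathwise uniqueness (Theorem \ref{thm:wellposed}) and on the explicit solution of the stochastic logistic equation \eqref{eq:logistic}.

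\textbf{Step 1: the regulator vanishes on $\mathcal F_+$.} Fix $S_0=N_0>0$ and $I_0=U_0=0$. I will exhibit a candidate solution of \eqref{eq:reflected}:
\[
S_t=N_t \text{ (the explicit logistic solution)},\qquad I_t\equiv 0,\qquad U_t\equiv 0,\qquad L_t\equiv 0 .
\]
Then I verify each equation in turn.
\begin{itemize}
\item With $I=U=0$ we have $f_I=f_U=0$, and the noise terms $\sigma_2 I\,\dif W_2$ and $\sigma_3 U\,\dif W_3$ vanish, so the $I$ and $U$ equations hold.
\item With $N=S$, the $S$-equation becomes $\dif S=S(r-cS)\,\dif t+\sigma_1 S\,\dif W_1+\sigma_2 S\,\dif W_2$. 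This is exactly \eqref{eq:logistic} with $\sigma_N W=\sigma_1W_1+\sigma_2W_2$.
\item The explicit formula gives $S_t>0$ for all finite $t$ almost surely. Hence the complementarity condition in \eqref{eq:localtime} is satisfied trivially by $L\equiv0$.
\end{itemize}
Pathwise uniqueness of the reflected solution (the Skorokhod map applied to an unconstrained path that never hits $0$ returns the path itself, with zero regulator) then identifies the reflected solution with the unreflected one on $\mathcal F_+$.

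One detail needs care: Theorem \ref{thm:wellposed} is stated on $\mathcal D$, where $i,u>0$. I therefore need uniqueness on the closure $\{i=u=0\}$ as well. I will argue this through the truncated, globally Lipschitz reflected SDE on $H$ used in that proof, which is well-posed for any initial point. The truncation is harmless here because $N_t$ is strictly positive and continuous on each $[0,T]$, so it stays inside a compact subset of $\{N>0\}$, and the localizing times exhaust $[0,\infty)$.

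\textbf{Step 2: invariant measure and mean.} Since the two laws on $\mathcal F_+$ coincide pathwise, the invariant law of the reflected disease-free process is the $\pi_0$ of Lemma \ref{lem:dfe}. Its mean $\overline N$ and variance are therefore unchanged, and so is ergodicity.

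\textbf{Step 3: transverse variational equation.} I linearize the $(I,U)$ equations at $I=U=0$ along the disease-free path. Let $(\delta I,\delta U)$ denote the perturbation. Using $S/N\to1$, I obtain
\[
\dif\,\delta I=\Big(\beta+\beta_e N_t\,\tfrac{\delta U}{\delta I}\cdots\Big),
\]
or more cleanly,
\[
\dif\,\delta I=\big[(\beta-\mu(1-N_t/K)-\gamma-\alpha_1)\,\delta I+\beta_e N_t\,\delta U\big]\dif t+\sigma_2\,\delta I\,\dif W_2,
\]
\[
\dif\,\delta U=\big[\xi\,\delta I-(\epsilon+\alpha_2)\,\delta U\big]\dif t+\sigma_3\,\delta U\,\dif W_3 .
\]
The key point is that $L$ enters only the $S$-equation, in the direction $\mathbf e_S$. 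The transverse block involves only $\partial f_I,\partial f_U$ with respect to $(I,U)$ and the coefficients $\sigma_2,\sigma_3$, evaluated along $N_t$. Since $N_t$ is identical in both systems by Step 1, the variational equations coincide. Consequently every functional of the pair ($\pi_0$, transverse linearization) coincides between the reflected and unreflected systems. This includes the averaged next-generation proxy $\mathcal R_0^s$ and the top Lyapunov exponent, which by Oseledets/Furstenberg–Khasminskii is an ergodic average against $\pi_0$.

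\textbf{Main obstacle.} The only delicate step is the uniqueness argument on the boundary face $\{i=u=0\}$ in Step 1. The incidence term $\beta SI/N$ is smooth for $N>0$ even when $I=0$, so local Lipschitz continuity persists there, and I expect the truncation argument to go through unchanged.
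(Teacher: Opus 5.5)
Your proposal is correct and follows essentially the same route as the paper. The disease-free logistic path never reaches $\{S=0\}$, so $L\equiv 0$ and the reflected and unreflected restrictions to $\mathcal F_+$ coincide, hence share $\pi_0$ and $\overline N$; and the $(I,U)$ variational system \eqref{traverse} contains no reflection term.

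Your extra care about uniqueness on the face $\{I=U=0\}$ is a genuine refinement over the paper's brief appeal to pathwise uniqueness, since Theorem \ref{thm:wellposed} is proved on $\mathcal D$ with $i,u>0$ and does not literally cover that face. Two small corrections. First, the top Lyapunov exponent is an ergodic average against the projective invariant law $\rho$ of $(N,Z)$, whose $N$-marginal is $\pi_0$, not against $\pi_0$ itself; your conclusion still holds because the whole cocycle coincides. Second, delete the garbled first display in Step 3.
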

\begin{proof}
On $\mathcal F_+$ we have $I_t=U_t=0$ and $N_t=S_t>0$ at all finite times $t$, with $L_t\equiv0$. Thus, the reflected and unreflected systems coincide when restricted to $\mathcal F_+$, and hence share the same invariant measure $\pi_0$ and the same mean $\overline N$. A transverse linearization of the $(I,U)$ subsystem along the disease-free trajectory yields
\begin{align}\label{traverse}
dJ
&=
\left[
\left(
\beta-\mu\left(1-\frac{N_t}{K}\right)
-\gamma-\alpha_1
\right)J
+\beta_eN_tV
\right]dt
+\sigma_2J\,dW_2, \nonumber \\
& \\
dV
&=\left(\xi J-(\epsilon+\alpha_2)V \right)\,dt+\sigma_3V\,dW_3. \nonumber
\end{align}
No reflection term appears in this variational system. Therefore, the transverse variational cocycle, and hence its top Lyapunov exponent, whenever it exists, is the same for both the reflected and unreflected models. The derived quantities are likewise unchanged, as they are derived from the same disease-free mean and removal rates.
\end{proof}

In summary, reflection preserves both the disease-free dynamics and the associated transverse infection linearization, while additionally ensuring that $S_t\geq 0$ holds for all times in the maximal lifetime of the reflected solution. 

The following theorem establishes the condition for disease invasion.

\begin{theorem}\label{thm:main}
Assume that $r>\frac12\sigma_N^2$ and that the projective process $(N,Z)$, with $Z=V/J$, is positive Harris recurrent and possesses a unique invariant probability measure $\rho$ on $(0,\infty)^2$ satisfying the moment conditions
\begin{equation}\label{eq:rho-moments}
\int nz\,\rho(dn,dz)<\infty
\qquad\text{and}\qquad
\int z^{-1}\,\rho(dn,dz)<\infty.
\end{equation}
Then, the top Lyapunov exponent associated with the transverse infection linearization is
$$
\lambda
=
\beta-a_s-\frac12\sigma_2^2
+\beta_e\int nz\,\rho(dn,dz),
$$
where
$$a_s = \mu\frac{\alpha_1+\frac12\sigma_N^2}{\nu-\mu}+\gamma+\alpha_1.
$$
Consequently, the disease-free equilibrium is linearly resistant to invasion if $\lambda<0$, and it is linearly susceptible to invasion if $\lambda>0$.
\end{theorem}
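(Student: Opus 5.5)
The plan is to compute the exponential growth rate of the positive solutions of the transverse variational system \eqref{traverse} directly. I would take $J_0>0$ and $V_0>0$. The system \eqref{traverse} is linear and cooperative, since the off-diagonal coefficients $\beta_e N_t$ and $\xi$ are nonnegative. The variation-of-constants argument from the proof of Theorem \ref{thm:wellposed} then gives $J_t>0$ and $V_t>0$ for all $t$, so $Z=V/J$ is well defined on $(0,\infty)$.

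Applying Itô's formula to $\log J$ and using $V=ZJ$ gives
$$
\dif \log J_t=\Big(\beta-\mu\Big(1-\frac{N_t}{K}\Big)-\gamma-\alpha_1-\tfrac12\sigma_2^2+\beta_e N_tZ_t\Big)\dif t+\sigma_2\,\dif W_{2,t}.
$$
I would integrate this, divide by $t$, and treat the terms one at a time.
\begin{itemize}
\item The martingale term satisfies $\sigma_2W_{2,t}/t\to0$ almost surely by the strong law for Brownian motion.
\item By Lemma \ref{lem:dfe}, $N$ is ergodic with invariant law $\pi_0$ and mean $\overline N=K\big(1-\frac{\alpha_1+\frac12\sigma_N^2}{\nu-\mu}\big)$. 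So $\frac1t\int_0^t\mu(1-N_s/K)\,\dif s\to\mu(1-\overline N/K)$. This equals $\mu\frac{\alpha_1+\frac12\sigma_N^2}{\nu-\mu}$, which produces $a_s$.
\item The $N$-marginal of $\rho$ must be $\pi_0$, by uniqueness of $\pi_0$ in Lemma \ref{lem:dfe}. Positive Harris recurrence of $(N,Z)$ and the first moment condition in \eqref{eq:rho-moments} let me apply the ratio (Birkhoff) ergodic theorem for Harris processes. This gives $\frac1t\int_0^t N_sZ_s\,\dif s\to\int nz\,\rho(\dif n,\dif z)$ almost surely, for every initial condition.
\end{itemize}
Together these show $\frac1t\log J_t\to\lambda$ almost surely.

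Next I need to show that $\lambda$ is the top Lyapunov exponent, that is, the growth rate of $\|(J_t,V_t)\|$, and not only of the first coordinate. I would handle $V$ the same way. Here
$$
\dif\log V_t=\big(\xi Z_t^{-1}-(\epsilon+\alpha_2)-\tfrac12\sigma_3^2\big)\dif t+\sigma_3\,\dif W_{3,t},
$$
and the second moment condition $\int z^{-1}\,\rho<\infty$ gives an almost sure limit $\lambda_V=\xi\int z^{-1}\rho-(\epsilon+\alpha_2)-\frac12\sigma_3^2$. Hence $\frac1t\log Z_t\to\lambda_V-\lambda$ almost surely. If this limit were nonzero, $Z_t$ would tend to $0$ or to $\infty$ almost surely, which contradicts recurrence of $(N,Z)$ on $(0,\infty)^2$. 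So $\lambda_V=\lambda$. Since $\log\|(J,V)\|=\log J+\frac12\log(1+Z^2)$, the norm grows at rate $\lambda$.

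The step I expect to be the main obstacle is removing the restriction to positive initial data. For general initial vectors I would use that the linear cocycle is positive and irreducible, so the Perron–Frobenius (Ruelle / Arnold–Gundlach–Demetrius) theory for positive random linear systems applies. Under it, the exponent along the positive cone is the top exponent. Alternatively, I would observe that any initial vector is dominated in absolute value by a positive one, because the cocycle preserves the cone ordering. This yields $\limsup\frac1t\log\|\cdot\|\le\lambda$, while positive vectors attain $\lambda$.

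Finally, Lemma \ref{cor:dfe-invariance} shows that the variational system is identical for the reflected and unreflected models, so the result holds for both. The invasion dichotomy then follows from the standard definition. If $\lambda<0$, every solution of the linearization decays exponentially almost surely, so the disease-free state is linearly resistant to invasion. If $\lambda>0$, positive perturbations grow exponentially, so it is linearly susceptible to invasion.
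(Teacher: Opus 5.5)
Your proposal is correct and follows essentially the paper's route: you obtain $\frac1t\log J_t\to\lambda$ by Harris ergodic averaging, and then $\frac1t\log Z_t\to 0$ by showing the limit exists and that a nonzero limit would contradict recurrence of $(N,Z)$ (your $\lambda_V-\lambda$ is exactly the paper's $\int h\,d\rho$, since $\log Z=\log V-\log J$). Your extra cone-domination step is a small but worthwhile addition: it bounds arbitrary initial vectors by a positive one, using that the solution matrix has nonnegative entries, whereas the paper leaves this implicit when it identifies $\lambda$ as the top exponent from the open cone alone.
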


\begin{proof}
Let $c_1(N):=\beta-\mu\left(1-\frac{N}{K}\right)-\gamma-\alpha_1$ and $b:=\epsilon+\alpha_2$, so the transverse system \eqref{traverse} reads
$$
dJ=\big[c_1(N)J+\beta_e NV\big]dt+\sigma_2 J\,dW_2,
\qquad
dV=(\xi J-bV)\,dt+\sigma_3 V\,dW_3 .
$$

First, we show invariance of the positive cone of $Z=V/J$. Since the off-diagonal drift couplings $\beta_e N\ge0$ and $\xi\ge0$, the system is cooperative. By the variation of constants representation used in the positivity part of Theorem \ref{thm:wellposed} and the sources $\beta_e N_sV_s$ and $\xi J_s$ are nonnegative on the cone, any initial condition with $J_0>0$, $V_0>0$ yields $J_t>0$ and $V_t>0$ for all $t$. Hence, the process $Z_t:=V_t/J_t\in(0,\infty)$ is well defined for all $t$.

Second, we derive the diffusion equation of $Z=V/J$. By It\^o's quotient rule, using that $W_2$ and $W_3$ are independent 
$$
dZ=\big[\xi-bZ-c_1(N)Z-\beta_e NZ^2+\sigma_2^2 Z\big]dt
   +\sigma_3 Z\,dW_3-\sigma_2 Z\,dW_2 .
$$
Thus $(N,Z)$ is a Markov diffusion on $(0,\infty)^2$, matching the hypothesis. Since $N$ evolves autonomously through the disease-free logistic \eqref{eq:logistic}, the $N$-marginal of any invariant law of $(N,Z)$ is the measure $\pi_0$ of Lemma \ref{lem:dfe}; that is, the $N$-marginal of $\rho$ is $\pi_0$.

Third, we derive the diffusion equation of $\log J$. Applying It\^o formula to $\log J$ and using $V/J=Z$,
$$
d\log J=\Big[c_1(N)+\beta_e NZ-\frac12\sigma_2^2\Big]dt+\sigma_2\,dW_2
=:g(N,Z)\,dt+\sigma_2\,dW_2 .
$$
Integrating,
$$
\frac1t\log J_t
=\frac1t\log J_0
+\frac1t\int_0^t g(N_s,Z_s)\,ds
+\frac{\sigma_2}{t}W_{2,t}.
$$

Fourth, we use ergodic averaging for the integral of $g$. The function $g$ is affine in $n$ and linear in the product $nz$; the hypothesis $\int nz\,\rho(dn,dz)<\infty$, together with the finite mean of $\pi_0$, gives $g\in L^1(\rho)$. Positive Harris recurrence and uniqueness of $\rho$ yield the ergodic theorem
$$
\lim_{t\to\infty} \frac1t\int_0^t g(N_s,Z_s)\,ds
=\int g\,d\rho
\qquad\text{a.s.},
$$
while $\lim_t \frac{\sigma_2}{t}W_{2,t}=0$ a.s. by the law of large numbers for Brownian motion. Hence $\lim_t \frac1t\log J_t=\int g\,d\rho$ almost surely.

Splitting $\int g\,d\rho$ and using that the $N$-marginal of $\rho$ is $\pi_0$,
$$
\int g\,d\rho
=\int c_1(n)\,\pi_0(dn)
+\beta_e\int nz\,\rho(dn,dz)
-\frac12\sigma_2^2 .
$$
By Lemma \ref{lem:dfe}, $\overline N=\mathbb E_{\pi_0}[N]
=K\left(1-\frac{\alpha_1+\frac12\sigma_N^2}{\nu-\mu}\right)$, so
$1-\overline N/K=\frac{\alpha_1+\frac12\sigma_N^2}{\nu-\mu}$ and
$$
\int c_1(n)\,\pi_0(dn)
=\beta-\mu\left(1-\frac{\overline N}{K}\right)-\gamma-\alpha_1
=\beta-\mu\,\frac{\alpha_1+\frac12\sigma_N^2}{\nu-\mu}-\gamma-\alpha_1
=\beta-a_s.
$$
Therefore
$$
\int g\,d\rho=\beta-a_s-\frac12\sigma_2^2+\beta_e\int nz\,\rho(dn,dz)=\lambda .
$$

Finally, we derive the Lyapunov exponent. From $\|(J_t,V_t)\|=\sqrt{J_t^2+V_t^2}=J_t\sqrt{1+Z_t^2}$, we have
\begin{equation}\label{eq:norm-projective}
\frac{1}{t}\log\|(J_t,V_t)\|
=
\frac{1}{t}\log J_t
+
\frac{1}{2t}\log\big(1+Z_t^2\big).
\end{equation}
Hence, given the steps above, it is enough to show that
\begin{equation}\label{eq:projective-subexp}
\lim_t \frac{1}{t}\log\big(1+Z_t^2\big)= 0
\qquad\text{a.s.}
\end{equation}
Since $1+Z_t^2\ge 1$, then $0\le\log(1+z^2)\le\log 2+2\max\{\log z,0\}$ for every $z>0$. Thus, the claim in equation \eqref{eq:projective-subexp} reduces to
proving
\begin{equation}\label{eq:logZ-sublinear}
\lim_t \frac{1}{t}\log Z_t= 0
\qquad\text{a.s.}
\end{equation}

Since $Z_t>0$, It\^o's formula applied to $\log Z_t$, gives
\begin{equation}\label{eq:logZ}
d\log Z_t
=
h(N_t,Z_t)\,dt
+\sigma_3\,dW_{3,t}-\sigma_2\,dW_{2,t},
\end{equation} where
$$
h(n,z)
=
\frac{\xi}{z}-b-c_1(n)-\beta_e nz+\frac12\sigma_2^2-\frac12\sigma_3^2.
$$
The function $h$ is affine in $n$, linear in the product $nz$, and linear in $z^{-1}$. Since the $N$-marginal of $\rho$ is $\pi_0$ with $\int n\,\rho(dn,dz)=\overline N<\infty$ by Lemma \ref{lem:dfe}, the two moment conditions in \eqref{eq:rho-moments} give $h\in L^1(\rho)$.

Integrating \eqref{eq:logZ} and dividing by $t$,
$$
\frac{1}{t}\log Z_t
=
\frac{1}{t}\log Z_0
+\frac{1}{t}\int_0^t h(N_s,Z_s)\,ds
+\frac{\sigma_3W_{3,t}-\sigma_2W_{2,t}}{t}.
$$
The martingale term vanishes a.s. by the strong law of large numbers for Brownian motion. By the ergodic theorem and positive Harris recurrence with the uniqueness of $\rho$, we have
\begin{equation}\label{eq:logZ-kappa}
\lim_t\frac{1}{t}\log Z_t=\int h\,d\rho =:\kappa
\qquad\text{a.s.}
\end{equation}

Since $\rho$ is a probability measure on $(0,\infty)^2$, then there exists an $m_0$ such that the set compact $C=[m_0^{-1},m_0]^2\subset(0,\infty)^2$ has $\rho(C)>0$. The ergodic theorem applied to the bounded function $\mathbf 1_C$ gives
\begin{equation}\label{eq:compact-occupation}
\lim_t \frac{1}{t}\int_0^t\mathbf 1_C(N_s,Z_s)\,ds=\rho(C)>0
\qquad\text{a.s.}
\end{equation}
On one hand, if $\kappa>0$, then \eqref{eq:logZ-kappa} forces $\lim_t Z_t=\infty$, so eventually $Z_t>m_0$ and $(N_t,Z_t)\notin C$. On the other hand, if $\kappa<0$, then $\lim_tZ_t= 0$, so eventually $Z_t<m_0^{-1}$ and again $(N_t,Z_t)\notin C$. In either case the occupation fraction in \eqref{eq:compact-occupation} would tend to $0$, contradicting $\rho(C)>0$. Hence $\kappa=0$, which establishes \eqref{eq:logZ-sublinear} and therefore \eqref{eq:projective-subexp}.

Substituting \eqref{eq:projective-subexp} into \eqref{eq:norm-projective}, we have
$$
\lambda
:=
\lim_{t\to\infty}\frac{1}{t}\log\|(J_t,V_t)\|
=
\lim_{t\to\infty}\frac{1}{t}\log J_t
=
\beta-a_s-\frac12\sigma_2^2
+\beta_e\int nz\,\rho(dn,dz)
\qquad\text{a.s.}
$$
The limit does not depend on the initial transverse direction $(J_0,V_0)$ in the open cone, so $\lambda$ is the top Lyapunov exponent of the transverse cocycle. Therefore, the results of $\lambda<0$ versus $\lambda>0$ follow at once.
\end{proof}

The following corollary gives a lower bound for $\mathbb E_\rho(NZ)=\int nz\,\rho(dn,dz)$ and for the Lyapunov exponent.

\begin{corollary}
\label{cor:lambda-lower-bound}
Under the assumptions of Theorem \ref{thm:main}, assume additionally that
\begin{equation}\label{eq:aY-positive}
a_Y
:=
\nu-\beta+\gamma-\epsilon-\alpha_2
-\frac{\nu}{K}\overline N
-\frac12(\sigma_1^2+\sigma_3^2)
>0.
\end{equation}
Then
\begin{equation}\label{eq:nz-lower-bound}
\beta_e\int_{(0,\infty)^2} nz\,\rho(dn,dz)
\geq a_Y,
\end{equation}
and consequently the transverse invasion exponent satisfies
\begin{equation}\label{eq:lambda-lower-bound}
\lambda
\geq
-\left(
\epsilon+\alpha_2+\frac12\sigma_3^2
\right).
\end{equation}
\end{corollary}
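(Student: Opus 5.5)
The plan is to apply It\^o's formula to $\log(N_tZ_t)$ and use the ergodic theorem. Along the disease-free trajectory its long-run time average must vanish, and this produces an exact identity for $\beta_e\int nz\,\rho(dn,dz)$. The lower bound then follows by discarding the nonnegative term $\xi\int z^{-1}\,d\rho$.

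\textbf{Step 1 (the $\log N$ part).} By It\^o's formula applied to \eqref{eq:logistic},
$$d\log N_t=\big(r-cN_t-\tfrac12\sigma_N^2\big)\,dt+\sigma_N\,dW_t .$$
Using the mean $\overline N$ from Lemma \ref{lem:dfe},
$$\mathbb E_{\pi_0}\big[r-cN-\tfrac12\sigma_N^2\big]=(\nu-\mu)\Big(1-\frac{\overline N}{K}\Big)-\alpha_1-\tfrac12\sigma_N^2=0 .$$
Equivalently, one can note that $\frac1t\log N_t\to0$ a.s. by positive recurrence.

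\textbf{Step 2 (the $\log Z$ part).} In the proof of Theorem \ref{thm:main} we showed that $\kappa=\int h\,d\rho=0$. Here $h\in L^1(\rho)$ because of the moment conditions \eqref{eq:rho-moments} and $\overline N<\infty$, and the $N$-marginal of $\rho$ is $\pi_0$.

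\textbf{Step 3 (combine).} Adding the two identities gives
$$0=\int\Big[r-cn-\tfrac12\sigma_N^2+\frac{\xi}{z}-b-c_1(n)-\beta_e nz+\tfrac12\sigma_2^2-\tfrac12\sigma_3^2\Big]\rho(dn,dz),$$
where $b=\epsilon+\alpha_2$. Solving for the $nz$ term,
$$\beta_e\int nz\,d\rho=\xi\int z^{-1}\,d\rho+\Big(r-c\overline N-\tfrac12\sigma_N^2\Big)-b-\int c_1\,d\pi_0+\tfrac12\sigma_2^2-\tfrac12\sigma_3^2 .$$
Now substitute $r=\nu-\mu-\alpha_1$, $c=(\nu-\mu)/K$ and $\int c_1\,d\pi_0=\beta-\mu(1-\overline N/K)-\gamma-\alpha_1$. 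The $\mu$ and $\alpha_1$ terms cancel, and $\sigma_N^2-\sigma_2^2=\sigma_1^2$. The constant part therefore collapses to
$$\nu-\frac{\nu}{K}\overline N-\beta+\gamma-\epsilon-\alpha_2-\tfrac12(\sigma_1^2+\sigma_3^2)=a_Y .$$
This gives the identity
$$\beta_e\int nz\,d\rho=a_Y+\xi\int z^{-1}\,d\rho .$$
Since $\xi\ge0$ and $z>0$, the integral term is nonnegative, so \eqref{eq:nz-lower-bound} follows. The bound \eqref{eq:nz-lower-bound} itself does not need $a_Y>0$; that hypothesis only makes it informative, since the left side is then strictly positive.

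\textbf{Step 4 (the bound on $\lambda$).} Insert \eqref{eq:nz-lower-bound} into the formula for $\lambda$ from Theorem \ref{thm:main}:
$$\lambda\ge\beta-a_s-\tfrac12\sigma_2^2+a_Y .$$
Expanding $a_s$ and $a_Y$, the terms in $\beta$ and $\gamma$ cancel. What remains is
$$(\nu-\mu)\Big(1-\frac{\overline N}{K}\Big)-\alpha_1-\tfrac12\sigma_N^2-\big(\epsilon+\alpha_2+\tfrac12\sigma_3^2\big).$$
The first three terms sum to zero by the formula for $\overline N$, which yields \eqref{eq:lambda-lower-bound}.

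\textbf{Main obstacle.} The delicate part is not the algebra but justifying the zero-average statements, especially in Step 2. This requires $\kappa=0$ exactly, not merely finiteness, and it is exactly what the occupation-time argument in the proof of Theorem \ref{thm:main} provides. The moment conditions \eqref{eq:rho-moments} are needed to split $\int h\,d\rho$ term by term, so that the identity is legitimate rather than a formal $\infty-\infty$ manipulation.
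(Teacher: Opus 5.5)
Your argument is correct, but it takes a different route from the paper's.

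\textbf{The paper's route.} The paper argues pathwise. It writes the SDE for $Y_t=N_tZ_t$ and applies It\^o's formula to $\log Y_t$. It then discards the positive term $\xi/Z_t$ to get $\log Y_t\ge a(t)-\beta_e\int_0^tY_s\,ds$ with $a(t)/t\to a_Y$. The comparison Lemma~\ref{Ap1} then gives $\liminf_t\frac1t\int_0^tY_s\,ds\ge a_Y/\beta_e$. Only at the end does the ergodic theorem identify this time average with $\int nz\,\rho(dn,dz)$.

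\textbf{Your route.} You integrate the log-drifts against $\rho$ and use two exact facts: $\int(r-cn-\frac12\sigma_N^2)\,d\pi_0=0$ and $\kappa=\int h\,d\rho=0$. This turns the paper's inequality into the identity $\beta_e\int nz\,d\rho=a_Y+\xi\int z^{-1}\,d\rho$.

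\textbf{What your route buys.}
\begin{itemize}
\item The bound holds for either sign of $a_Y$, so the hypothesis $a_Y>0$ is superfluous. So is the positivity of the constant $b=\beta_e$ that Lemma~\ref{Ap1} requires.
\item You get the exact formula $\lambda=\xi\int z^{-1}\,d\rho-(\epsilon+\alpha_2+\frac12\sigma_3^2)$. Hence \eqref{eq:lambda-lower-bound} is strict whenever $\xi>0$.
\item This is precisely the ergodic-average counterpart of Remark~\ref{rem:environmental-noise-bound}.
\end{itemize}

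\textbf{Step 1 can be dropped.} It only adds a zero. Since $g+h=\xi/z-(\epsilon+\alpha_2)-\frac12\sigma_3^2$, you get $\lambda=\int g\,d\rho=\int(g+h)\,d\rho$ directly, without passing through $a_Y$ at all.

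\textbf{What your route costs.} You rely on the moment condition $\int z^{-1}\,d\rho<\infty$ and on the occupation-time argument giving $\kappa=0$. The paper's bound on $\int nz\,d\rho$ needs only ergodicity of $N$ and integrability of $nz$. The final passage to $\lambda$ in both proofs still rests on Theorem~\ref{thm:main}, which itself used $\kappa=0$.
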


\begin{proof}
Using It\^o's product rule for $Y_t:=N_tZ_t$, we get
$$
dY_t=\left[\xi N_t+\left(\nu-\beta+\gamma-\epsilon-\alpha_2
-\frac{\nu}{K}N_t\right)Y_t-\beta_eY_t^2\right]dt
+\sigma_1Y_t\,dW_{1,t}+\sigma_3Y_t\,dW_{3,t}.
$$
Since $Y_t>0$, It\^o's formula gives
$$
d\log Y_t=\left[\frac{\xi}{Z_t}+\nu-\beta+\gamma-\epsilon-\alpha_2-\frac{\nu}{K}N_t-\beta_eY_t-\frac12(\sigma_1^2+\sigma_3^2)
\right]dt+\sigma_1\,dW_{1,t}+\sigma_3\,dW_{3,t}.
$$
Because $\xi/Z_t>0$, integration gives
$$
\log Y_t \geq a(t)-\beta_e\int_0^tY_s\,ds,
$$
where
$$
a(t)=\log Y_0+\left[\nu-\beta+\gamma-\epsilon-\alpha_2
-\frac12(\sigma_1^2+\sigma_3^2)\right]t
-\frac{\nu}{K}\int_0^tN_s\,ds+\sigma_1W_{1,t}+\sigma_3W_{3,t}.
$$
By ergodicity of the stochastic logistic process $N_t$,
$$
\lim_t \frac1t\int_0^tN_s\,ds
=
\overline N
\qquad\text{a.s.},
$$
while $W_{1,t}/t\to0$ and $W_{3,t}/t\to0$ almost surely. Hence,
$$
\lim_t \frac{a(t)}{t}= a_Y
\qquad\text{a.s.}
$$
Since $a_Y>0$, Lemma \ref{Ap1}, applied with
$f(t)=Y_t$ and $b=\beta_e$, we get
$$
\liminf_{t\to\infty}
\frac1t\int_0^tY_s\,ds
\geq
\frac{a_Y}{\beta_e}
\qquad\text{a.s.}
$$
But since, $Y_t=N_tZ_t$, and the ergodic theorem for
$(N,Z)$ gives
$$
\lim_t \frac1t\int_0^tY_s\,ds
=\lim_t \frac1t\int_0^tN_sZ_s\,ds= \int_{(0,\infty)^2}nz\,\rho(dn,dz)
\qquad\text{a.s.}
$$
Therefore,
$$
\beta_e\int_{(0,\infty)^2}nz\,\rho(dn,dz)
\geq a_Y.
$$
Also, the Lyapunov exponent
$$
\lambda=\beta-a_s-\frac12\sigma_2^2+\beta_e\int_{(0,\infty)^2}nz\,\rho(dn,dz) \geq
\beta-a_s-\frac12\sigma_2^2+a_Y.
$$
Substituting
$$
a_s
=
\mu\frac{\alpha_1+\frac12\sigma_N^2}{\nu-\mu}
+\gamma+\alpha_1,
\qquad
\overline N
=
K\left(
1-\frac{\alpha_1+\frac12\sigma_N^2}{\nu-\mu}
\right),
$$
and using
$$
\sigma_N^2=\sigma_1^2+\sigma_2^2,
$$
after simplification, we get
$$
\lambda
\geq
-\epsilon-\alpha_2-\frac12\sigma_3^2.
$$
This proves the result.
\end{proof}

\begin{remark}\label{rem:environmental-noise-bound}
The lower bound
$$
\lambda
\geq
-\left(\epsilon+\alpha_2+\frac12\sigma_3^2\right)
$$
has a direct interpretation in terms of the environmental component of the transverse system. Indeed, since $V_t>0$,
$$
dV_t
=
\left(\xi J_t-(\epsilon+\alpha_2)V_t\right)dt
+\sigma_3V_t\,dW_{3,t},
$$
and It\^o's formula gives
$$
d\log V_t
=
\left[
\xi\frac{J_t}{V_t}
-(\epsilon+\alpha_2)
-\frac12\sigma_3^2
\right]dt
+\sigma_3\,dW_{3,t}.
$$
Because $J_t/V_t>0$,
$$
d\log V_t
\geq
-\left(\epsilon+\alpha_2+\frac12\sigma_3^2\right)dt
+\sigma_3\,dW_{3,t}.
$$
Consequently,
$$
\liminf_{t\to\infty}\frac1t\log V_t
\geq
-\left(\epsilon+\alpha_2+\frac12\sigma_3^2\right)
\qquad\text{a.s.}
$$
Since $V_t=J_tZ_t$ and
$$
\lim_t \frac1t\log Z_t=0
\qquad\text{a.s.},
$$
the components $J_t$ and $V_t$ have the same asymptotic exponential
growth rate. Hence,
$$
\lambda
\geq
-\left(\epsilon+\alpha_2+\frac12\sigma_3^2\right).
$$

The quantity
$$
\epsilon+\alpha_2+\frac12\sigma_3^2
$$
can thus be viewed as the effective almost-sure exponential decay rate of the environmental component in the absence of contributions from infected individuals. The quantity $\frac12\sigma_3^2$ represents the It\^o correction arising from multiplicative noise in the environment. Hence, larger values of $\sigma_3$ reduce the lower bound on $\lambda$, rendering the bound less stringent. 
\end{remark}

\paragraph{\textbf{The deterministic invasion exponent.}}
A deterministic analogue of the stochastic invasion condition in Theorem \ref{thm:main} is derived by linearizing the infected subsystem $(I,U)$ in \eqref{M2deterministic} around the disease-free host equilibrium $\mathcal E_2=(S_2^{*},0,0)$, where $\overline N=S_2^{*}=K\left(1-\frac{\alpha_1}{\nu-\mu}\right)$. The transverse block associated with $(I,U)$ is
\begin{equation}\label{eq:B2-transverse}
B_2=
\begin{pmatrix}
c_1 & \beta_e\,\overline N\\[2pt]
\xi & -d
\end{pmatrix},
\qquad
c_1=\beta-\mu\left(1-\frac{\overline N}{K}\right)-\gamma-\alpha_1=\beta-a,
\qquad
d=\epsilon+\alpha_2,
\end{equation}
where $a=\gamma+\alpha_1+\mu\left(1-\frac{\overline N}{K}\right)=\gamma+\frac{\nu\alpha_1}{\nu-\mu}$. The deterministic invasion exponent $\lambda_{\det}$ is the dominant eigenvalue, with the largest real part, of $B_2$. The dominant eigenvalue is actually given by
\begin{equation}\label{eq:lambda-det}
\lambda_{\det}=\frac{(c_1-d)+\sqrt{(c_1+d)^{2}+4\beta_e\xi\,\overline N}}{2} \in\mathbb R.
\end{equation}

The exponent $\lambda_{\det}$ and the reproduction number $R_0$ of \eqref{R0} give the same threshold. Using $c_1=\beta-a$, the determinant of $B_2$ is given by $\det B_2=a\,d\,(1-R_0)$, and since $a,d>0$, then
\begin{equation}\label{eq:sign-equivalence}
\operatorname{sign}\bigl(\lambda_{\det}\bigr)
=\operatorname{sign}\bigl(R_0-1\bigr).
\end{equation}

Indeed, $R_0>1$ gives $\det B_2<0$, so $B_2$ has one positive and one negative eigenvalue and $\lambda_{\det}>0$. That means that the disease-free equilibrium $\mathcal E_2$ is a saddle point, and the infection invades. Conversely, $R_0<1$ gives $\det B_2>0$, forcing $a>\beta$ and so $c_1<0$, together with $\operatorname{tr}B_2=c_1-d<0$, then both eigenvalues have negative real part. Thus, $\lambda_{\det}<0$ implies $\mathcal E_2$ is locally asymptotically stable. Therefore, $\lambda_{\det}$ is the growth rate of an infinitesimal outbreak, and $R_0$ is its per-generation multiplier. 

Lastly, $\lambda_{\det}$ is obtained as the zero-noise end of the stochastic invasion exponent $\lambda$ in Theorem \ref{thm:main} when $\sigma_1,\sigma_2,\sigma_3= 0$. In this regime, the diffusion terms disappear, with $a_s=a$ and $\frac12\sigma_2^{2}=0$. Moreover, the invariant distribution $\rho$ collapses onto the deterministic transverse fixed point, implying that $\lambda=\lambda_{\det}$ when $\sigma_1,\sigma_2,\sigma_3= 0$. Hence, $\lambda_{\det}$ serves as the deterministic reference for quantifying the noise-driven shift in the invasion threshold.

\paragraph{\textbf{The stochastic reproduction number proxy.}} By Theorem \ref{thm:main}, the transverse invasion exponent can be written in the threshold form
\begin{align}
\mathcal R_0^s:=
\frac{\beta}{a_s+\frac12\sigma_2^2}
+\frac{\beta_e \, \mathbb E_\rho(NZ)}
{a_s+\frac12\sigma_2^2},
\end{align}
in which $\mathbb E_\rho(NZ)=\int nz\,\rho(dn,dz)$ depends on the invariant law $\rho$ of the projective process $(N,Z)$. Since $a_s+\frac12\sigma_2^2>0$, then $\operatorname{sign}\big(\mathcal R_0^s-1\big)=\operatorname{sign}(\lambda)$. Hence, the disease-free equilibrium is linearly resistant to invasion when $\mathcal R_0^s<1$ and linearly vulnerable to invasion when $\mathcal R_0^s>1$. We therefore regard $\mathcal R_0^s$ as a diffusion-corrected stochastic reproduction number proxy. It shares the critical value $1$ with the deterministic reproduction number $R_0$, but it is not a quantitative continuation of $R_0$ as $\lambda$ is for $\lambda_{\det}$.

In particular, $\mathcal R_0^s$ does not reduce to $R_0$ in the noiseless case. When $\sigma_1=\sigma_2=\sigma_3= 0$, the invariant law $\rho$ concentrates at the point mass $\delta_{(\overline N,\,Z^{*})}$, where $\overline N=S_2^{*}$ and $Z^{*}=V^{*}/J^{*}$ which is the positive root of the transverse balance
\begin{equation}\label{eq:Zstar}
\beta_e\,\overline N\,(Z^{*})^{2}
+\big(\epsilon+\alpha_2+c_1(\overline N)\big)\,Z^{*}-\xi=0,
\end{equation} where
$$c_1(\overline N)=\beta-\mu\left(1-\frac{\overline N}{K}\right)-\gamma-\alpha_1,$$
and not the uncoupled value $\xi/(\epsilon+\alpha_2)$. In that case, the transverse ratio $Z^{*}$ is shifted by the infection growth $c_1$ and the environmental feedback $\beta_e\overline N$ along the marginal direction. Consequently, in the noiseless end $\mathbb E_\rho(NZ)=\overline N\,Z^{*}$ and $a_s+\frac12\sigma_2^{2}= a$, give
$$
\mathcal R_0^s\ =\
\frac{\beta+\beta_e\,\overline N\,Z^{*}}{a}\ \neq\ R_0.
$$
Nevertheless, in the absence of noise we have $\operatorname{sign}\big(\mathcal R_0^s-1\big)=\operatorname{sign}(\lambda)=\operatorname{sign}(\lambda_{\det})=\operatorname{sign}(R_0-1)$. In particular, $\mathcal R_0^s=1$ holds exactly when $R_0=1$. Hence, $R_0$ and $\mathcal R_0^s$ agree on the epizootic threshold, although they generally differ away from it.

Figure \ref{fig:lambda} contrasts the stochastic invasion threshold with the associated deterministic threshold. The exponent $\lambda$ in Theorem \ref{thm:main} is obtained by averaging $\dif\log J$ over time along a reflected Euler-Maruyama simulated path. The sign of $\lambda$ determines whether the disease-free equilibrium prevents invasion when $\lambda<0$ or is open to invasion when $\lambda>0$. Figure \ref{fig:lambda} (a) illustrates the dependence of $\lambda$ on the noise intensity $\sigma$ for three management policies spanning a range of basic reproduction numbers $R_0$. At $\sigma=0$, each curve begins at the deterministic value $\lambda_{\det}$, confirming that the numerically estimated exponent matches the deterministic exponent in the absence of noise. As $\sigma$ grows, all curves decline. This indicates that environmental fluctuations reduce the transverse invasion exponent, so higher environmental stochasticity tends to impede CWD invasion on average.

The magnitude of this reduction depends on how far the deterministic exponent sits above the threshold. With the clearly super-threshold case with $R_0=5.18$, $\lambda$ remains positive across the entire noise range, allowing invasion. In contrast, under the two near-thresholds given by $R_0=2.11$ and $R_0=2.37$, $\lambda$ drops toward the cutoff line $\lambda=0$ as $\sigma$ increases. The $95\%$ interval starts to extend across zero; see Figure \ref{fig:lambda} (a). The model demonstrates that this behavior in the transverse exponent is characteristic of the stochastic fadeout it is designed to capture. The same contrast can be seen on the intervention plane, Figure \ref{fig:lambda} (b). The solid curve marks the deterministic threshold $R_0=1$, separating the endemic zone on the left side with warmer colors) from the disease-free zone on the right side. The dashed curve denotes the stochastic threshold $\lambda=0$ at $\sigma=0.2$. It falls to the left of $R_0=1$ as, at this noise level, there is a set of interventions deemed endemic by the deterministic analysis that already lie in the non-invasion region under the stochastic regime. In line with Figure \ref{fig:lambda} (a), noise shifts the epizootic boundary toward lower control effort. This is made possible because environmental stochasticity reduces the hunting and decontamination efforts required to cross the threshold. The two thresholds nevertheless retain the same qualitative shape, both curving toward smaller $\alpha_1$ as the decontamination rate $\alpha_2$ rises, and both ending close to the host-extinction boundary $\alpha_1=\nu-\mu$.

\begin{figure}[H]
    \centering
    \subfigure[]{\includegraphics[width=7cm]{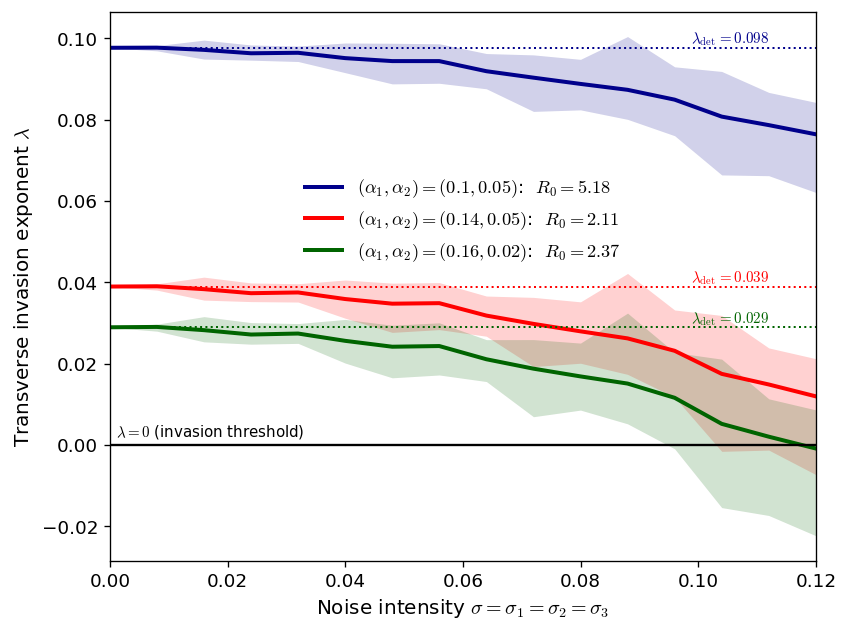}}
    \subfigure[]{\includegraphics[width=7cm]{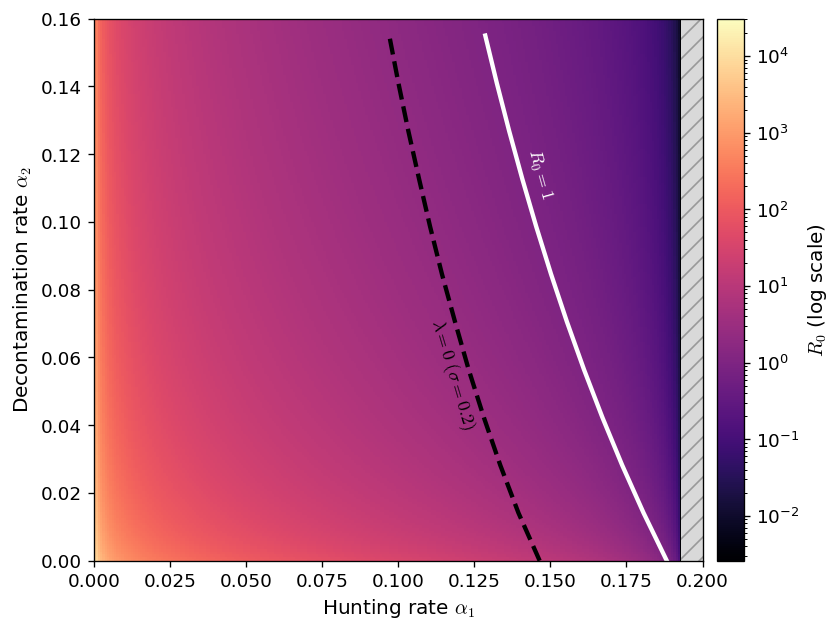}}
    \caption{Stochastic invasion threshold and its deterministic benchmark.
    (a) The transverse invasion exponent $\lambda$ as a function of the noise
    intensity $\sigma=\sigma_1=\sigma_2=\sigma_3$, for three intervention
    policies $(\alpha_1,\alpha_2)$. Solid curves are the mean and
    shaded regions the $95\%$ band over $20$ random simulations. Dotted lines
    mark the deterministic exponent $\lambda_{\det}$ recovered as $\sigma= 0$.
    The horizontal line $\lambda=0$ is the invasion threshold. (b) The
    deterministic basic reproduction number $R_0$ (log scale) over the
    $(\alpha_1,\alpha_2)$ plane, with the deterministic threshold $R_0=1$ (solid
    white) and the stochastic threshold $\lambda=0$ at $\sigma=0.2$ (dashed
    black) overlaid. The hatched region marks host extinction, $\alpha_1\geq\nu-\mu\approx0.1931$. All computations use $K=1000$.}
    \label{fig:lambda}
\end{figure}

Figure \ref{fig:sim_I_N} compares the deterministic dynamics with stochastic realizations of the full reflected system at the fixed intervention efforts $(\alpha_1,\alpha_2)=(0.15,0.04)$, which lies between the two curves in Figure \ref{fig:lambda} (b). Figure \ref{fig:sim_I_N} (a) and (b) show a single stochastic path/realization at a moderate noise level of $\sigma=0.12$. The infected population $I$ exhibits a small early transient before decaying, whereas the deterministic $I$ rises to a substantial endemic level; see Figure \ref{fig:sim_I_N} (a). Meanwhile, the total population $N$ fluctuates around, and somewhat below, the deterministic curve; see Figure \ref{fig:sim_I_N} (b). A single realization, however, is only one draw from a broad distribution, so panels (c) and (d) summarize $1000$ independent realizations at the same noise level through their mean and a $95\%$ band.

At $\sigma=0.12$, two points are particularly notable. First, the stochastic average of $N$ stabilizes at a value below the deterministic equilibrium; see Figure \ref{fig:sim_I_N} (d). In a nonlinear model with multiplicative noise, the expected value of the stochastic trajectory generally differs from the deterministic solution. In particular, the disease-free stationary mean includes the It\^o correction $\overline N=K\left(1-\frac{\alpha_1+\frac12\sigma_N^2}{\nu-\mu}\right)$, which is smaller than $S_2^{*}$. Second, the mean of $I$ remains well under the deterministic endemic level, and the $95\%$ interval is markedly right-skewed; see Figure \ref{fig:sim_I_N} (c). Although the total population never becomes negative, it can attain large values, so the lower tail pulls the mean even as most sample paths stay small. At a higher noise intensity $\sigma=0.25$, the mean total population size is driven toward extinction over the simulation horizon, while the $95\%$ band shrinks; see \ref{fig:sim_I_N} (e) and (f). According to the model, the stochastic fadeout indicates that strong environmental variability can drive the host population to very low levels at some fixed intervention levels, even when the deterministic model predicts a persistent endemic state and population abundance.

\begin{figure}[H]
   \centering
    \subfigure[]{\includegraphics[width=7cm]{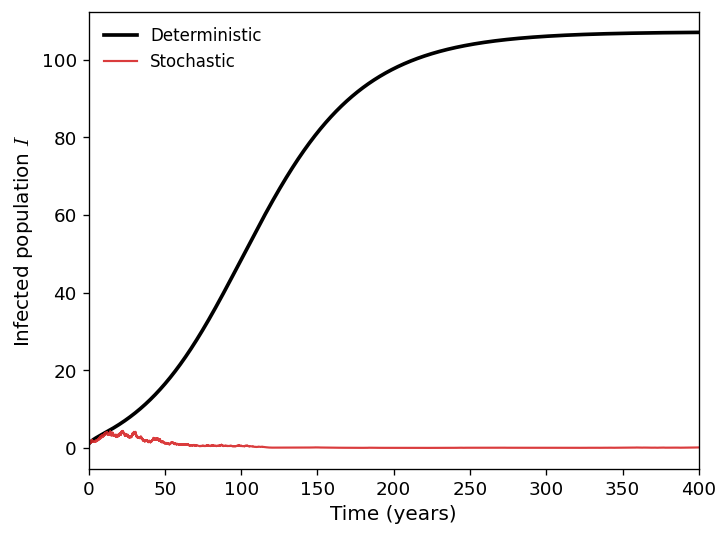}}
    \subfigure[]{\includegraphics[width=7cm]{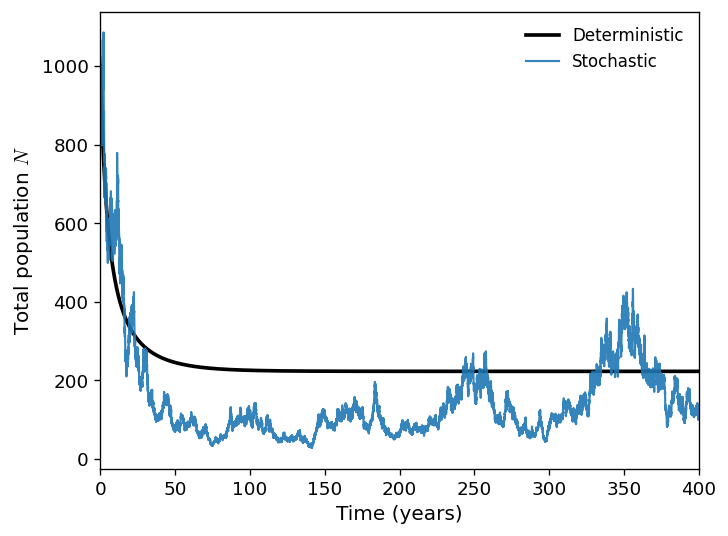}}
    \subfigure[]{\includegraphics[width=7cm]{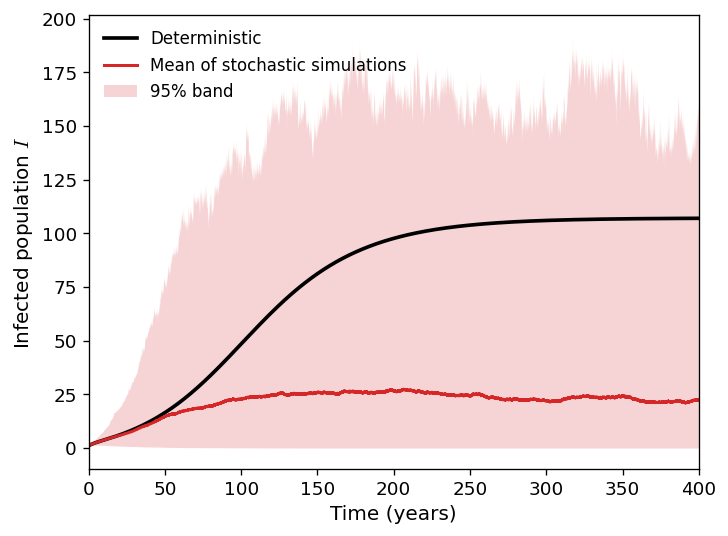}}
    \subfigure[]{\includegraphics[width=7cm]{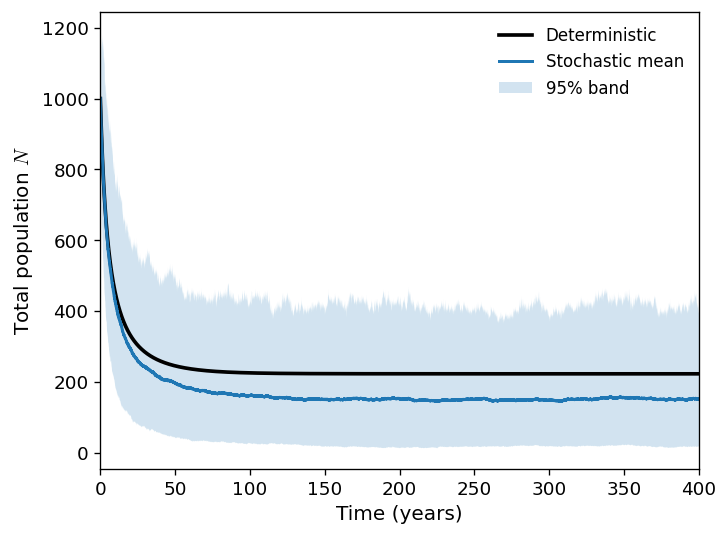}}
    \subfigure[]{\includegraphics[width=7cm]{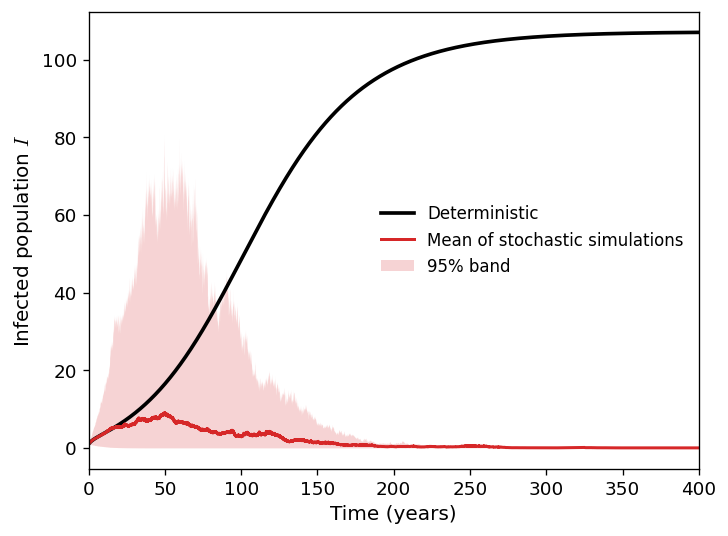}}
    \subfigure[]{\includegraphics[width=7cm]{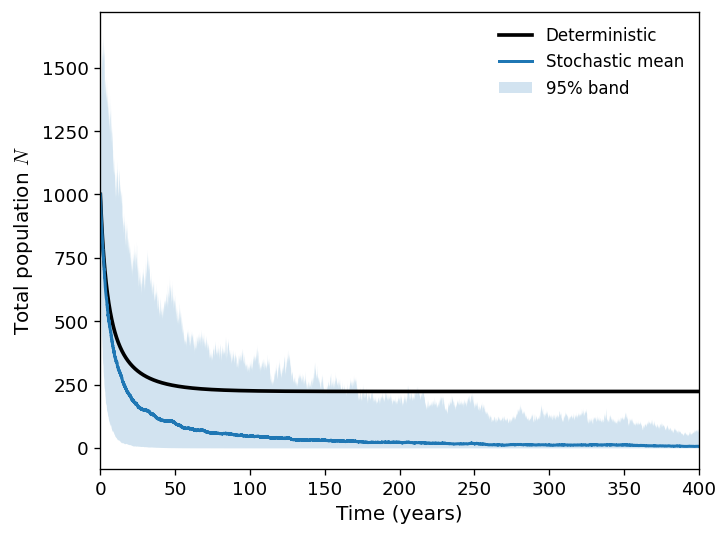}}
    \caption{Deterministic versus stochastic trajectories of the infected
    population $I$ (left column) and total population $N$ (right column) under
    the combined intervention $(\alpha_1,\alpha_2)=(0.15,0.04)$, integrated with
    the reflected Euler-Maruyama scheme (Algorithm \ref{alg:reflected-euler}).
    In every panel, the solid black curve is the deterministic solution.
    (a),(b) A single stochastic realization at noise intensity
    $\sigma_1=\sigma_2=\sigma_3=0.12$. (c),(d) The mean over $1000$
    independent realizations at $\sigma=0.12$, with the shaded $95\%$ band.
    (e),(f) The same summary at the higher noise intensity $\sigma=0.25$. Under moderate noise, the stochastic mean tracks the deterministic curve but settles below it due to the It\^o correction to the population mean. Under severe noise, the mean population is driven toward extinction while the $95\%$ band shrinks. All computations use $T=400$ years and $K=1000$.}
    \label{fig:sim_I_N}
\end{figure}

\subsection{Finite-Horizon Reflection Diagnostic}
The regulator $L$ tracks the total adjustment required to keep $ S_t \geq 0$. It thus serves as an indicator of the strength of the reflecting boundary's influence on the system’s dynamics. Since global nonexplosion has not been proven, we define this indicator in terms of the localized solution.

With $\tau_m$ as defined above, and given a finite time horizon $T>0$, set
\begin{equation}\label{eq:ell-localized}
\ell_{T,m}=\frac1T\mathbb E\left[L_{T\wedge\tau_m}\right].
\end{equation}
We refer to $\ell_{T,m}$ as the localized finite-horizon boundary-correction rate. When $\ell_{T,m}$ is small, reflection plays only a minor role, while large values signal that the system’s behavior is substantially influenced by the boundary correction.

\begin{proposition}
\label{prop:reflection-rate}
For every $T<\infty$ and $m\ge 1$, we have $0\le \ell_{T,m}<\infty$. In addition,
\begin{equation}\label{eq:reflection-balance}
\mathbb E[L_{T\wedge\tau_m}]=\mathbb E[S_{T\wedge\tau_m}]-S_0-\mathbb E\int_0^{T\wedge\tau_m} f_S(S_t,I_t,U_t)\,dt.
\end{equation}
If the initial condition belongs to the disease-free set $\mathcal F_+=\{S>0,\ I=U=0\}$, then $L_t\equiv 0$, and therefore $\ell_{T,m}=0$ for all $T$ and $m$.
\end{proposition}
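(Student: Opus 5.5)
The plan is to integrate the reflected $S$-equation \eqref{eq:reflected-S} up to the bounded stopping time $T\wedge\tau_m$ and take expectations. Writing the integral form,
$$
S_{T\wedge\tau_m}=S_0+\int_0^{T\wedge\tau_m} f_S(S_t,I_t,U_t)\,dt+\int_0^{T\wedge\tau_m}\sigma_1N_t\,dW_{1,t}+\int_0^{T\wedge\tau_m}\sigma_2S_t\,dW_{2,t}+L_{T\wedge\tau_m},
$$
we solve for $L_{T\wedge\tau_m}$. On the stochastic interval $[0,\tau_m]$ we have $\|(S_t,I_t,U_t)\|\le m$ and $N_t\ge m^{-1}$. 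Hence the integrands $\sigma_1N_t$ and $\sigma_2S_t$ are bounded by a constant multiple of $m$. The two stopped stochastic integrals are therefore square-integrable martingales with zero mean, by optional stopping at the bounded time $T\wedge\tau_m$. Taking expectations yields \eqref{eq:reflection-balance} directly.

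For finiteness, we bound each term on the right of \eqref{eq:reflection-balance}. First, $|S_{T\wedge\tau_m}|\le m$ by path continuity up to $\tau_m$, with the initial state assumed inside the ball; otherwise $\tau_m=0$ and everything is trivial. Second, $f_S$ is continuous on the compact set $\{N\ge m^{-1},\ \|(s,i,u)\|\le m,\ s\ge0\}$. Note that $\beta SI/N$ is smooth there, as noted in the proof of Theorem \ref{thm:wellposed}. So $|f_S|\le C_m$ on $[0,\tau_m]$, and the drift integral is bounded by $C_mT$. Thus $\mathbb E[L_{T\wedge\tau_m}]\le m+S_0+C_mT<\infty$. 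Nonnegativity follows because $L$ is nondecreasing with $L_0=0$ by \eqref{eq:localtime}. Dividing by $T$ gives $0\le\ell_{T,m}<\infty$.

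For the disease-free statement, invoke the invariance of $\mathcal F_+$ established before Lemma \ref{lem:dfe}. By pathwise uniqueness, $I\equiv U\equiv0$, and $S=N$ solves the stochastic logistic equation \eqref{eq:logistic}. The explicit solution is strictly positive at all finite times, so the set $\{t:S_t=0\}$ is empty. The support condition $\int_0^t\indic_{\{S_s>0\}}dL_s=0$ then forces $L_t=L_t-L_0=0$. This also uses Lemma \ref{cor:dfe-invariance}, which shows that the reflected and unreflected systems coincide there. Hence $\ell_{T,m}=0$.

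The main technical point is the martingale step. It requires boundedness of the integrands on $[0,\tau_m]$ and care with the convention when the initial condition lies outside the $m$-ball. Both are handled by the localization, so I expect no real obstacle beyond bookkeeping.
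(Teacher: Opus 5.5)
Your proposal is correct and takes the same route as the paper's proof. You integrate the reflected $S$-equation up to $T\wedge\tau_m$, use boundedness before $\tau_m$ to make the stopped stochastic integrals mean-zero and the remaining terms finite, get nonnegativity from $L$ nondecreasing with $L_0=0$, and, in the disease-free case, combine strict positivity of the logistic solution with the support property of $L$; your explicit bound $m+S_0+C_mT$ and the remark on the case $\tau_m=0$ are only slightly more detailed bookkeeping of that argument.
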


\begin{proof}
Integrating the reflected $S$-equation up to time $T\wedge\tau_m$ gives
$$
S_{T\wedge\tau_m}=S_0
+\int_0^{T\wedge\tau_m} f_S(S_t,I_t,U_t)\,dt
+\int_0^{T\wedge\tau_m}\sigma_1N_t\,dW_{1,t}
+\int_0^{T\wedge\tau_m}\sigma_2S_t\,dW_{2,t}
+L_{T\wedge\tau_m}.
$$
On the time interval prior to $\tau_m$, the process remains in a compact subset of $\{N>0\}$. Hence, $f_S$, $N$, and $S$ are bounded on the stopped interval. As a result, the corresponding stopped stochastic integrals are square-integrable martingales with zero mean. Taking expectations then leads to \eqref{eq:reflection-balance}. The terms on the right-hand side of \eqref{eq:reflection-balance} are finite because the coefficients and state variables are bounded before $\tau_m$. Hence
$$
\mathbb E[L_{T\wedge\tau_m}]<\infty.
$$
Nonnegativity follows because $L$ is nondecreasing and $L_0=0$.

If the initial condition lies in $\mathcal F_+$, then $I_t = U_t = 0$ and $N_t = S_t > 0$ for all finite times. Hence, the disease-free stochastic logistic trajectory never hits the reflecting boundary. Using the support property
$$
\int_0^t \mathbf 1_{\{S_s>0\}}\,dL_s = 0,
$$
we conclude that $L_t \equiv 0$.
\end{proof}

\section{Optimal Control Using Reinforcement Learning}\label{Sec:Env}
The compartmental dynamics of the cervid population are governed by the ODE model in Equation \ref{M1}. A simulation state at each time step is defined by the numbers of susceptible and infected individuals, together with the environmental pathogen load. In this simulation environment, we identify optimal intervention strategies using a memory-based deep reinforcement learning (DRL) agent.

\subsection{Episodes}

We define an episode as a 200-year simulation. Interventions are applied at the beginning of each year. Each simulation year $T$ consists of 365 days and is divided into 128 time steps $t$, with the disease compartments updated approximately every 2-3 days. The resulting sequence of 128 observations per year is used as the input history for the DRL agent. This planning horizon was chosen to ensure convergence of optimal CWD control strategies.

\subsection{State Space}

The state is represented by a one-dimensional array describing the current compartmental status of the epizootic. Specifically, a state $\mathbf{s}_t = (S, I, U, N) \in \mathcal{S}$ consists of the numbers of susceptible deer ($S$), infected deer ($I$), environmental pathogen load ($U$), and total deer population ($N$). These components collectively define the state space $\mathcal{S} \subseteq \mathbb{R}_{\geq 0}^{4}$.

\subsection{Action Space}

At the beginning of each year $T$, the agent selects an intervention action based on the current state. In practice, disease dynamics can be influenced by the annual hunting rate and the environmental pathogen load. Such strategies are commonly employed in CWD management programs \cite{AGFC_CWD}.

To avoid large fluctuations in the deer population, the agent operates in a continuous action space, constrained to adjust the hunting rate by at most $10\%$ each year $T$. The Texas Parks and Wildlife Department estimates that approximately $430,000$ to $500,000$ white-tailed deer are harvested annually in Texas \cite{TexasHarvestData}. Given an estimated statewide population of approximately $5.3$ million \cite{TPWDDeerPopulation}, the corresponding annual harvesting rate ranges from $8.1\%$ to $9.4\%$. Based on these estimates, we set the baseline and minimum harvesting rates to $10\%$ and $4\%$, respectively.

In addition, the environmental accumulation of prions contributes significantly to CWD transmission \cite{almberg2011modeling}. Management strategies that reduce prion loads in cervid hotspots, such as bait piles, backyard feeders, stored forage, and grain bins, can help reduce localized contamination and disease transmission \cite{wvjen}. However, a standardized method for quantifying CWD prions in environmental materials is currently unavailable \cite{yuan2022sensitive}. In this study, environmental decontamination is represented by a control level ranging from $0$ to $10$. At the beginning of each year $T$, the agent increases or decreases the decontamination level by $0.5$, thereby dynamically modifying the environmental pathogen load over time.

At each year $T$, the agent selects an action from the action space
\begin{equation}
    a_t = (a_1, a_2),\quad a_1 \in (-1,1),\ a_2 \in \{0,1\}.
\end{equation}
The continuous component $a_1$ does not directly set the hunting rate. Rather, it determines the year-over-year \emph{adjustment} applied to the existing rate
\begin{equation}\label{HuntingUpdate}
    \alpha_1(T) = \mathrm{clip}\Big(\alpha_1(T-1) + 0.10\times a_1,\ \alpha_1^{\min},\ \alpha_1^{\max}\Big),
\qquad \alpha_1^{\min}=4\%,\ \alpha_1^{\max}=19.31\%.
\end{equation}
Thus $a_1=1$ corresponds to the largest permitted annual increase of $10$ percentage points, and $a_1=-1$ the largest permitted decrease. The realized rate is never allowed to fall below the floor $\alpha_1^{\min}$. The ceiling $\alpha_1^{\max}$ is set equal to the extinction threshold identified in Section \ref{Sec:Model}. The agent is therefore never permitted to select a hunting rate that the deterministic stability analysis predicts would drive the population extinct.

\subsection{Reward Structure}

Our reward function incorporates both the hunting rate and the environmental decontamination level. During each year $T$, we track the variables $S$, $I$, $N$, and $U$ and compute their annual averages. The agent maximizes the following reward

\begin{equation}\label{Rfn}
\begin{split}
r(T)=&
\;\omega_1\left(
\frac{S_{(T,\mathrm{mean})}}{N_{(T,\mathrm{mean})}}
-
\frac{S_{(T-1,\mathrm{mean})}}{N_{(T-1,\mathrm{mean})}}
\right)
\\
&+\omega_2\left(
\frac{I_{(T-1,\mathrm{mean})}}{N_{(T-1,\mathrm{mean})}}
-
\frac{I_{(T,\mathrm{mean})}}{N_{(T,\mathrm{mean})}}
\right)
\\
&+\omega_3\left(
\frac{U_{(T-1,\mathrm{mean})}}{N_{(T-1,\mathrm{mean})}}
-
\frac{U_{(T,\mathrm{mean})}}{N_{(T,\mathrm{mean})}}
\right).
\end{split}
\end{equation}

The reward accounts for changes in the proportions of each compartment $(S, I, U)$. It assigns positive rewards to increases in the proportion of susceptible individuals, while penalizing higher levels of infection and environmental contamination. The weights $\omega_1$, $\omega_2$, and $\omega_3$ determine the relative importance of the corresponding components. This structure encourages policies that promote population recovery while reducing both disease prevalence and environmental pathogen load.

\subsection{Training Algorithm}

We use a Proximal Policy Optimization (PPO) algorithm to learn optimal intervention strategies \cite{schulman2017proximal}. It employs an actor network that learns a policy for selecting actions and a critic network that evaluates the quality of states. PPO improves training stability and sample efficiency. It optimizes a surrogate objective while limiting large policy updates.

In general, policy-gradient methods optimize a parametrized policy by estimating the policy gradient and applying stochastic gradient ascent. The most widely used gradient estimator is
\begin{equation}
\hat{g} = \hat{\mathbb{E}}_t \left[ \nabla_{\theta} \log \pi_{\theta}(a_t \mid s_t)\hat{A}_t \right],
\end{equation}
where $\pi_{\theta}$ denotes a stochastic policy and $\theta$ is the vector of weights and biases of the policy network. Here $\hat{A}_t$ is an estimator of the advantage function at timestep $t$. The expectation $\hat{\mathbb{E}}_t[\cdot]$ is the empirical average over a finite batch of samples, in an algorithm that alternates between sampling and optimization. The gradient estimator $\hat{g}$ is obtained by differentiating the objective
\begin{equation}
L^{PG}(\theta) = \hat{\mathbb{E}}_t \left[ \log \pi_{\theta}(a_t \mid s_t)\hat{A}_t \right].
\end{equation}

Updating $\theta$ with policy-gradient objectives can lead to excessively large policy updates. The Trust Region Policy Optimization (TRPO) approach addresses this by maximizing a surrogate objective subject to a constraint on the update size \cite{schulman2015trust}
\begin{equation}
\max_{\theta} \ \hat{\mathbb{E}}_t \left[
\frac{\pi_{\theta}(a_t \mid s_t)}
{\pi_{\theta_{\text{old}}}(a_t \mid s_t)}
\hat{A}_t
\right]
\end{equation}
subject to
\begin{equation}
\hat{\mathbb{E}}_t \left[
KL\left(
\pi_{\theta_{\text{old}}}(\cdot \mid s_t)\|
\pi_{\theta}(\cdot \mid s_t)
\right)
\right]
\leq \delta.
\end{equation}
Here $\theta_{\text{old}}$ is the vector of policy parameters before the most recent update. The Kullback-Leibler (KL) divergence measures the difference between two probability distributions $p(x)$ and $q(x)$. It is defined as
\begin{equation}
KL(p \,\|\, q)
=
\sum_x p(x)
\log\left(
\frac{p(x)}{q(x)}
\right)
\end{equation}
for discrete distributions, and
\begin{equation}
KL(p \,\|\, q)
=
\int p(x)
\log\left(
\frac{p(x)}{q(x)}
\right)\,dx
\end{equation}
for continuous distributions. PPO replaces the KL-divergence constraint in TRPO with a KL penalty. The resulting KL-penalized surrogate objective is
\begin{equation}
\max_{\theta} \ \hat{\mathbb{E}}_t \left[
\frac{\pi_{\theta}(a_t \mid s_t)}
{\pi_{\theta_{\text{old}}}(a_t \mid s_t)}
\hat{A}_t
- \beta \,
KL\left[
\pi_{\theta_{\text{old}}}(\cdot \mid s_t),
\pi_{\theta}(\cdot \mid s_t)
\right]
\right],
\end{equation}
where $\beta \in [0, \infty)$ controls the strength of the penalty. Larger values of $\beta$ impose a greater penalty on the divergence between $\pi_{\theta}$ and $\pi_{\theta_{\text{old}}}$.

Evaluating and enforcing KL-based constraints or penalties can increase computational complexity. PPO further simplifies the optimization by introducing a clipped surrogate objective that maintains training stability
\begin{equation}
L^{CLIP}(\theta)
=
\hat{\mathbb{E}}_t
\left[
\min
\left(
\rho_t(\theta)\hat{A}_t,
\;
\text{clip}\bigl(\rho_t(\theta), 1-\epsilon, 1+\epsilon\bigr)\hat{A}_t
\right)
\right],
\end{equation}
where $\epsilon \in (0, 1)$ is a hyperparameter and $\rho_t(\theta)=
\frac{
\pi_{\theta}(a_t \mid s_t)
}{
\pi_{\theta_{\text{old}}}(a_t \mid s_t)
}$ is the probability ratio between the current and old policies. The first term inside the minimum is the unclipped surrogate objective. The clip function restricts $\rho_t(\theta)$ to $[1-\epsilon,1+\epsilon]$, bounding the objective between $(1-\epsilon)\hat{A}_t$ and $(1+\epsilon)\hat{A}_t$. Taking the minimum yields a lower bound on the unclipped objective. This prevents large policy updates without the overhead of explicit KL evaluation.

Figure \ref{fig:PPO} illustrates the DRL training workflow. The simulation and RL environment are initialized by specifying the total and initial infected populations, the initial environmental load, the simulation duration, and the Long Short-Term Memory (LSTM) sequence length. For all training runs, we assume a carrying capacity of $K=10,000$ individuals, an initial deer population of $1,000$, one initially infected individual, and an initial environmental load of $1$. The agent's memory sequence length is set to $128$ to capture sufficient historical context.

Once initialized, the simulation advances one year. At the end of a step, the compartmental states are passed to the DRL agent, which selects an intervention for the following year. The simulation then advances another year and returns updated statistics. The agent's performance is evaluated using the reward function in Equation \ref{Rfn}. This loop repeats until the final timestep.

\begin{figure}[H]
    \centering
    \subfigure{\includegraphics[width=15cm]{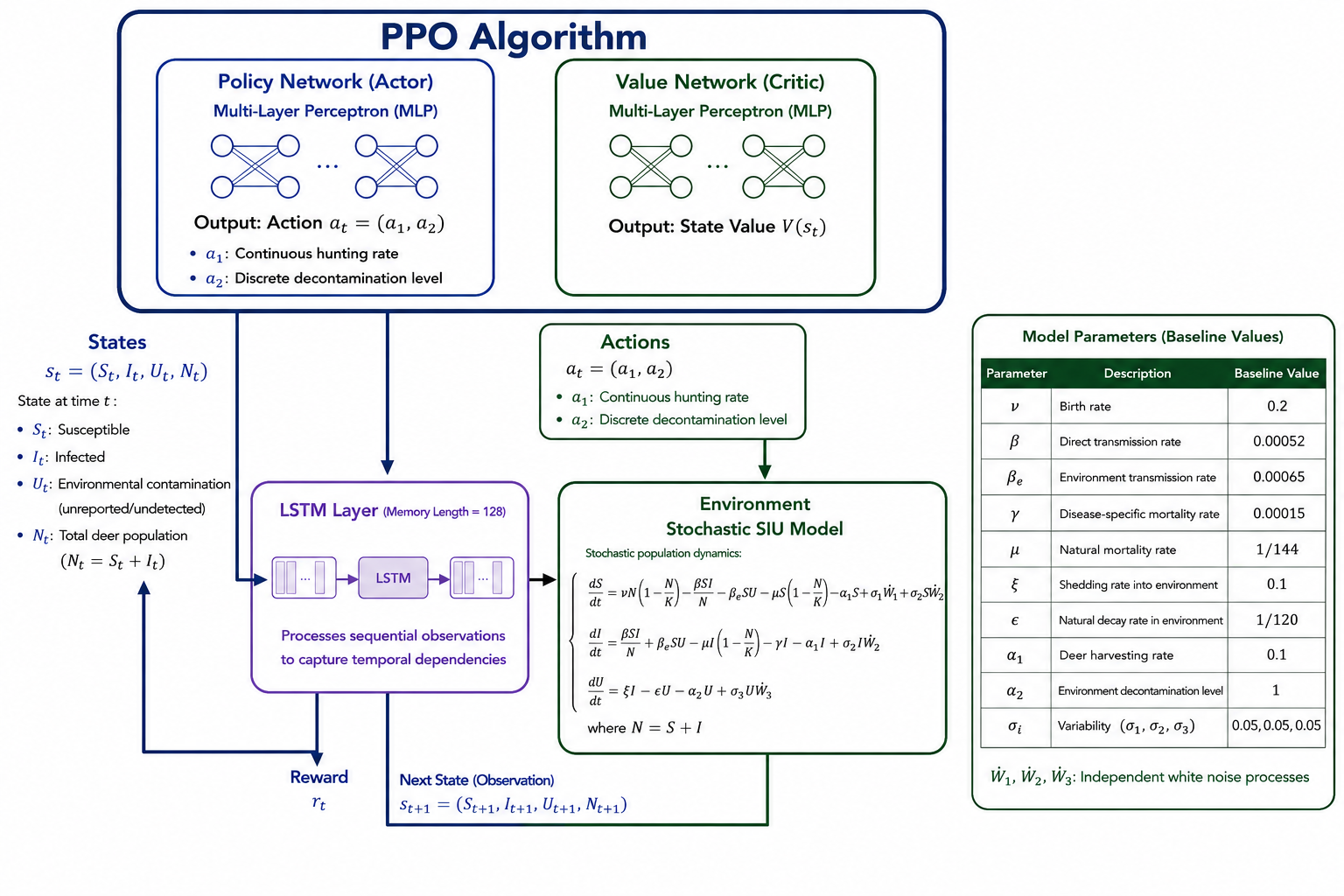}}
    \caption{DRL framework for CWD control}
    \label{fig:PPO}
\end{figure}

During training, we use the \texttt{EvalCallback} function in Stable-Baselines3 to select the best model by highest mean evaluation reward. At a predefined frequency, \texttt{EvalCallback} runs a fixed number of episodes, computes the mean reward, and saves the model whenever this mean exceeds the previous best. The saved model therefore reflects the highest reward achieved during training. This model is then used to predict $S$, $I$, and $U$ over a fixed time period. The agent uses a learning rate of $3 \times 10^{-4}$, discount factor $\gamma=0.99$, Generalized Advantage Estimation (GAE) parameter $\lambda=0.95$, PPO clipping parameter $\epsilon =0.2$, batch size of 128, rollout length of 128 steps, and 10 optimization epochs per update.

Training consists of approximately $10,000$ episodes, each a $200$-year simulation. The policy was evaluated every 200 time steps. The learned policy is then evaluated in a single environment for the deterministic scenario and in 30 independent environments for the stochastic scenario. From these results, we determine whether the trained agent can control disease transmission, maintain population persistence, and achieve high cumulative rewards.

\section{Results of RL Control}\label{Sec:Res}
We evaluate two modeling approaches, a deterministic model and a stochastic model. For each, we consider three intervention strategies: adjusting the hunting rate, implementing environmental decontamination, or applying both. We also investigate the impact of different initial conditions, and how varying the reward weights, which reflect different decision-maker priorities, shapes the resulting strategies. By comparing results, we show how changes in the reward function yield qualitatively distinct strategies tailored to specific management objectives. All results in this section are obtained from a single independent testing phase.

\subsection{Deterministic Optimal Control}
In this section, we compare the dynamics of $S$, $I$, and $U$ under different strategies before and after intervention.

\textbf{Scenario 1: Optimal Hunting Rates}

Here the RL agent adjusts the hunting rate while the decontamination rate $\alpha_2$ is fixed at $1$. Figure \ref{fig:DE1_SIUcompare} shows the effect on disease dynamics. Panels (a), (b), and (c) compare the uncontrolled epizootic (gray curves) to the hunting-rate policy (colored curves) under reward weights $(\omega_1, \omega_2, \omega_3) = (5, 5, 2)$. Under intervention, the infected population $I$ remains very low. The uncontrolled epizootic, by contrast, peaks above $3,000$ individuals (Figure \ref{fig:DE1_SIUcompare} (b)). The environmental pathogen load $U$ stays below $10$ units under intervention. In the uncontrolled scenario, $U$ rises above $300$ units and persists at an elevated level (Figure \ref{fig:DE1_SIUcompare} (c)). The hunting rates applied during testing are shown in Figure \ref{fig:DE1_SIUcompare} (d). The optimal policy increases the hunting rate to a maximum of $0.193$ over the following century. Since hunting is the only available control here, the agent learns that increasing harvest intensity is the most effective way to reduce disease prevalence.

\begin{figure}[H]
   \centering
    \subfigure[]{\includegraphics[width=7cm]{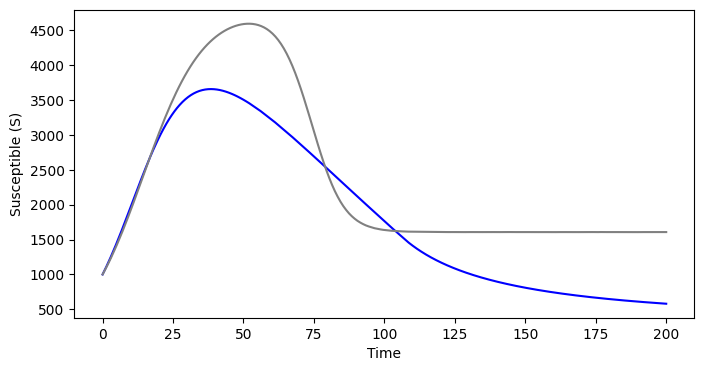}}
    \subfigure[]{\includegraphics[width=7cm]{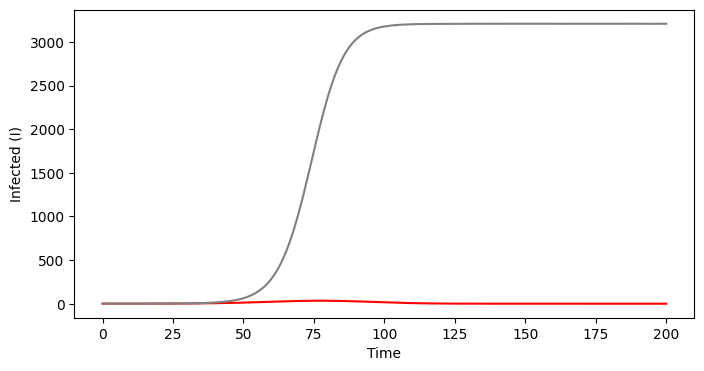}}
    \subfigure[]{\includegraphics[width=7cm]{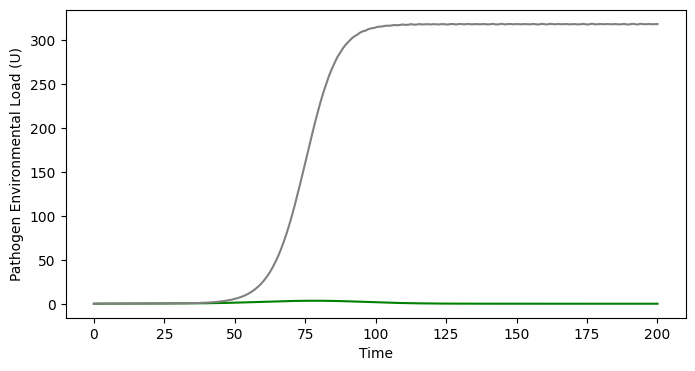}}
    \subfigure[]{\includegraphics[width=7cm]{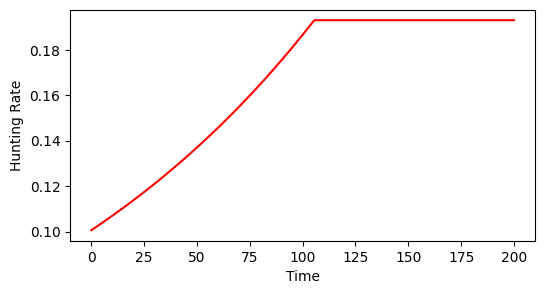}} \caption{(a) Comparison of uncontrolled (gray) and controlled (colored) dynamics under the optimal hunting rate policy with $(\omega_1, \omega_2, \omega_3) = (5, 5, 2)$. (a) Susceptible population $S$ with uncontrolled (grey) vs.\ intervention (blue). (b) Infected population $I$ with uncontrolled (grey) vs.\ intervention (red). (c) Environmental pathogen load $U$ with uncontrolled (grey) vs.\ intervention (green). (d) Hunting rates applied during the testing phase.}
    \label{fig:DE1_SIUcompare}
\end{figure}

Table \ref{tab:hunting_results} compares the mean predicted population before and after intervention. The policy reduces the infected population and environmental pathogen load by more than $99\%$, so targeted culling suppresses both individual-level infection and environmental contamination. However, it also reduces the total and susceptible populations, by $57.83\%$ and $22.31\%$ respectively.

\begin{table}[htbp]
\centering
\small
\caption{Mean population characteristics before and after intervention under the hunting-rate adjustment policy with reward weights $(5,5,2)$.}
\label{tab:hunting_results}

\begin{tabular}{p{0.32\linewidth}ccc}
\hline
\textbf{Population} &
\textbf{Pre-} &
\textbf{Post-} &
\textbf{Change} \\
\textbf{Characteristic} &
\textbf{intervention} &
\textbf{intervention} &
\textbf{(\%)} \\
\hline
Total population ($N$)
& 4402.5235 & 1856.2682 & $-57.83$ \\

Susceptible population ($S$)
& 2378.5226 & 1848.0202 & $-22.31$ \\

Infected population ($I$)
& 2024.0009 & 8.2479 & $-99.59$ \\

Environmental pathogen load ($U$)
& 199.1911 & 0.8180 & $-99.59$ \\
\hline
\end{tabular}
\end{table}

\textbf{Scenario 2: Optimal Environmental Decontamination}

Here the hunting rate $\alpha_1$ is fixed at $10\%$, and the RL agent adjusts the decontamination rate. It starts from an initial value of $1$ and increments or decrements by $0.5$ each year $T$. Reward weights are $(\omega_1, \omega_2, \omega_3) = (5, 5, 2)$, as in Scenario 1. Figure \ref{fig:DE2_SIUcompare} shows the evolution of $S$, $I$, and $U$ before and after intervention. Panels (a), (b), and (c) display each compartment, and panel (d) the decontamination policy. The agent consistently selects the maximum decontamination rate of $10$ and tends to increase effort in the next term. This suggests that reducing the environmental pathogen load is one of the most effective ways to prevent transmission. Since CWD prions can survive in the environment for extended periods, contaminated environments act as long-term reservoirs of infection.

\begin{figure}[H]
    \centering
    \subfigure[]{\includegraphics[width=7cm]{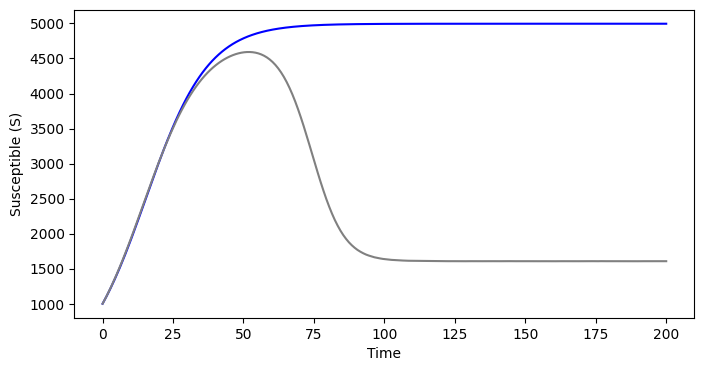}}
    \subfigure[]{\includegraphics[width=7cm]{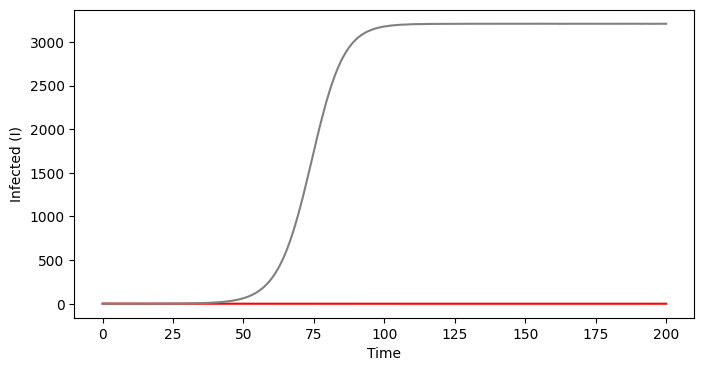}}
    \subfigure[]{\includegraphics[width=7cm]{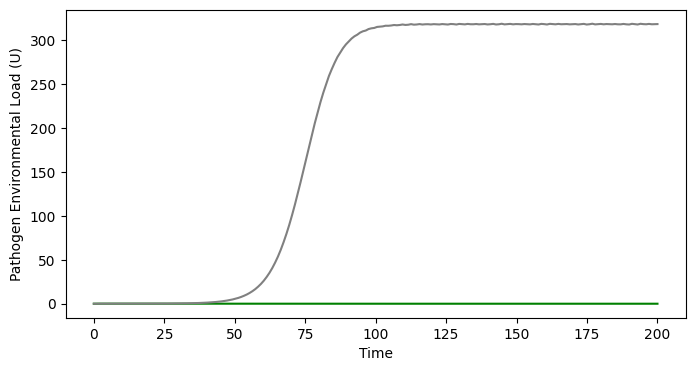}}
    \subfigure[]{\includegraphics[width=7cm]{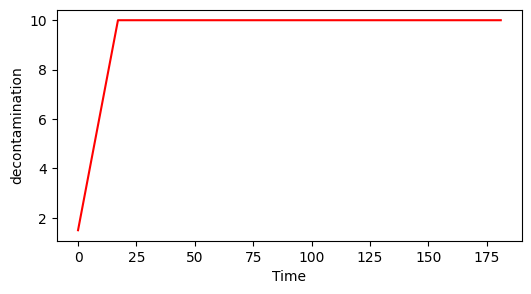}}
    \caption{(a) Comparison of uncontrolled (gray) and controlled (colored) dynamics under the optimal environmental decontamination policy with $(\omega_1, \omega_2, \omega_3) = (5, 5, 2)$. (a) Susceptible population $S$ with uncontrolled (grey) vs.\ intervention (blue). (b) Infected population $I$ with uncontrolled (grey) vs.\ intervention (red). (c) Environmental pathogen load $U$ with uncontrolled (grey) vs.\ intervention (green). (d) Decontamination rates were applied during the testing phase.}
    \label{fig:DE2_SIUcompare}
\end{figure}

Table \ref{tab:env_intervention} compares the mean population values before and after intervention. The susceptible population nearly doubles, the infected population is almost entirely eliminated, and the environmental pathogen load is reduced to near zero.

\begin{table}[htbp]
\centering
\small
\caption{Mean predicted values of population variables before and after intervention under the environmental decontamination policy with reward weights $(\omega_1,\omega_2,\omega_3)=(5,5,2)$.}
\label{tab:env_intervention}

\begin{tabular}{p{0.32\linewidth}ccc}
\hline
\textbf{Population} &
\textbf{Pre-} &
\textbf{Post-} &
\textbf{Change} \\
\textbf{Characteristic} &
\textbf{intervention} &
\textbf{intervention} &
\textbf{(\%)} \\
\hline

Total population ($N$)
& 4402.52 & 4551.79 & $+3$ \\

Susceptible population ($S$)
& 2378.52 & 4551.72 & $+91$ \\

Infected population ($I$)
& 2024.00 & 0.07 & $-100$ \\

Environmental pathogen load ($U$)
& 199.19 & 0.002 & $-100$ \\
\hline
\end{tabular}
\end{table}

\textbf{Scenario 3: Optimal Hunting with Environmental Decontamination}

When both $\alpha_1$ and $\alpha_2$ are adjustable, the RL agent updates both each year $T$. As shown in Figure \ref{fig:DE3_SIUcompare} (a) and (b), the susceptible population stays relatively high while the infected population is kept low.

The corresponding control actions are shown in Figure \ref{fig:DE3_SIUcompare} (c) and (d). The agent gradually reduces the hunting rate, reaching a minimum after about $30$ years. Meanwhile it rapidly increases the decontamination rate to its maximum of $10$ and sustains it thereafter. This suggests that early, aggressive environmental intervention can quickly suppress the prion load, and that sustained high-level decontamination maintains control over the long term.

\begin{figure}[H]
    \centering
    \subfigure[]{\includegraphics[width=6.6cm]{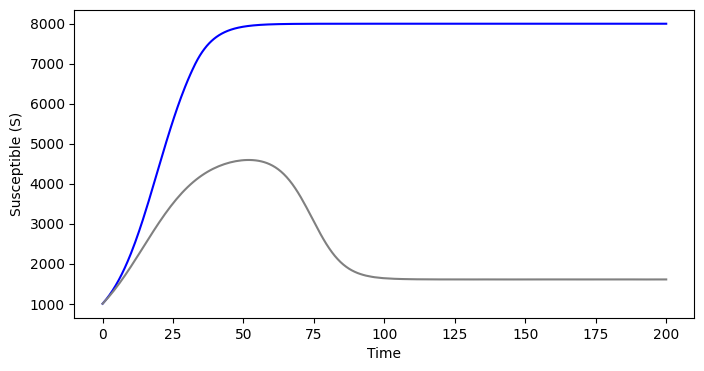}}
    \subfigure[]{\includegraphics[width=6.6cm]{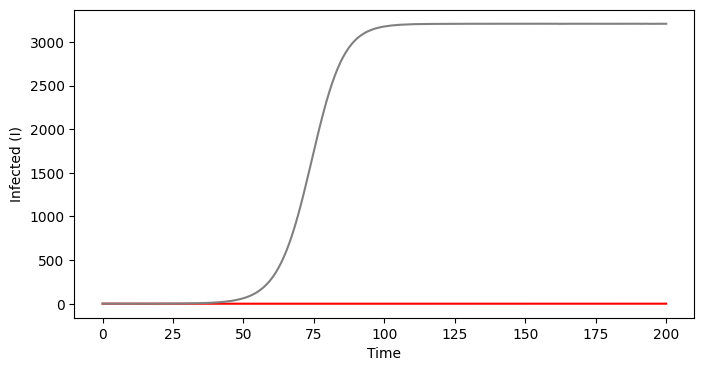}}
    \subfigure[]
    {\includegraphics[width=6.8cm]{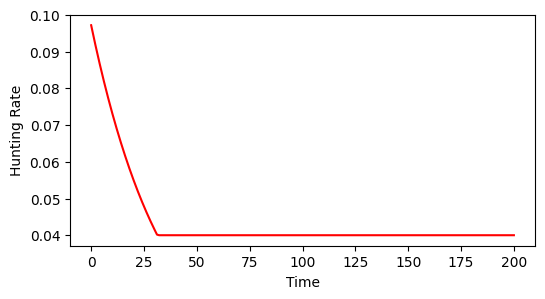}}
    \subfigure[]{\includegraphics[width=6.6cm]{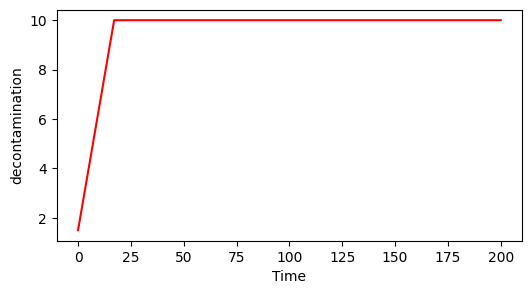}}
    \caption{(a) Comparison of uncontrolled (gray) and controlled (colored) dynamics under the combined hunting and environmental decontamination policy with $(\omega_1, \omega_2, \omega_3) = (5, 5, 2)$. (a) Susceptible population $S$ with uncontrolled (grey) vs.\ intervention (blue). (b) Infected population $I$ with uncontrolled (grey) vs.\ intervention (red). (c) Hunting rates applied during the testing phase. (d) Decontamination rates were applied during the testing phase.}
    \label{fig:DE3_SIUcompare}
\end{figure}

Table \ref{tab:intervention_comparison} compares outcomes across the three strategies under reward weights $(\omega_1, \omega_2, \omega_3) = (5, 5, 2)$. The hunting-only strategy reduces the infected population mainly through aggressive removal, but also greatly reduces the total population. Its apparent success in reducing infection thus comes at the cost of population loss. The decontamination strategy is highly effective at reducing infection and pathogen load, but yields only limited recovery of the total population. The combined strategy gives the strongest outcomes. It increases the total population by more than $60\%$ and the susceptible population by $208\%$, while still achieving substantial reductions in infection and contamination. It provides the most favorable trade-off between disease suppression and population persistence.

\begin{table}[htbp]
\centering
\footnotesize
\caption{Mean predicted values of population variables before and after intervention under the combined hunting and environmental decontamination policy with reward weights $(\omega_1, \omega_2, \omega_3) = (5, 5, 2)$.}
\label{tab:intervention_comparison}

\begin{tabular}{p{0.22\linewidth}ccccccc}
\hline
\textbf{Population} &
\textbf{No} &
\textbf{Hunting} &
\textbf{Change} &
\textbf{Environmental} &
\textbf{Change} &
\textbf{Combined} &
\textbf{Change} \\
\textbf{Characteristic} &
\textbf{Intervention} &
&
\textbf{(\%)} &
\textbf{Decontamination} &
\textbf{(\%)} &
&
\textbf{(\%)} \\
\hline

Total population ($N$)
& 4402.52 & 1856.27 & $-58$
& 4451.79 & $+3$
& 7412.42 & $+68$ \\

Susceptible population ($S$)
& 2378.52 & 1848.02 & $-22$
& 4451.72 & $+91$
& 7329.82 & $+208$ \\

Infected population ($I$)
& 2024.00 & 8.25 & $-100$
& 0.07 & $-100$
& 82.60 & $-96$ \\

Environmental pathogen load ($U$)
& 199.19 & 0.82 & $-100$
& 0.002 & $-100$
& 1.37 & $-96$ \\
\hline
\end{tabular}
\end{table}

\subsubsection{Initial Condition Analysis}

To test sensitivity to initial conditions, we use three initial hunting rates ($0.01$, $0.05$, and $0.15$) and vary the initial environmental pathogen load. Table \ref{tab:hunting_effects} summarizes the effect of these initial hunting rates relative to the uncontrolled baseline. Following the combined intervention, total population size increases by $76$-$86\%$, driven mainly by a $225$-$230\%$ rise in the susceptible population. The infected population declines by $70$-$86\%$ and the environmental pathogen load by $74$-$99\%$. Disease control is therefore robust across all initial conditions tested.

\begin{table}[htbp]
\centering
\footnotesize
\caption{Effect of the initial hunting rate on population dynamics and disease outcomes under the combined intervention policy.}
\label{tab:hunting_effects}

\begin{tabular}{p{0.22\linewidth}ccccccc}
\hline
\textbf{Population} &
\textbf{No} &
\textbf{Hunting} &
\textbf{Change} &
\textbf{Hunting} &
\textbf{Change} &
\textbf{Hunting} &
\textbf{Change} \\
\textbf{Characteristic} &
\textbf{Intervention} &
\textbf{Rate = 0.01} &
\textbf{(\%)} &
\textbf{Rate = 0.05} &
\textbf{(\%)} &
\textbf{Rate = 0.15} &
\textbf{(\%)} \\
\hline

Total population ($N$)
& 4639.85 & 8605.60 & $+85$
& 8184.26 & $+76$
& 8637.22 & $+86$ \\

Susceptible population ($S$)
& 2418.92 & 7954.82 & $+229$
& 7870.52 & $+225$
& 7971.78 & $+230$ \\

Infected population ($I$)
& 2220.94 & 650.78 & $-71$
& 313.74 & $-86$
& 665.44 & $-70$ \\

Environmental pathogen load ($U$)
& 221.44 & 58.62 & $-74$
& 3.28 & $-99$
& 21.35 & $-90$ \\
\hline
\end{tabular}
\end{table}

Table \ref{tab:env_effects} summarizes the effect of varying the initial environmental decontamination rate relative to the uncontrolled baseline.

\begin{table}[htbp]
\centering
\footnotesize
\caption{Effect of the initial environmental decontamination rate on population dynamics and disease outcomes under the combined intervention policy.}
\label{tab:env_effects}

\begin{tabular}{p{0.22\linewidth}ccccccc}
\hline
\textbf{Population} &
\textbf{No} &
\textbf{Environmental} &
\textbf{Change} &
\textbf{Environmental} &
\textbf{Change} &
\textbf{Environmental} &
\textbf{Change} \\
\textbf{Characteristic} &
\textbf{Intervention} &
\textbf{Rate = 0.001} &
\textbf{(\%)} &
\textbf{Rate = 5} &
\textbf{(\%)} &
\textbf{Rate = 10} &
\textbf{(\%)} \\
\hline

Total population ($N$)
& 4639.85 & 8897.99 & $+92$
& 8503.57 & $+83$
& 7950.61 & $+71$ \\

Susceptible population ($S$)
& 2418.92 & 8018.64 & $+231$
& 8278.00 & $+242$
& 7812.14 & $+223$ \\

Infected population ($I$)
& 2220.94 & 879.36 & $-60$
& 225.57 & $-90$
& 138.46 & $-94$ \\

Environmental pathogen load ($U$)
& 221.44 & 42.71 & $-81$
& 2.24 & $-99$
& 1.37 & $-99$ \\
\hline
\end{tabular}
\end{table}

We also compare performance across weightings. Increases in total and susceptible populations are similar across configurations. The largest reduction in the infected population is achieved under weights $(\omega_1, \omega_2, \omega_3) = (5, 5, 2)$. Reward weighting therefore meaningfully influences infection control, and different settings yield distinct epidemiological outcomes and environmental risk profiles.

Table \ref{tab:population_weights} presents the predicted mean values of $N$, $S$, $I$, and $U$ under different reward weight configurations. Across all schemes, $N$ and $S$ show similar proportional increases. The infected population $I$ shows its largest reduction under weights $(5, 5, 2)$, so this configuration most effectively suppresses prevalence. Together, these results confirm that reward weighting drives qualitatively distinct outcomes, with each configuration producing characteristic patterns of disease dynamics and environmental risk.

\begin{table}[htbp]
\centering
\footnotesize
\caption{Predicted mean population variables and percentage changes under different reward weight configurations.}
\label{tab:population_weights}

\begin{tabular}{p{0.20\linewidth}ccccccccc}
\hline
\textbf{Population} &
\textbf{No} &
\textbf{Weight} &
\textbf{Change} &
\textbf{Weight} &
\textbf{Change} &
\textbf{Weight} &
\textbf{Change} &
\textbf{Weight} &
\textbf{Change} \\
\textbf{Characteristic} &
\textbf{Intervention} &
\textbf{$(5,5,2)$} &
\textbf{(\%)} &
\textbf{$(5,5,5)$} &
\textbf{(\%)} &
\textbf{$(10,5,2)$} &
\textbf{(\%)} &
\textbf{$(5,10,2)$} &
\textbf{(\%)} \\
\hline

Total population ($N$)
& 4402.52 & 8970.52 & $+104$
& 8991.74 & $+104$
& 8906.45 & $+102$
& 9039.84 & $+105$ \\

Susceptible population ($S$)
& 2378.52 & 8511.54 & $+258$
& 8426.92 & $+254$
& 7757.68 & $+226$
& 8421.43 & $+254$ \\

Infected population ($I$)
& 2024.00 & 458.98 & $-77$
& 564.82 & $-72$
& 1148.77 & $-43$
& 618.41 & $-69$ \\

Environmental pathogen load ($U$)
& 199.19 & 5.88 & $-97$
& 13.21 & $-93$
& 941.98 & $+373$
& 8.81 & $-96$ \\
\hline
\end{tabular}
\end{table}

\subsection{Stochastic Optimal Control}
When both $\alpha_1$ and $\alpha_2$ are adjustable under stochastic dynamics, the control actions applied during testing are shown in Figure \ref{fig:SDE_SIandaction}. The agent gradually reduces the hunting rate, reaching minimal levels after about $25$ years and sustaining them thereafter. Meanwhile, it rapidly increases the decontamination rate at the outset and maintains it at its maximum for the remainder of the period.

This strategy is broadly consistent with the deterministic policy in Figure \ref{fig:DE3_SIUcompare} (c) and (d), but differs in one respect. Under stochasticity, the agent holds a higher hunting rate for a shorter initial period before switching to minimal culling, while sustaining maximum decontamination throughout. In the presence of environmental noise, the agent thus compensates for reduced predictability by front-loading hunting effort.

\begin{figure}[H]
    \centering
    \subfigure[]{\includegraphics[width=7cm]{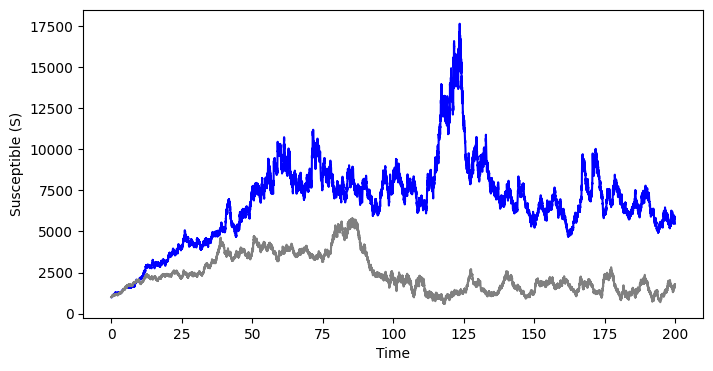}}
    \subfigure[]{\includegraphics[width=7cm]{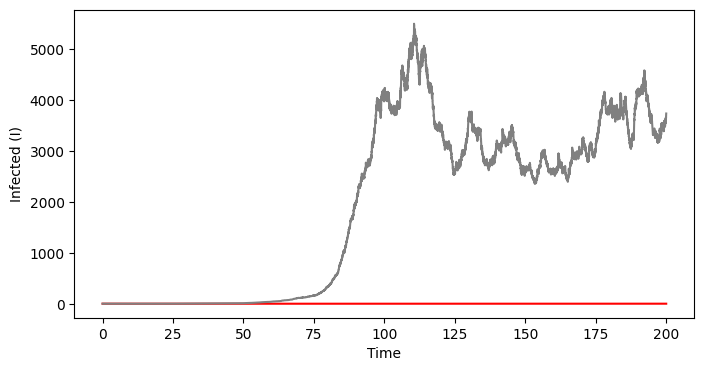}}
    \subfigure[]
    {\includegraphics[width=7cm]{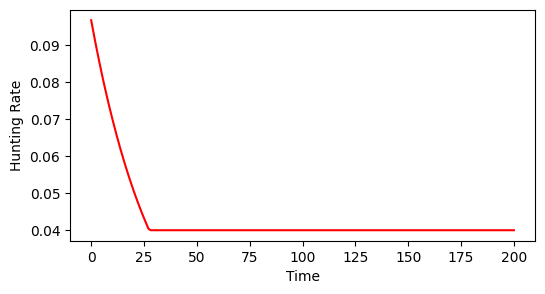}}
    \subfigure[]
    {\includegraphics[width=7cm]{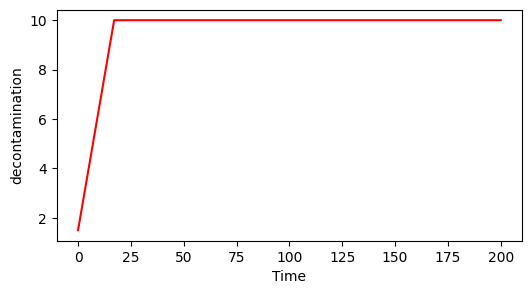}}    
    \caption{Comparison of uncontrolled (gray) and controlled (colored) dynamics under the combined hunting and environmental decontamination policy in the stochastic setting. (a) Susceptible population $S$ with uncontrolled (gray) vs.\ intervention (blue). (b) Infected population $I$ with uncontrolled (gray) vs.\ intervention (red). (c) Hunting rates applied during the testing phase. (d) Decontamination rates were applied during the testing phase.}
    \label{fig:SDE_SIandaction}
\end{figure}

\subsubsection{Robustness Analysis}
We now examine the effects of environmental fluctuations by varying the noise intensities and comparing outcomes in low- and high-contagion settings.

\textbf{Impact of Uncertainty.} A key concern for control policies in stochastic environments is robustness to uncertainty. To assess reliability, we apply the optimal policy across $30$ independent realizations with $\sigma_1 = \sigma_2 = \sigma_3 = 0.05$. As shown in Figure \ref{fig:boxplot} (b), the infected population stays very low in most runs, with the median near the lower quartile. Several outliers extend upward. Two cases show hundreds of infections, and three show outbreaks with mean infection counts above $1,000$. Stochastic variability can therefore substantially disrupt disease dynamics, leading to occasional failures of the control strategy.

To assess robustness across noise levels, we vary $\sigma_1, \sigma_2$, and $\sigma_3$ simultaneously. At $\sigma_1 = \sigma_2 = \sigma_3 = 0.03$, outbreaks are less frequent (Figure \ref{fig:boxplot}(a)). At $\sigma_1 = \sigma_2 = \sigma_3 = 0.07$, the infected population distribution becomes considerably more dispersed (Figure \ref{fig:boxplot}(c)), confirming that higher noise undermines control. At $\sigma_1 = \sigma_2 = \sigma_3 = 0.1$, large fluctuations and systematic instability emerge, highlighting the limits of the framework under high uncertainty. This raises the question of which reinforcement learning algorithms or robust control frameworks are best suited to environments with substantial stochastic perturbations.

\begin{figure}[H]
    \centering
    \subfigure[]{\includegraphics[width=5 cm]
    {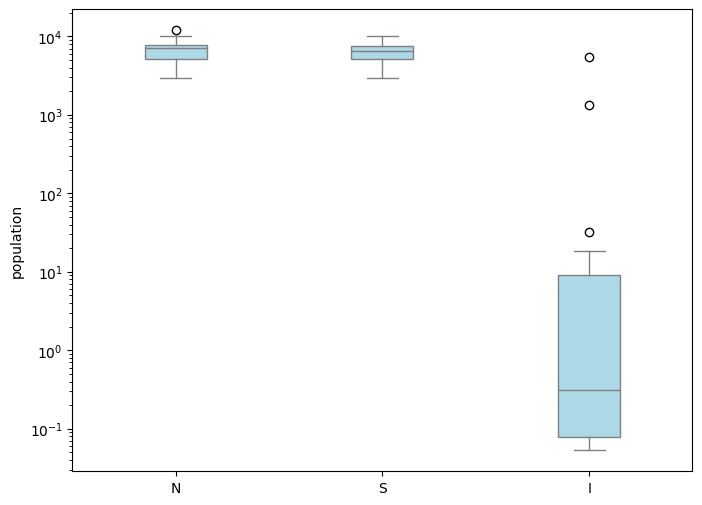}}
    \subfigure[]{\includegraphics[width=5 cm]
    {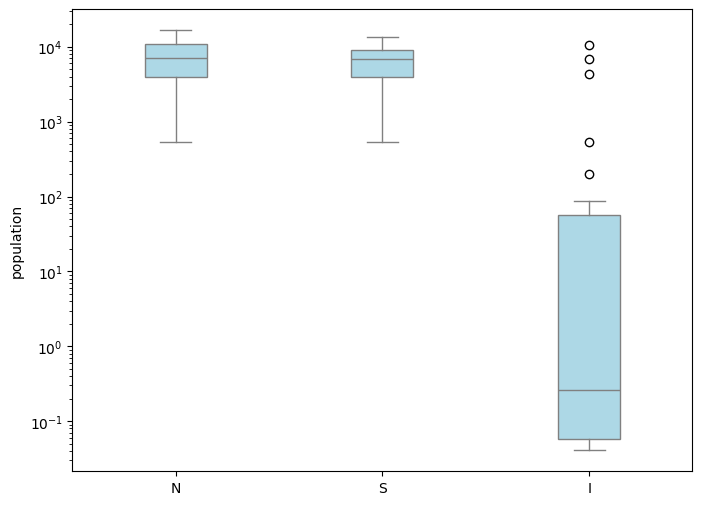}}
    \subfigure[]{\includegraphics[width=5 cm]{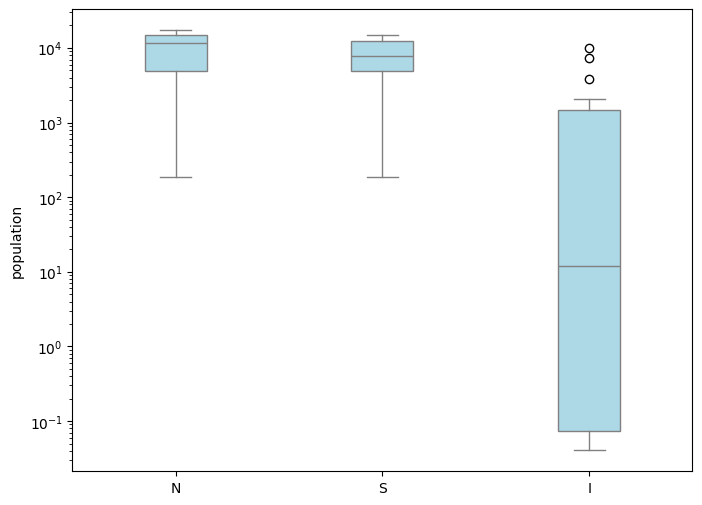}}      
    \caption{Box plots of total population $N$, susceptible population $S$, and infected population $I$ across $30$ independent stochastic realizations under the optimal combined intervention policy. (a) $\sigma_1 = \sigma_2 = \sigma_3 = 0.03$. (b) $\sigma_1 = \sigma_2 = \sigma_3 = 0.05$. (c) $\sigma_1 = \sigma_2 = \sigma_3 = 0.07$.}
    \label{fig:boxplot}
\end{figure}

\textbf{Impact of the transmission rates on policies.}
The basic reproduction number $R_0$ for CWD varies considerably, as it depends on model structure, deer species, and local environmental conditions \cite{delamater2019complexity}. Previous analyses of CWD in Alberta estimated $R_0$ for mule deer at between $2.2$ and $4.5$ during the initial phase of the epizootic \cite{Potapov2015}. Other work suggests $R_0$ may be much higher where environmental transmission occurs, since pathogens persisting in reservoirs can cause new infections even after the original host has died \cite{wallinga2007generation}. Reported values as high as $10$-$11$ underscore the role of indirect transmission and long-term prion contamination in the establishment and persistence of disease \cite{almberg2011modeling}.

Against this range, we study the policies learned under two epidemiological scenarios, defined by contrasting direct and environmental transmission rates. The first is a low-contagion environment with $\beta = 0.000052$ and $\beta_e = 0.0001$, yielding $R_{0} < 1$. Here, the disease poses a limited threat and is expected to fade out over time. The second is a highly contagious environment with $\beta = 0.001$ and $\beta_{e} = 0.003$, yielding $R_{0} \approx 22$, where intense transmission threatens population persistence. Comparing the two cases demonstrates the framework's robustness and adaptability across varying levels of disease severity and stochasticity.

\textbf{Case 1: Low-Contagion Disease}

When the reproduction number is below $1$, each case fails to replace itself, so the infected population declines and the disease eventually dies out. In the stochastic setting, random fluctuations can alter this deterministic prediction and let the disease persist longer than expected. For instance, random variation in birth, death, or environmental degradation rates can create temporary conditions favorable to transmission, sustaining infection beyond the deterministic forecast.

The experimental results support this. Figure \ref{fig:SDElowcontangious} shows the evolution of infections under different environmental control policies, while a similar harvesting strategy is maintained. As the hunting rate decreases from $10\%$ to $4\%$ across testing environments, the outcomes differ. When the decontamination level is held at $1$, the disease is eradicated (Figure \ref{fig:SDElowcontangious} (a)-(c)). As the environmental degradation rate increases to $2$ (Figure \ref{fig:SDElowcontangious} (d)-(f)) and then to $3$ (Figure \ref{fig:SDElowcontangious} (g)-(i)), the policy fails to control the infection.

\begin{figure}[H]
    \centering
    \subfigure[]{\includegraphics[width=4.3 cm]{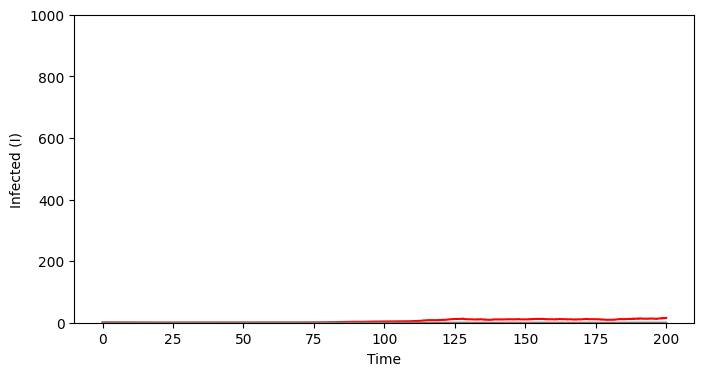}}
    \subfigure[]{\includegraphics[width=4.3 cm]{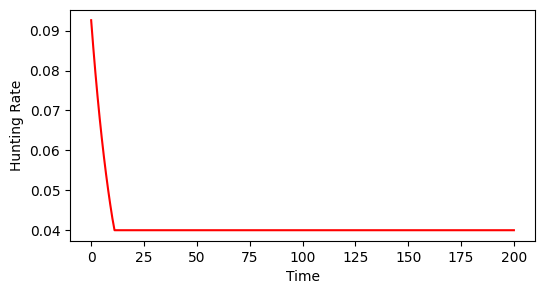}}  
    \subfigure[]{\includegraphics[width=4.2 cm]{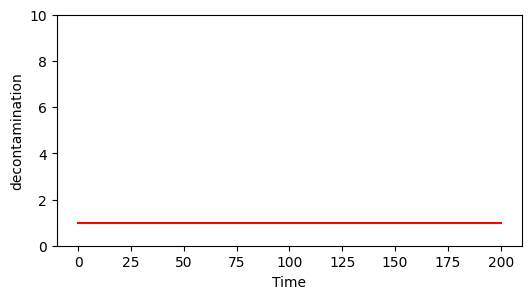}} 
    \subfigure[]{\includegraphics[width=4.3 cm]{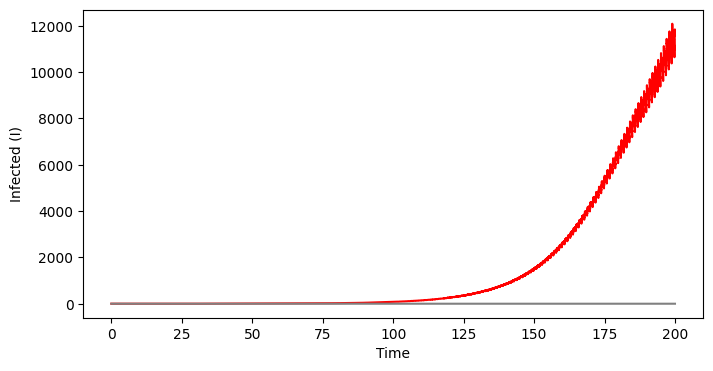}}
    \subfigure[]{\includegraphics[width=4.3 cm]{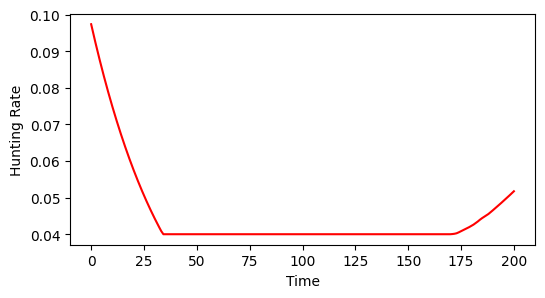}}  
    \subfigure[]{\includegraphics[width=4.2 cm]{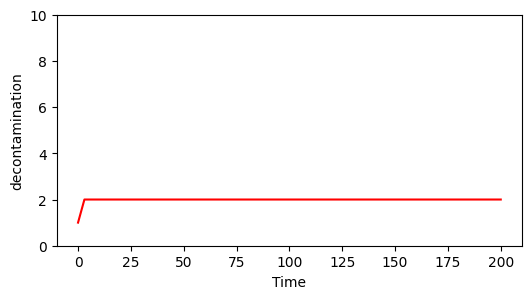}} 
    \subfigure[]{\includegraphics[width=4.3 cm]{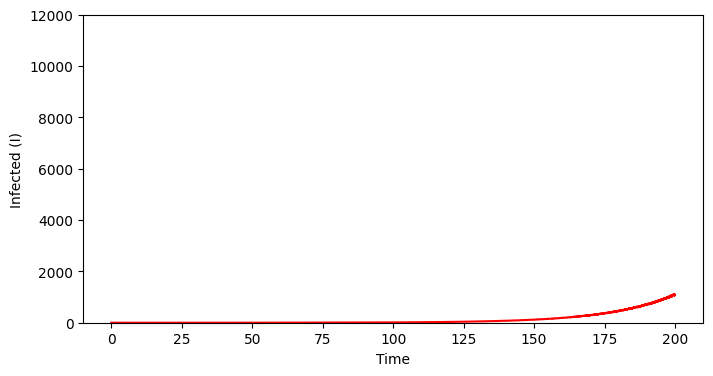}}
    \subfigure[]{\includegraphics[width=4.3 cm]{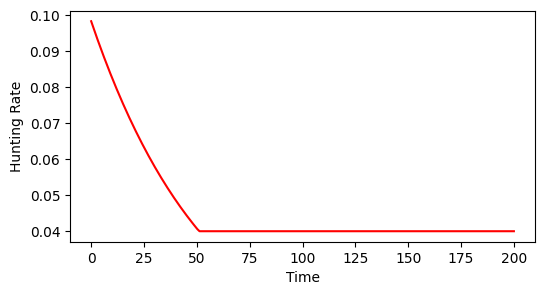}}
    \subfigure[]{\includegraphics[width=4.3 cm]{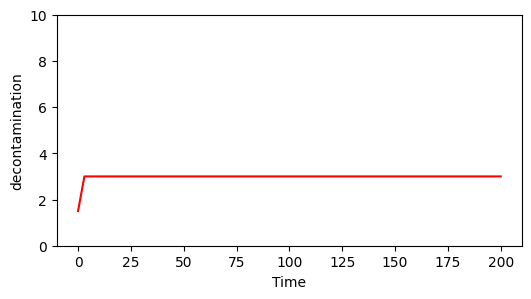}} 
    \caption{Panels (a), (d), and (g) show the evolution of the infected population $I$; panels (b), (e), and (h) show the corresponding harvesting rates; and panels (c), (f), and (i) show the environmental decontamination rates, which increase and stabilize at $1$, $2$, and $3$, respectively.}
    \label{fig:SDElowcontangious}
\end{figure}

\textbf{Case 2: High-Contagion Disease}

When the disease is highly contagious ($R_0 = 22$), it persists over long periods despite control efforts. Our experiments yield several observations. As shown in Figure \ref{fig:SDEhighcontagious} (a)-(c), adjusting the hunting rate without a sufficiently clean environment fails to suppress the infection, since the environmental load keeps sustaining transmission even under intensified harvesting. Holding the decontamination rate at its maximum of $10$ while gradually reducing the hunting rate to $4\%$ likewise fails to halt the disease (Figure \ref{fig:SDEhighcontagious} (d)-(f)). When the hunting rate is instead raised to its maximum, the decontamination rate oscillates, alternately improving and declining. This strategy achieves better control but still cannot eradicate the disease (Figure \ref{fig:SDEhighcontagious} (g)-(i)).

\begin{figure}[H]
    \centering
    \subfigure[]{\includegraphics[width=4.3 cm]{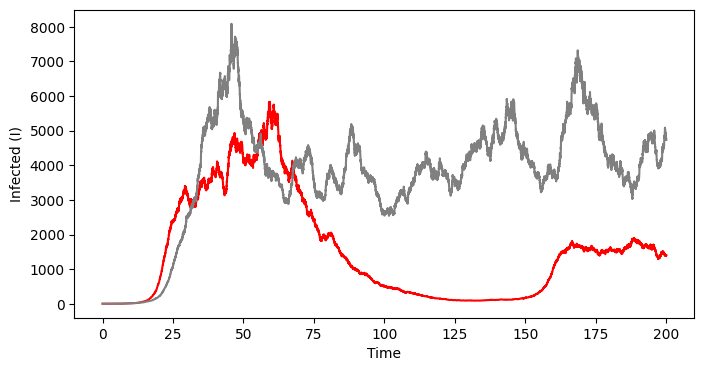}}
    \subfigure[]{\includegraphics[width=4.3 cm]{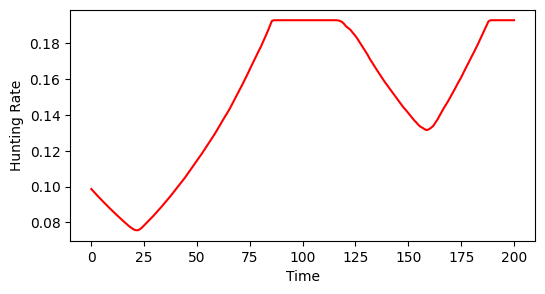}}  
    \subfigure[]{\includegraphics[width=4.3 cm]{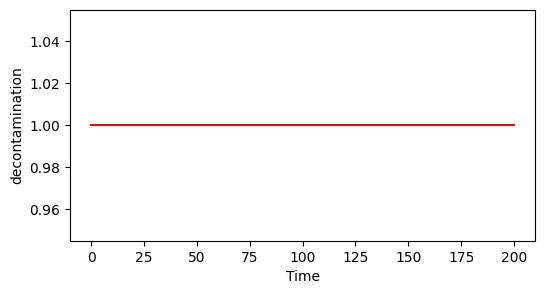}}
    \subfigure[]{\includegraphics[width=4.3 cm]{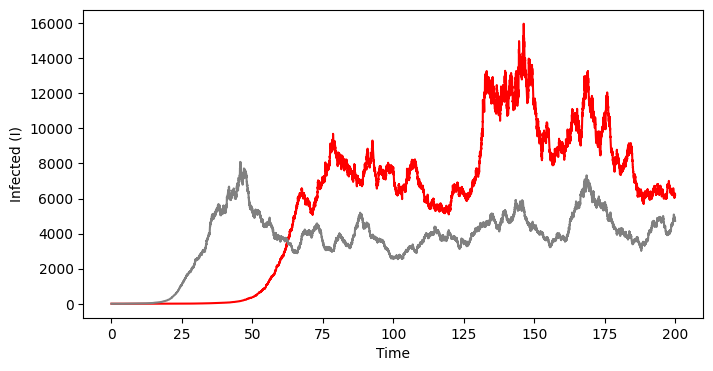}} 
    \subfigure[]{\includegraphics[width=4.3 cm]{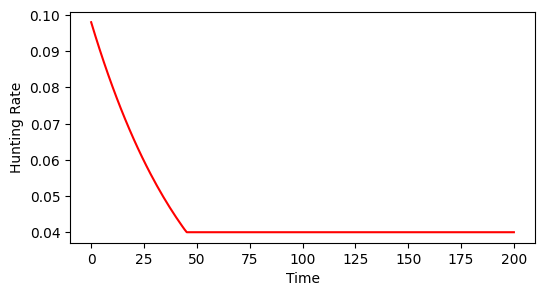}}  
    \subfigure[]{\includegraphics[width=4.3 cm]{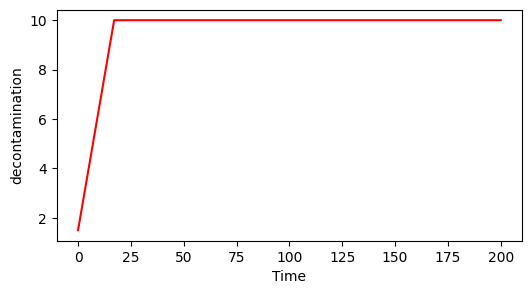}}
    \subfigure[]{\includegraphics[width=4.3 cm]{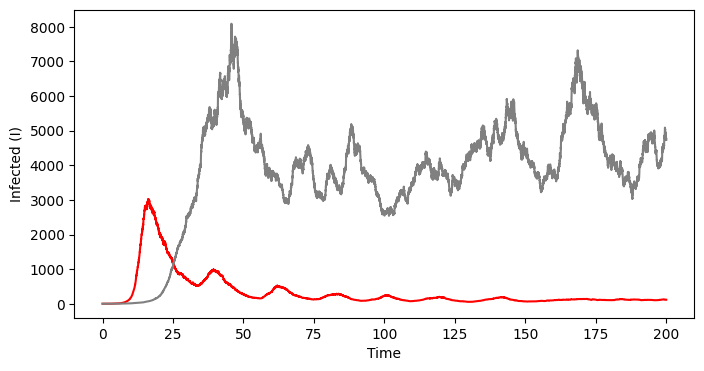}} 
    \subfigure[]{\includegraphics[width=4.3 cm]{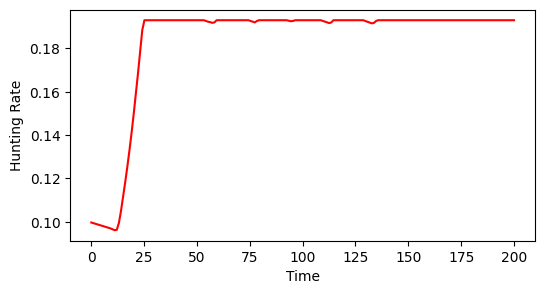}}  
    \subfigure[]{\includegraphics[width=4.3 cm]{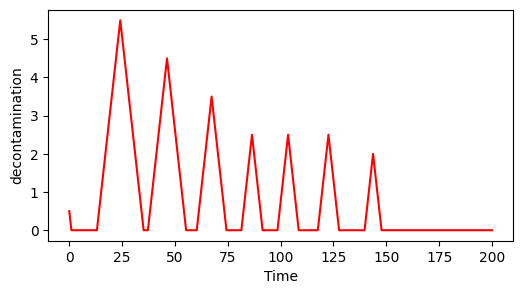}}
    \caption{Panels (a), (d), and (g) show the evolution of the infected population $I$ during the testing phase, before intervention (gray) and after intervention (red); panels (b), (e), and (h) show the corresponding harvesting rates; and panels (c), (f), and (i) show the corresponding decontamination rates.}
    \label{fig:SDEhighcontagious}
\end{figure}

The results show that, for a highly transmissible disease, increasing the hunting rate is effective only when combined with a high level of decontamination (Figure \ref{fig:highcontagiousnew}). This underscores the central role of environmental management in the effectiveness of population-based interventions, especially for highly infectious diseases. Across the three scenarios, the agent recommends different strategies depending on the conditions, but it consistently raises the decontamination rate and holds it high throughout. This reinforces the leading role of environmental management in reducing disease persistence.

\begin{figure}[H]
    \centering
    \subfigure[]{\includegraphics[width=7 cm]{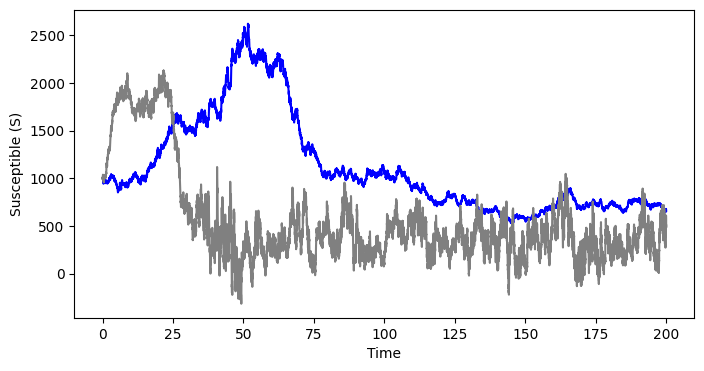}}
    \subfigure[]{\includegraphics[width=7 cm]{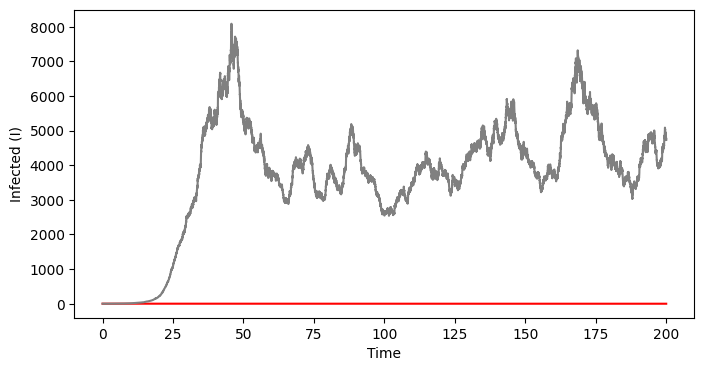}}  
    \subfigure[]{\includegraphics[width=7 cm]
    {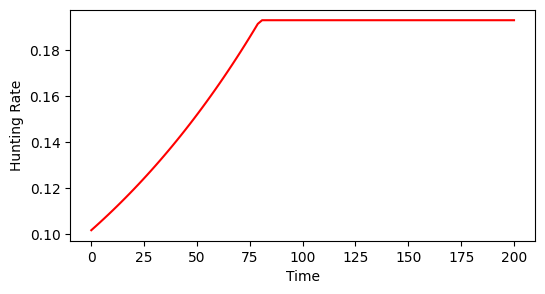}}
    \subfigure[]{\includegraphics[width=7 cm]
    {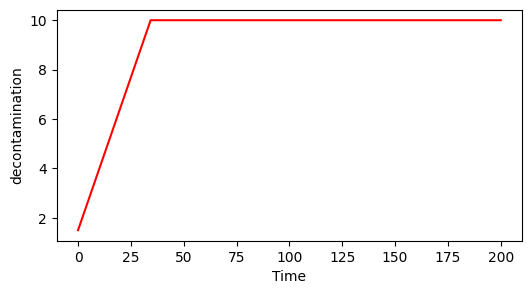}}
    \caption{Panels (a) and (b) show the evolution of the susceptible (blue) and infected (red) populations during a successful testing phase under intervention, compared with the non-intervention baseline (gray); panels (c) and (d) show the corresponding harvesting rates and environmental decontamination rates.}
    \label{fig:highcontagiousnew}
\end{figure}

\textbf{Impact on Effective Reproduction Number.}

We examine the effective reproduction number $R_e(t)$ under different strategies. Figure \ref{fig:RCWD} (a)-(c) shows its evolution under three policies: increasing the hunting rate alone, increasing the decontamination level alone, and decreasing the hunting rate while increasing decontamination (Figure \ref{fig:DE1_SIUcompare}, \ref{fig:DE2_SIUcompare}, \ref{fig:DE3_SIUcompare}). The hunting-only strategy gradually reduces $R_e(t)$. The decontamination-only strategy produces a smaller reduction and stays nearly constant. The combined policy yields a larger $R_e(t)$ (Figure \ref{fig:RCWD} (c)), yet earlier simulations show the disease remains controlled, since this strategy reduces both the infected population and environmental contamination (Figure \ref{fig:DE3_SIUcompare}). Thus $R_e(t)$ alone may not fully capture long-term dynamics under adaptive management and population-reduction strategies.

\begin{figure}[H]
    \centering
    \subfigure[]{\includegraphics[width=5 cm]{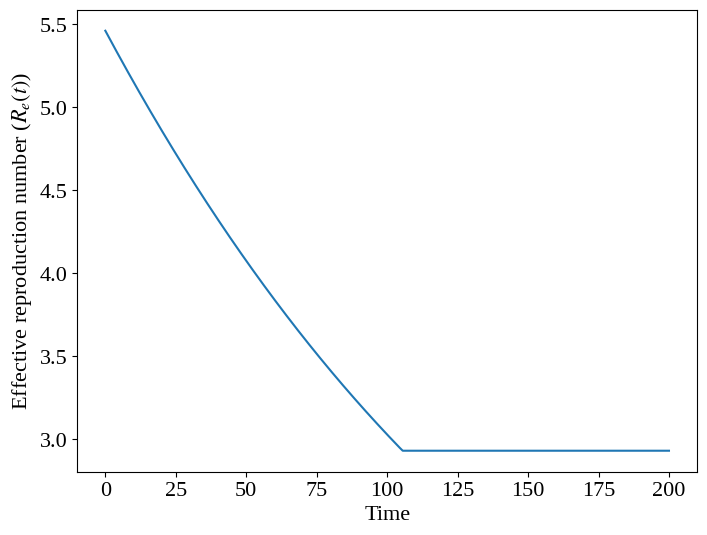}}
    \subfigure[]{\includegraphics[width=5 cm]{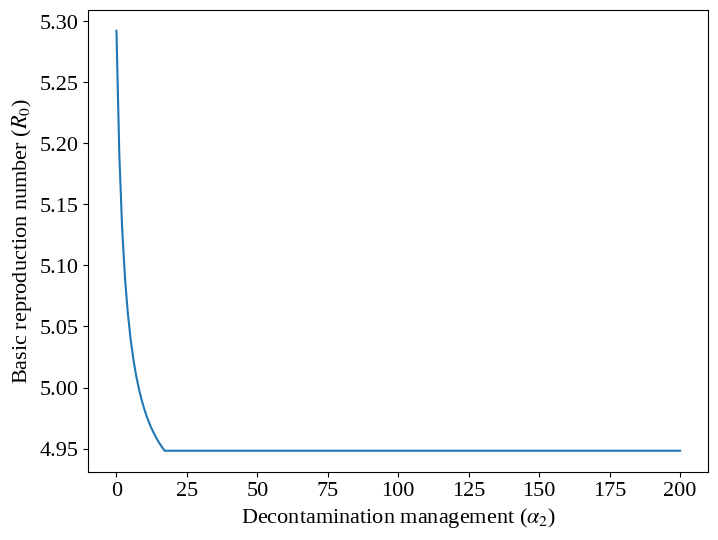}}  
     \subfigure[]{\includegraphics[width=5 cm]{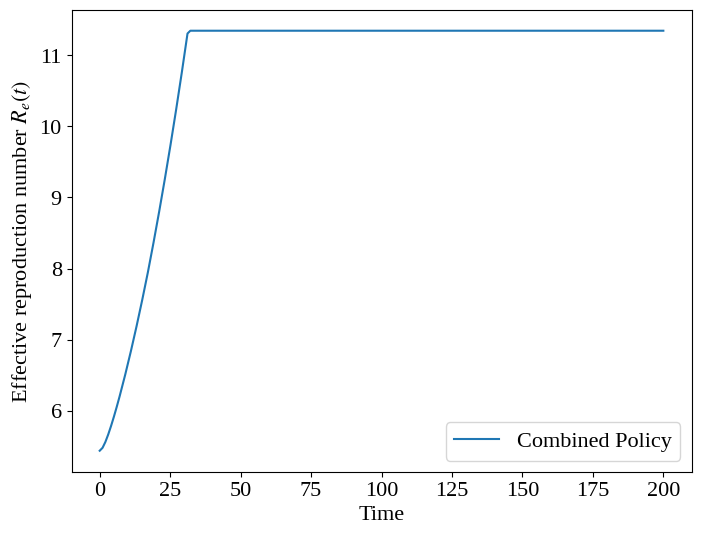}}  
    \caption{Evolution of the effective reproduction number $R_e(t)$ under three intervention strategies: (a) increasing the hunting rate only; (b) increasing the environmental decontamination level only; and (c) decreasing the hunting rate while increasing the decontamination level.}
    \label{fig:RCWD}
\end{figure}

Figure \ref{fig:highcontagiousR} shows $R_e(t)$ under two strategies when the disease initially has high transmission potential (cf. Figure \ref{fig:SDEhighcontagious} (e), (f) and \ref{fig:highcontagiousnew} (c), (d)). Increasing decontamination while reducing the hunting rate causes $R_e(t)$ to rise rapidly and stay high (Figure \ref{fig:highcontagiousR} (a)), yet the disease is controlled (Figure \ref{fig:SDEhighcontagious} (e), (f)). Increasing both hunting and decontamination substantially lowers $R_e(t)$ (Figure \ref{fig:highcontagiousR} (b) and \ref{fig:highcontagiousnew} (c), (d)), so this strategy is more effective against highly contagious outbreaks.

\begin{figure}[H]
    \centering
    \subfigure[]{\includegraphics[width=6 cm]{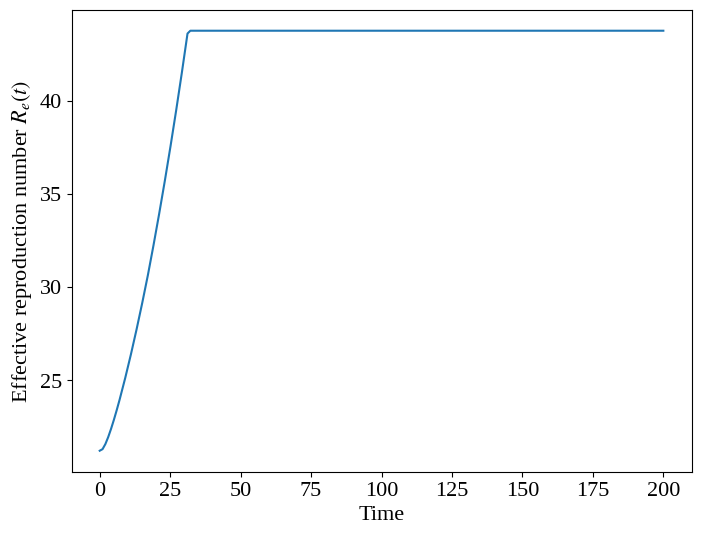}}
    \subfigure[]{\includegraphics[width=6 cm]{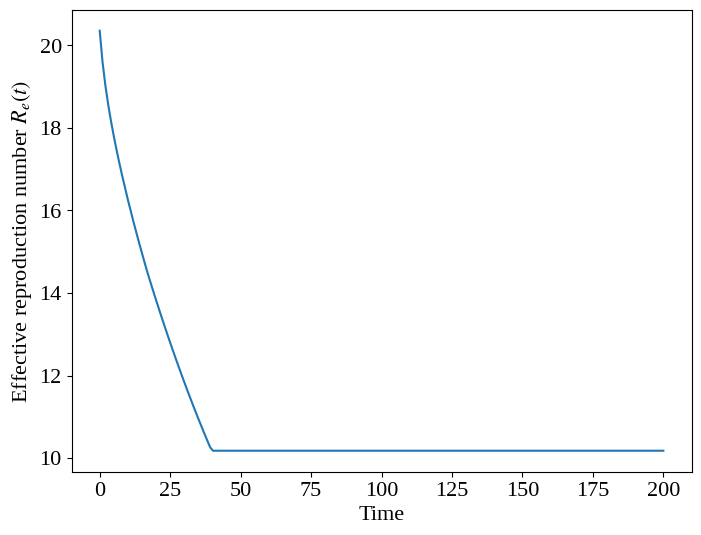}}  
    \caption{Evolution of the effective reproduction number $R_e(t)$ under two intervention strategies: (a) decreasing the hunting rate while increasing the environmental decontamination level; and (b) increasing both the hunting rate and the environmental decontamination level.}
    \label{fig:highcontagiousR}
\end{figure}

\section{Discussion}\label{Sec:Disc}
In this paper, we develop a stochastic ecological model that incorporates environmental and demographic stochasticity into disease dynamics. The framework evaluates three management strategies for controlling CWD, hunting, environmental decontamination, and a combination of the two, in both deterministic and stochastic settings, examines their performance across different transmission intensities, and demonstrates the robustness and adaptability of the learned policies. Because the model couples adaptive, learned decision-making with environmental transmission, it can also be extended to other infectious disease systems in which an environmental reservoir sustains transmission.

Beyond the control results, the analysis of the model itself helps explain why environmental management emerges as the dominant lever. The basic reproduction number decomposes into a direct-transmission contribution and an environmental contribution, $R_0 = R_{\mathrm{dir}} + R_{\mathrm{env}}$, and across most of the parameter space the environmental route dominates, $R_{\mathrm{env}} \gg R_{\mathrm{dir}}$, because the slow intrinsic disease dynamics render direct transmission relatively inefficient. Consequently, the $R_0=1$ threshold is far more responsive to changes in the decontamination rate $\alpha_2$ than to the hunting rate $\alpha_1$, which is consistent with the reinforcement learning agent's repeated preference for raising and sustaining decontamination. The deterministic analysis further shows that the disease-free and endemic equilibria exchange stability through a forward (supercritical) transcritical bifurcation at $R_0=1$, with no backward branch, so that hunting can drive infection to extinction before threatening population persistence, but only up to the demographic threshold $\alpha_1=\nu-\mu\approx 0.1931$, beyond which the same control that suppresses the disease extirpates the host. This threshold directly motivates the ceiling $\alpha_1^{\max}$ imposed on the agent's action space. For the reflected stochastic system, we established well-posedness and positivity and showed that the normal reflection at $\{S=0\}$ leaves both the disease-free invariant measure and the transverse infection linearization unchanged, so that the diffusion-corrected proxy $\mathcal{R}_0^s$ and the top Lyapunov exponent governing disease invasion coincide for the reflected and unreflected models.

CWD prions are exceptionally difficult to eliminate from the environment, and residual infectivity in contaminated sites appears to be an important factor in sustaining epizootics \cite{miller2004chronic}. Unlike conventional pathogens, prions resist virtually all standard inactivation methods, including fire, ionizing radiation, chemical disinfectants, and autoclaving, each of which can reduce but not fully eliminate prion infectivity \cite{trapotsis2016prion}. Prions remain infectious in external environments for years, so contaminated soil acts as a long-term reservoir that likely sustains CWD incidence in free-ranging cervids and complicates eradication in captive herds; soil therefore represents a plausible transmission route for deer and sheep \cite{miller2004environmental,almberg2011modeling}.

The challenge of environmental decontamination is compounded by the impracticality of conventional best practices such as incineration and autoclaving for most field applications, and by the limited number of effective alternatives reported. Several promising mitigation strategies have nonetheless emerged. Degradation of pathogenic prion protein by a synthetic analog of birnessite, a naturally occurring manganese oxide mineral (MnO\textsubscript{2}) common in soils, appears promising for environmental remediation: upon exposure to MnO\textsubscript{2}, prion levels decreased by at least four orders of magnitude, indicating that soil mineral composition can attenuate prions in some environments \cite{russo2009pathogenic}.

Soil humic acids have similarly been shown to degrade CWD prions and reduce infectivity levels \cite{kuznetsova2018soil}. At the farm scale, treating CWD-contaminated soil with 2 N sodium hydroxide (NaOH) achieved a greater than 100-fold reduction in prion infectivity, making it a practical candidate for field use \cite{kang2019sodium}. The situation differs substantially between wild and farmed cervid contexts. In wild populations, environmental remediation of CWD-contaminated landscapes is essentially impractical due to their scale, and once prions are established in soil, elimination is not currently achievable. Farmed operations, by contrast, are smaller, bounded, managed spaces that permit more cost-feasible and systematic decontamination once CWD is detected in the farm herd.

Regulatory agencies in both the United States and Canada have developed structured response protocols for CWD-positive farms. The U.S. Animal and Plant Health Inspection Service (APHIS) herd management plans for CWD-positive premises may require whole-herd depopulation, mandatory cervid-free periods following removal of positive or exposed animals, fencing requirements, restrictions on the movement of contaminated equipment, and premises cleaning and disinfection \cite{aphis2023program}. In Canada, Canadian Food Inspection Agency (CFIA)-validated decontamination of a CWD-affected farm in Ontario involves physically removing at least five centimeters of soil from high-traffic areas and replacing it with at least ten centimeters of clean soil sourced from sites that have never housed cervids \cite{morgan2023gslr}. All equipment, materials, and areas classified as minimally contaminated are also cleaned and disinfected. Hard surfaces and equipment are treated chemically with sodium hydroxide or sodium hypochlorite, while reusable surgical and handling instruments are sterilized by autoclaving at elevated temperatures and pressure \cite{trapotsis2016prion, mcdonnell2003challenge}. Farm decontamination also addresses one of the most consequential transmission pathways: the movement of CWD prions from high-density farmed populations to adjacent wild cervids across fence lines. Combined with biosecurity measures, decontamination substantially reduces this farm-to-wild transmission route and lowers the risk of spillover to free-ranging deer herds \cite{kincheloe2021chronic}. Although these strategies reflect current regulatory and field-based practice, determining the optimal timing, intensity, and combination of interventions across larger landscapes remains a complex management challenge. We therefore adopt a computational framework to evaluate how adaptive decision-making, integrating both population control and environmental pathogen management, can be optimized to minimize the spread of CWD and its long-term ecological consequences.

In the deterministic setting, the DRL agent learns effective strategies and, when compared against current practice, reveals that more effective disease-control actions were available. The three single- and combined-strategy comparisons expose a clear trade-off. Hunting alone can bring the outbreak under control, but only at the cost of the total population, which declines substantially over the subsequent 200 years. Decontamination alone can likewise control the disease, but requires sustaining a consistently high decontamination rate, whose labor, equipment, chemical, and long-term operating costs may limit its sustainability. Only the combined policy resolves this tension: the total cervid population more than doubles while infection prevalence and environmental contamination fall to near-zero levels, indicating that pairing population control with environmental decontamination is essential for both sustained disease control and long-term ecosystem recovery. The decision-maker's priorities, encoded in the reward weights, shape these outcomes; emphasizing environmental conditions leads the agent to enforce a cleaner environment, thereby reducing infection.

Under stochastic dynamics, the agent converges to a consistent policy, a gradual reduction in the hunting rate combined with a rapid increase and sustained high level of environmental decontamination, which contains the disease in roughly $80\%$ of $30$ independent realizations, with about $10\%$ producing large outbreaks. This underscores the substantial influence of environmental randomness on both disease dynamics and the effectiveness of control. The stochastic policy is broadly consistent with its deterministic counterpart but differs in one notable respect: under environmental noise, the agent front-loads hunting effort, maintaining a higher hunting rate for a shorter initial period before switching to minimal culling while sustaining maximal decontamination throughout, thereby compensating for reduced predictability. More broadly, whereas earlier CWD modeling has largely relied on population-level regression and compartment models, and CWD decision-making as a form of stochasticity has not been well integrated into these approaches, the present framework couples adaptive, learned decision-making with both environmental and demographic stochasticity within a single model. Examining low- and high-contagion regimes reinforces the primacy of environmental management: even under a fixed harvesting strategy, different decontamination levels are required to achieve eradication, and when the disease is highly contagious, raising the hunting rate while maintaining a high decontamination level proves more effective than reducing hunting effort. Across all scenarios, the agent consistently raises and sustains environmental decontamination, confirming its central role in reducing disease persistence.

Our examination of the effective reproduction number $R_e(t)$ highlights a subtlety in interpreting adaptive control. Under the combined policy, $R_e(t)$ can remain comparatively high, or even rise, while the disease is nonetheless brought under control, because the policy simultaneously reduces the infected population and the environmental contamination that would otherwise sustain transmission. In the highly contagious regime, decreasing hunting while increasing decontamination drives $R_e(t)$ upward yet still controls the disease, whereas increasing both hunting and decontamination substantially lowers $R_e(t)$. This indicates that $R_e(t)$ alone may not fully capture long-term disease dynamics under adaptive management and population-reduction strategies, and that control assessments should track the joint trajectory of the infected population and the environmental reservoir rather than the reproduction number in isolation.

By explicitly incorporating stochasticity and parameter variability, our framework shows potential for application to a broader class of infectious disease models beyond the one considered here. The various aspects of the resulting policies examined in this study may help guide future work on training similar systems in both deterministic and stochastic settings.

CWD management entails both direct and indirect costs, according to the National Institutes of Health. Direct expenses include monitoring and testing, which require funding for sample collection, laboratory analysis, and specialized equipment, as well as management activities such as culling infected or at-risk animals, with associated costs for personnel, equipment, and carcass disposal. Implementing and enforcing related regulations demands further resources, and personnel time for wildlife managers, researchers, and support staff accounts for a substantial share of overall expenditure. Indirectly, CWD can harm hunting tourism, reducing license sales and associated revenue for state and local economies. The cost of environmental decontamination, however, is not well established and varies considerably depending on the specific situation and decontamination approach. Reported costs include personnel time, sample processing, travel, equipment, management actions such as sharpshooting or culling, regulatory enforcement, outreach, veterinary expenses, and potential impacts on the hunting and farmed cervid industries \cite{chiavacci2022economic}. Furthermore, environmental decontamination is not always recommended as a primary strategy for CWD control because prions can persist in the environment for many years, making complete removal extremely difficult. Decontamination efforts are also often costly and challenging to implement across large geographic areas, and some chemical treatments effective against prions may harm ecosystems and non-target species. Incorporating these economic and practical considerations into the model is a promising direction for future work.

Another limitation of the proposed framework is that the reliability of the learned policy decreases as environmental noise increases. As $\sigma_1$, $\sigma_2$, and $\sigma_3$ increase from $0.03$ to $0.07$, disease control becomes more challenging, suggesting reduced policy robustness under stronger stochastic perturbations. Future work will investigate reinforcement learning algorithms and robust control frameworks better suited to highly stochastic environments.

A further limitation is that the model is non-spatial, treating the cervid population and its environmental reservoir as well-mixed, whereas CWD spreads geographically and concentrates at hotspots frequented by deer for water, food, or shelter; incorporating spatial structure and localized contamination is a natural extension. In addition, several epidemiological inputs remain uncertain; there is currently no universally accepted value for the environmental shedding rate $\xi$, and the environmental decay and demographic rates vary with species, age, climate, and habitat, so the calibrated values should be regarded as plausible baselines rather than precise estimates.

\section*{Declaration of generative AI and AI-assisted technologies in the manuscript preparation process}
During the preparation of this work, the author(s) used Claude and Grammarly to edit and correct the English language. After using these tools, the author(s) reviewed and edited the content as needed and took full responsibility for the content of the published article.
\newpage
\appendix
\section*{Appendix}
\section{The reflected model and its well-posedness}
\label{sec:reflected}

\subsection{The CWD stochastic model}
On the closed domain $D=\{(S,I,U):\,S\ge 0,\ I\ge 0,\ U\ge 0\}$ with $N=S+I$, the reflected process $(S,I,U)$ and its boundary regulator $L$ solve
\begin{subequations}\label{eq:reflecteda}
\begin{align}
\dif S &= f_{S}\,\dif t
        + \sigma_{1} N\,\dif W_{1}
        + \sigma_{2} S\,\dif W_{2}
        + \dif L, \label{eq:reflected-Sa}\\
\dif I &= f_{I}\,\dif t
        + \sigma_{2} I\,\dif W_{2}, \label{eq:reflected-Ia}\\
\dif U &= f_{U}\,\dif t
        + \sigma_{3} U\,\dif W_{3}, \label{eq:reflected-Ua}
\end{align}
\end{subequations}
with drifts
\begin{align*}
f_{S} &= \nu N\left(1-\frac{N}{K}\right)
        - \frac{\beta S I}{N} - \beta_{e} S U
        - \mu S\left(1-\frac{N}{K}\right) - \alpha_{1} S,\\
f_{I} &= \frac{\beta S I}{N} + \beta_{e} S U
        - \mu I\left(1-\frac{N}{K}\right) - \gamma I - \alpha_{1} I,\\
f_{U} &= \xi I - (\epsilon+\alpha_{2}) U,
\end{align*}
where $W_{1},W_{2},W_{3}$ are independent standard Brownian motions and $L$ is a continuous, nondecreasing, adapted process representing the local time of $S$ at $0$ to guarantee that $S\geq 0$ and such that
\begin{equation}\label{eq:localtimea}
L_{0}=0,\qquad \dif L_{t}\ge 0,\qquad
\int_{0}^{t}\indic_{\{S_{s}>0\}}\,\dif L_{s}=0 .
\end{equation}
The reflection direction is the inward normal $+\mathbf{e}_{S}$, so $L$ acts only when $S=0$ and injects the least susceptible flux consistent with $S\geq 0$. In the $(N,I,U)$ chart the regulator $L$ enters through $N=I+S$ alone,
\begin{equation}\label{eq:reflected-N}
\dif N=\Big[(\nu-\mu)N\big(1-\frac{N}{K}\big)-\alpha_{1}N-\gamma I\Big]\dif t +\sigma_{1}N\,\dif W_{1}+\sigma_{2}N\,\dif W_{2}+\dif L .
\end{equation}
With $S\ge 0$ enforced, the incidence terms $\beta SI/N$ and $\beta_{e}SU$ are genuine nonnegative fluxes and the ratio bound $S/N\le 1$ holds for all $t$.

\subsection{Simulation Algorithm: Reflected Euler Scheme}
Numerically, we integrate \eqref{eq:reflected} by a reflected (L\'epingle) Euler-Maruyama step on $S$, leaving the $I,U$ updates unreflected; the local time is recovered as the accumulated reflection increment, giving a consistent estimator of $\ell$. 

\begin{algorithm}[t]
\caption{Reflected Euler-Maruyama for the CWD SDE \eqref{eq:reflected}}
\label{alg:reflected-euler}
\begin{algorithmic}[1]
\Require initial state $(S_{0},I_{0},U_{0})\in D$; step $\Delta t$; horizon $T$
\State $L_{0}\gets 0$;\quad $n_{\max}\gets \lfloor T/\Delta t\rfloor$
\For{$n=0,1,\dots,n_{\max}-1$}
  \State $N_{n}\gets S_{n}+I_{n}$
  \State draw $\Delta W_{1},\Delta W_{2},\Delta W_{3}\stackrel{\text{iid}}{\sim}\mathcal{N}(0,\Delta t)$
  \State $\tilde S\gets S_{n}+f_{S}(S_{n},I_{n},U_{n})\,\Delta t
          +\sigma_{1}N_{n}\,\Delta W_{1}+\sigma_{2}S_{n}\,\Delta W_{2}$
  \State $S_{n+1}\gets \max(\tilde S,\,0)$
         \Comment{normal reflection at $\{S=0\}$}
  \State $\Delta L_{n}\gets \max(-\tilde S,\,0)$;\quad
         $L_{n+1}\gets L_{n}+\Delta L_{n}$
  \State $I_{n+1}\gets I_{n}+f_{I}(S_{n},I_{n},U_{n})\,\Delta t
          +\sigma_{2}I_{n}\,\Delta W_{2}$
  \State $U_{n+1}\gets U_{n}+f_{U}(S_{n},I_{n},U_{n})\,\Delta t
          +\sigma_{3}U_{n}\,\Delta W_{3}$
  \State clip $I_{n+1},U_{n+1}\gets\max(\cdot,0)$ to guard round-off
\EndFor
\State \Return $\{(S_{n},I_{n},U_{n},L_{n})\}_{n=0}^{n_{\max}}$
\end{algorithmic}
\end{algorithm}
The scheme has the usual order-$\frac12$ strong (order-$1$ weak) convergence for normally reflected SDEs, and $L_{n}$ is obtained as a by-product for estimating $\ell$. As a validation, $\Rzeros$ should agree with a Monte-Carlo top Lyapunov exponent $\hat\lambda=T^{-1}\log\|(I_{T},U_{T})\|$ computed on the subcritical face, where the reflection is inert.

\subsection{Parameters Values}

\begin{table}[H]
\centering
\caption{Model parameters, descriptions, baseline values, and corresponding references.}
\label{tab:parameters}

\begin{tabular}{|p{0.15\linewidth}|p{0.40\linewidth}|p{0.18\linewidth}|p{0.17\linewidth}|}
\hline
\textbf{Parameter} &
\textbf{Description} &
\textbf{Baseline Value} &
\textbf{Reference}
\\
\hline

$\nu$ & Birth rate & $0.2$ & Assumed \\
\hline

$\beta$ & Direct transmission rate & $0.00052$ & \cite{Noelle2024} \\
\hline

$\beta_e$ & Environmental transmission rate & $0.00065$ & \cite{Noelle2024} \\
\hline

$\gamma$ & Disease-induced mortality rate & $0.00015$ & \cite{Noelle2024} \\
\hline

$\mu$ & Natural mortality rate & $1/144$ & Approximated from \cite{islam2022modeling} \\
\hline

$\xi$ & Pathogen shedding rate into the environment & $0.1$ & Assumed \\
\hline

$\epsilon$ & Natural environmental decay rate & $1/120$ & Assumed \\
\hline

$\alpha_1$ & Deer harvesting rate & $0.1$ & \cite{tpwd_deer_management} \\
\hline

$\alpha_2$ & Environmental decontamination rate & $1$ & Assumed \\
\hline




$\sigma_i,\ i=1,2,3$ & Variability parameters & $(0.05,\,0.05,\,0.05)$ & Assumed \\
\hline

\end{tabular}
\end{table}

\subsection{Calibration}
The shedding rate $\xi$ describes the rate at which infected deer or elk release infectious prions into the environment through saliva, urine, feces, blood, or carcasses \cite{denkers2024temporal}. There is currently no universally accepted value for shedding rate in epidemiological models.

The environmental decay rate of CWD prions is believed to be very low, since prions can persist in soil for many years, and possibly far longer under favorable conditions \cite{miller2004environmental}. We therefore assume a small environmental decay rate of $\frac{1}{120}$.

The natural mortality rate $\mu$ for cervids varies with species, ages, climate, hunting pressure, and habitat quality  \cite{schuyler2019effects}. A previous study estimates the natural mortality rate for bulk at $\frac{1}{132}$  \cite{islam2022modeling}; we adopt $\mu = \frac{1}{144}$.

The birth rates $\nu$ likewise vary across species and environmental conditions. White-tailed deer females commonly produce one to two fawns annually, whereas mule deer and elk populations generally show lower effective recruitment owing to differences in survival and reproductive ecology \cite{miller2006dynamics, cross2025predictions}. We assume a birth rate of $0.2$ for the deer family.

\section{Auxiliary Lemmas and Definitions}
The Appendix presents several important definitions, lemmas, and theorems. Some are drawn from the existing literature; the remainder are proven here.
\begin{lemma}[Generalized L'Hospital rule \cite{Lee1977Generalizations}]\label{lHop}
For functions $f$ and $g$ satisfying the standard hypotheses,
\begin{equation} \label{gen_LHospital}
    \liminf_{t\to \infty} \dfrac{f'(t)}{g'(t)} \leq \liminf_{t\to \infty} \dfrac{f(t)}{g(t)} \leq \limsup_{t\to \infty} \dfrac{f(t)}{g(t)} \leq \limsup_{t\to \infty} \dfrac{f'(t)}{g'(t)}.
\end{equation}
\end{lemma}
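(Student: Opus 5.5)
The plan is to make the ``standard hypotheses'' explicit and then argue through Cauchy's mean value theorem. Concretely, I will assume that $f$ and $g$ are differentiable on some interval $(t_0,\infty)$, that $g'(t)\neq 0$ there (so by Darboux's theorem $g'$ has constant sign, and after replacing $(f,g)$ by $(-f,-g)$ if necessary we may take $g'>0$), and that $g(t)\to+\infty$ as $t\to\infty$. I will add a remark covering the other standard case, $f(t),g(t)\to 0$. The middle inequality in \eqref{gen_LHospital} is trivial. The left inequality follows from the right one applied to $-f$, because $\liminf h=-\limsup(-h)$. So the whole task reduces to proving
$$\limsup_{t\to\infty}\frac{f(t)}{g(t)}\le M:=\limsup_{t\to\infty}\frac{f'(t)}{g'(t)}.$$

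If $M=+\infty$ there is nothing to prove. Otherwise, I fix any real $M'>M$; when $M=-\infty$, $M'$ is an arbitrary real number. I then choose $T>t_0$ such that $f'(s)/g'(s)<M'$ for all $s>T$. For $t>T$ we have $g(t)>g(T)$ since $g$ is strictly increasing. Cauchy's mean value theorem on $[T,t]$ gives some $c\in(T,t)$ with
$$\frac{f(t)-f(T)}{g(t)-g(T)}=\frac{f'(c)}{g'(c)}<M'.$$
Rearranging, for $t$ large enough that $g(t)>0$,
$$\frac{f(t)}{g(t)}=\frac{f(T)}{g(t)}+\Big(1-\frac{g(T)}{g(t)}\Big)\frac{f(t)-f(T)}{g(t)-g(T)}<\frac{f(T)}{g(t)}+\Big(1-\frac{g(T)}{g(t)}\Big)M'.$$
Since $T$ is fixed and $g(t)\to\infty$, the right-hand side tends to $M'$. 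Therefore $\limsup_{t\to\infty} f/g\le M'$. Letting $M'\downarrow M$ gives the claim, including the case $M=-\infty$.

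For the alternative hypothesis $f,g\to 0$, I would run the same argument on an interval $[s,t]$ with $T<s<t$. Cauchy's theorem gives $\big(f(t)-f(s)\big)/\big(g(t)-g(s)\big)<M'$. Letting $t\to\infty$ with $s$ fixed yields $f(s)/g(s)\le M'$ for every $s>T$, which again gives the bound on the $\limsup$.

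The only delicate point is bookkeeping rather than substance. It consists of pinning down the hypotheses ($g'$ of one sign, and $g\to\infty$ or both functions tending to $0$) and handling the infinite values of the $\liminf$ and $\limsup$. The sign reduction $g'>0$ via Darboux's theorem and the $-f$ symmetry take care of all the remaining cases.
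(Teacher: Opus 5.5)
Your proof is correct, and it supplies something the paper does not: the paper gives no argument for Lemma~\ref{lHop} and only cites \cite{Lee1977Generalizations} under unspecified ``standard hypotheses.''

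You make those hypotheses explicit: $f,g$ differentiable on $(t_0,\infty)$, $g'\neq 0$ there, and either $|g|\to\infty$ or $f,g\to 0$. You then reduce the four-term chain to the single $\limsup$ inequality via $f\mapsto -f$. That inequality you prove with the classical Cauchy mean value decomposition $\frac{f(t)}{g(t)}=\frac{f(T)}{g(t)}+\bigl(1-\frac{g(T)}{g(t)}\bigr)\frac{f(t)-f(T)}{g(t)-g(T)}$. Along the way you correctly handle the positivity of the weight for large $t$, the cases $M=\pm\infty$, and the $0/0$ variant, where $g(s)\neq 0$ follows from strict monotonicity.

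One bookkeeping remark: if you assume $g\to+\infty$, then $g'>0$ is forced. So the flip $(f,g)\mapsto(-f,-g)$ is really what covers $g\to-\infty$, and you should state the hypothesis as $|g|\to\infty$.

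The citation saves space, but it leaves unclear which hypotheses are being invoked. Your version is self-contained and visibly covers the only place the lemma is used, the proof of Lemma~\ref{Ap1}. There it is applied with $g(t)=t$ (so $g'=1$ and $g\to\infty$) to the explicit logarithmic lower bound for $G(t)$, whose derivative tends to $a-\epsilon$.
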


Let $(\Omega,\mathcal{F},\{\mathcal{F}_t\}_{t\geq 0},\mathbb{P})$ be a complete probability space with filtration $\{\mathcal{F}_t\}_{t\geq 0}$, with respect to which all of the stochastic processes introduced below are defined.

\begin{lemma}[Ito's formula]\label{lem:ito}
Let $X(t)$ satisfy the one-dimensional stochastic differential equation
$$
    dX({t})=\mu_{t}dt+\sigma_{t}dW({t}).
$$
       If $f\left(t,x\right)$ is a twice-differentiable scalar function, then
$$
    df(t,X(t))=\left(\frac{\partial f}{\partial t}+\mu_{t}\frac{\partial f}{\partial x}+\frac{\sigma_{t}^{2}}{2}\frac{\partial^{2}f}{\partial x^{2}}\right)dt+\sigma_{t}\frac{\partial f}{\partial x}dW(t).
$$

\end{lemma}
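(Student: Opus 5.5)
This is the standard Itô formula for a one-dimensional Itô process $X$ with $\int_0^t|\mu_s|\,ds<\infty$ and $\int_0^t\sigma_s^2\,ds<\infty$ a.s. I read "twice differentiable" as $f\in C^{1,2}$: continuously differentiable in $t$ and twice continuously differentiable in $x$. The plan is the classical proof by second-order Taylor expansion along partitions, followed by a localization to remove boundedness assumptions.

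\textbf{Step 1 (localization).} Introduce stopping times $T_n=\inf\{t:|X_t|\ge n \text{ or } \int_0^t(|\mu_s|+\sigma_s^2)\,ds\ge n\}$. It suffices to prove the identity for the stopped process $X_{t\wedge T_n}$ and then let $n\to\infty$, since $T_n\uparrow\infty$ a.s. On $[0,T_n]$ the process takes values in a compact set, so we may replace $f$ by a $C^{1,2}$ function with compact support that agrees with $f$ there. Then $f,f_t,f_x,f_{xx}$ are bounded and uniformly continuous.

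\textbf{Step 2 (telescoping and Taylor expansion).} Fix $t$ and partitions $0=t_0<\dots<t_m=t$ with mesh tending to $0$. Write
\[
f(t,X_t)-f(0,X_0)=\sum_k\bigl[f(t_{k+1},X_{t_{k+1}})-f(t_k,X_{t_{k+1}})\bigr]+\sum_k\bigl[f(t_k,X_{t_{k+1}})-f(t_k,X_{t_k})\bigr].
\]
Apply the mean value theorem to the first sum; by uniform continuity of $f_t$ it converges to $\int_0^t f_t\,ds$. Expand the second sum in Taylor form to second order in $x$:
\[
f_x(t_k,X_{t_k})\Delta X_k+\tfrac12 f_{xx}(t_k,\xi_k)(\Delta X_k)^2 .
\]

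\textbf{Step 3 (first-order term).} Split $\Delta X_k=\int_{t_k}^{t_{k+1}}\mu\,ds+\int_{t_k}^{t_{k+1}}\sigma\,dW$. The drift part converges pathwise to $\int f_x\mu\,ds$. The stochastic part converges in probability to $\int f_x\sigma\,dW$, because $f_x(t_k,X_{t_k})$ is a left-point simple approximation of the continuous adapted integrand.

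\textbf{Step 4 (second-order term; the main obstacle).} We must show
\[
\sum_k f_{xx}(t_k,\xi_k)(\Delta X_k)^2\to\int_0^t f_{xx}\sigma_s^2\,ds .
\]
First replace $\xi_k$ by $X_{t_k}$. The error is at most $\max_k|f_{xx}(\xi_k)-f_{xx}(X_{t_k})|\cdot\sum_k(\Delta X_k)^2$; the maximum tends to $0$ by uniform continuity and path continuity, and the sum stays bounded in probability. Next, the cross terms and squared drift terms vanish by continuity of $\int\mu\,ds$. What remains is to show that $\sum_k g_k\bigl[(\int_{t_k}^{t_{k+1}}\sigma\,dW)^2-\int_{t_k}^{t_{k+1}}\sigma^2\,ds\bigr]\to0$ in $L^2$, where $g_k$ is bounded and $\mathcal F_{t_k}$-measurable. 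The summands are martingale differences, hence orthogonal, so the $L^2$ norm is $\sum_k\mathbb E[g_k^2(\cdot)^2]$. By BDG or a fourth-moment bound this is controlled by $\mathbb E\bigl[\max_k\int_{t_k}^{t_{k+1}}\sigma^2\,ds\cdot\int_0^t\sigma^2\,ds\bigr]$, which tends to $0$ by localization and dominated convergence. In the general case, approximate $\sigma$ by bounded simple processes first.

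Combining Steps 2–4 gives the integral form of the formula for the stopped process. Letting $n\to\infty$ then yields the stated differential identity.
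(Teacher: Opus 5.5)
The paper does not prove this lemma. It is listed in the appendix among the results ``drawn from the existing literature'' and used as a black box, so there is no paper argument to compare against. Your proposal gives the classical textbook proof: localization, telescoping along partitions, a second-order Taylor expansion, and an $L^2$ martingale-difference estimate for the quadratic-variation term. The proof is correct.

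Compared with the bare citation, your version has the advantage of stating the hypotheses the lemma leaves implicit:
\begin{itemize}
\item $\mu$ and $\sigma$ are progressively measurable, with $\int_0^t|\mu_s|\,ds<\infty$ and $\int_0^t\sigma_s^2\,ds<\infty$ a.s.;
\item $f\in C^{1,2}$, rather than merely ``twice differentiable''.
\end{itemize}
The second strengthening is needed, not just convenient. Steps 2 and 4 rely on uniform continuity of $f_t$ and $f_{xx}$ on compacts. For a merely twice-differentiable $f$, the second derivative can fail to be locally Lebesgue integrable, and then the $dt$-term of the formula need not even be defined.

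Three minor points:
\begin{itemize}
\item You should state the limit $\sum_k f_{xx}(t_k,X_{t_k})\int_{t_k}^{t_{k+1}}\sigma_s^2\,ds\to\int_0^t f_{xx}(s,X_s)\sigma_s^2\,ds$ explicitly. It follows exactly as for the drift term in Step 3.
\item Your justification that the cross terms vanish ``by continuity of $\int\mu\,ds$'' is correct but terse. Write $A_k=\int_{t_k}^{t_{k+1}}\mu_s\,ds$ and $B_k=\int_{t_k}^{t_{k+1}}\sigma_s\,dW_s$. By Cauchy--Schwarz, $\bigl|\sum_k g_kA_kB_k\bigr|\le\|g\|_\infty(\sum_kA_k^2)^{1/2}(\sum_kB_k^2)^{1/2}$. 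Here $\sum_kA_k^2\le\max_k|A_k|\int_0^t|\mu_s|\,ds\to0$, while $\sum_kB_k^2$ is bounded in probability.
\item The closing suggestion to approximate $\sigma$ by simple processes is unnecessary. After localization, $\int_0^{t\wedge T_n}\sigma_s^2\,ds\le n$, and BDG already gives $\mathbb E[B_k^4]\le C\,\mathbb E[(\int_{t_k}^{t_{k+1}}\sigma_s^2\,ds)^2]<\infty$. That bound is all your $L^2$ estimate needs.
\end{itemize}
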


Consider the $d$-dimensional stochastic differential equation
$$
dX(t)=F(t,X(t)) dt + G(t,X(t)) dW(t),
$$
where $F(t,x)=\left(f_{i}(t,x)\right)_{1\leq i\leq d}$ is a $d\times 1$ vector and $G(t,x)=\left(g_{i,j}(t,x)\right)_{1\leq i\leq d,1\leq j\leq n}$ is a $d\times n$ matrix, both defined on $[0,\infty]\times \mathbb{R}^d$ and locally Lipschitz in $x$, and $W(t)$ is an $n-$ dimensional Brownian motion. Let $V \in C^{1,2}([0,\infty]\times \mathbb{R}^d,[0,\infty)),$ and define the differential operator $L$ by $$LV(t,x)=\dfrac{\partial V(t,x)}{\partial t}+F(t,x)^T DV(t,x) +\frac12 \text{Trace}(G(t,x)^T  H(t,x) G(t,x) ),$$ where $DV(t,x)=\dfrac{\partial V(t,x)}{\partial x}$ is the gradient of $V$ and $H(t,x)=\dfrac{\partial^2 V(t,x)}{\partial x^2}$ is its Hessian. It\^o's formula for $X(t)$ and such a function $V$ states that $$dV(t,X(t))=LV(t,X(t))dt+DV(t,X(t))^T G(t,X(t)) dW(t)$$ for $t\geq 0$. The following lemma appears in \cite{mao2011stochastic}.

\begin{lemma}\label{Ap2}
Let $\{M_t:t\geq 0\}$ be a real-valued continuous local martingale with respect to $\{\mathcal{F}_t\}_{t\geq 0}$ with $M_0=0$. If $\limsup_{t \to \infty} \frac{\mathbb{E}(M_t^2)}{t}   < \infty,$ then $\lim_{t \to \infty} \frac{M_t}{t}=0$ almost surely.
\end{lemma}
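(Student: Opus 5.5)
The plan is to reduce the statement to a Borel--Cantelli argument over dyadic time blocks. The quantity controlled by the hypothesis is the second moment $\mathbb E(M_t^2)\le Ct$ for $t\ge t_0$. The tool that turns this into pathwise control is Doob's $L^2$ maximal inequality, which needs $M$ to be a genuine square-integrable martingale on each compact interval. So the proof splits into two parts. Step~1 upgrades the continuous local martingale to a true martingale on $[0,t]$ (or at least obtains the maximal bound directly). Step~2 is the dyadic estimate. Step~2 is routine; Step~1 is where the generality of the statement bites.

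\emph{Step 1 (localization).} Take a localizing sequence $\tau_n\uparrow\infty$ such that each $M^{\tau_n}$ is a bounded martingale. For these, Doob's inequality gives
$$\mathbb E\sup_{s\le t}M_{s\wedge\tau_n}^2\le 4\,\mathbb E M_{t\wedge\tau_n}^2=4\,\mathbb E\langle M\rangle_{t\wedge\tau_n}.$$
Letting $n\to\infty$ and using monotone convergence on both outer terms, it suffices to bound $\mathbb E\langle M\rangle_t\le C't$. Equivalently, we need $\limsup_n\mathbb E M_{t\wedge\tau_n}^2\le C'\,\mathbb E M_t^2$, which holds if $\{M_{t\wedge\tau_n}^2\}_n$ is uniformly integrable.

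I would first try to obtain this from $M_{t\wedge\tau_n}=\mathbb E[M_t\mid\mathcal F_{t\wedge\tau_n}]$. That identity, however, is exactly the true-martingale property, so the argument would be circular. Fatou alone gives only the wrong direction, $\mathbb E M_t^2\le\liminf_n\mathbb E M_{t\wedge\tau_n}^2$. This is the main obstacle. A continuous local martingale with $\mathbb E M_t^2<\infty$ need not be a martingale (strict local martingales such as the inverse three-dimensional Bessel process). So for a fully general $M$ I expect to need either an extra hypothesis or a different route.

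The fallback I would try is Dambis--Dubins--Schwarz: write $M_t=B_{\langle M\rangle_t}$.
\begin{itemize}
\item On $\{\langle M\rangle_\infty<\infty\}$, $M_t$ converges a.s., so $M_t/t\to0$ trivially.
\item On $\{\langle M\rangle_\infty=\infty\}$, the law of the iterated logarithm gives $|M_t|=O\big(\sqrt{\langle M\rangle_t\log\log\langle M\rangle_t}\big)$.
\end{itemize}
It would then suffice to show $\langle M\rangle_t=O(t^{2-\delta})$ a.s. for some $\delta>0$, which would again follow from a moment bound on $\langle M\rangle_t$. I would record honestly that the second-moment hypothesis controls $\langle M\rangle$ only once uniform integrability is available. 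In every application in this paper, $M$ is an Itô integral $\int\sigma\,dW$ with $\mathbb E\int_0^t\sigma^2ds<\infty$, which is a true $L^2$ martingale, and then $\mathbb E\langle M\rangle_t=\mathbb E M_t^2$ exactly.

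\emph{Step 2 (dyadic Borel--Cantelli).} Granting $\mathbb E\sup_{s\le t}M_s^2\le 4Ct$ for large $t$, fix $\varepsilon>0$ and set $A_k=\{\sup_{s\le 2^{k+1}}|M_s|>\varepsilon 2^{k}\}$. Then
$$\mathbb P(A_k)\le \frac{4C\,2^{k+1}}{\varepsilon^2 4^{k}}=\frac{8C}{\varepsilon^2}\,2^{-k},$$
which is summable. By Borel--Cantelli, a.s. only finitely many $A_k$ occur. For $t\in[2^k,2^{k+1}]$ outside $A_k$ this gives $|M_t|/t\le\varepsilon 2^k/2^k=\varepsilon$. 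Hence $\limsup_t|M_t|/t\le\varepsilon$ a.s. Taking $\varepsilon=1/j$ along a countable sequence yields $M_t/t\to0$ almost surely.
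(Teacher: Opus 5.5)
\textbf{The gap in Step~1 cannot be closed, because the lemma as stated is false for strict local martingales.} Your Step~2 is correct. For a genuine square-integrable martingale it combines with Doob's inequality $\mathbb{E}\sup_{s\le t}M_s^2\le 4\,\mathbb{E}M_t^2$ into a complete proof. Your diagnosis of Step~1 is also right, but the problem is worse than an obstacle to one route. Fatou only yields $\mathbb{E}M_t^2\le\mathbb{E}\langle M\rangle_t$, so the hypothesis carries no pathwise information about $\langle M\rangle$, and no argument can succeed in the stated generality.

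Here is a sketch of a counterexample. Enlarge the Brownian filtration by i.i.d.\ uniforms $U_k$ revealed at time $k$. For $k\ge2$, open a window of length $1/k$ at $s_k=k+(1-1/k)U_k$.

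On the first half of the window, let $M$ move as $\int\sqrt{\phi'(u-s_k)}\,dW_u$ with $\phi(r)=\tan(\pi kr)$. This is a Brownian motion on a clock that explodes at the midpoint. Stop it when it hits $k$, which happens a.s.\ before the midpoint. On the second half, run it the same way from $k$ until it hits $0$. Set $M=0$ off the windows.

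Then $M=\int H\,dW$ with $\int_0^tH_u^2\,du<\infty$ a.s., so $M$ is a continuous local martingale with $M_0=0$. Since $\max_{[k,k+1]}M\ge k$, we get $\limsup_tM_t/t\ge1$ a.s.

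Yet a fixed $t\in[k,k+1)$ lies in the window with probability at most $2/k$. Let $r$ be the elapsed time in the current half-window and $T$ a Brownian first-passage time over distance $k$. Conditionally on being in the window, $\mathbb{E}M_t^2\le k^2+\mathbb{E}[\phi(r)\wedge T]$, and $\mathbb{E}[u\wedge T]\le 2k\sqrt{2u/\pi}$. Since $\int_0^{1/(2k)}\sqrt{\tan(\pi kr)}\,dr=1/(\sqrt2\,k)$, averaging over $U_k$ gives $\mathbb{E}M_t^2\le 2t+C$. (Here $\mathbb{E}\langle M\rangle_t=\infty$, exactly the situation your Step~1 flagged.)

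The paper gives no proof of its own; it attributes the lemma to Mao. Mao's strong law of large numbers for continuous local martingales assumes $\limsup_{t\to\infty}\langle M\rangle_t/t<\infty$ almost surely, not a bound on $\mathbb{E}(M_t^2)$. So the statement appears to have been transcribed with the wrong hypothesis. Rather than hedging, you should say this and prove a corrected version. There are two natural choices.
\begin{enumerate}
\item Add the hypothesis that $M$ is a square-integrable martingale on each $[0,t]$, for instance $M=\int\sigma\,dW$ with $\mathbb{E}\int_0^t\sigma^2\,ds<\infty$. This covers everything the paper actually uses, namely $W_t/t\to0$ for Brownian motions. Step~1 then reduces to Doob's inequality, and your Step~2 finishes the proof.
\item Adopt Mao's hypothesis. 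Your Dambis--Dubins--Schwarz fallback then closes at once, with no moment bound needed. On $\{\langle M\rangle_\infty<\infty\}$, $M_t$ converges. On $\{\langle M\rangle_\infty=\infty\}$, the law of the iterated logarithm gives $|M_t|=O\bigl(\sqrt{t\log\log t}\bigr)$.
\end{enumerate}
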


The following lemma could be found in the appendix of \cite{balogh2025stochastic}. 
\begin{lemma}\label{Ap1}
Let $f\in C([0,\infty)\times \Omega,(0,\infty))$, and let $a(t)$ be a process satisfying $\lim_{t \to \infty} \dfrac{a(t)}{t}=a$ almost surely for some positive constant $a$.  Suppose there exists a positive constant $b$ such that
$$\log(f(t))\geq (\leq) \, a(t)-b\int_0^t f(s)ds $$
for all $t\geq 0$ almost surely. Then
$$\liminf_{t\to \infty} (\limsup_{t\to \infty})\, \frac{1}{t} \int_0^t f(s)ds\geq (\leq) \, \dfrac{a}{b} \quad a.s.$$
\end{lemma}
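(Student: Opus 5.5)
The plan is to argue pathwise, fixing $\omega$ outside a null set on which $a(t)/t\to a$ and the integral inequality holds for all $t\ge0$, and to reduce everything to a scalar differential inequality for the cumulative integral
$$F(t):=\int_0^t f(s)\,ds .$$
Since $f$ is continuous and positive, $F$ is $C^1$ with $F'(t)=f(t)>0$ and $F(0)=0$. The hypothesis $\log f(t)\ge a(t)-bF(t)$ then reads $F'(t)e^{bF(t)}\ge e^{a(t)}$, and the left side is exactly $\frac{d}{dt}\big(e^{bF(t)}/b\big)$. Integrating from $0$ to $t$ gives
$$e^{bF(t)}\ \ge\ 1+b\int_0^t e^{a(s)}\,ds .$$
The reverse inequality gives the same bound with $\le$. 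Here I will assume, as in the application in Corollary \ref{cor:lambda-lower-bound} where $a(t)$ is continuous, that $a$ is locally bounded, so the integral is finite.

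\textbf{The ``$\ge$'' case.} Fix $\varepsilon\in(0,a)$, which is where positivity of $a$ is used. There is a (random) $T_\varepsilon$ such that $a(s)\ge(a-\varepsilon)s$ for all $s\ge T_\varepsilon$. Then for $t>T_\varepsilon$
$$\int_0^t e^{a(s)}ds\ \ge\ \frac{e^{(a-\varepsilon)t}-e^{(a-\varepsilon)T_\varepsilon}}{a-\varepsilon}.$$
Taking logarithms in the inequality for $e^{bF(t)}$ yields $bF(t)\ge (a-\varepsilon)t+O(1)$ as $t\to\infty$. Dividing by $bt$ gives $\liminf_t F(t)/t\ge (a-\varepsilon)/b$, and letting $\varepsilon\downarrow0$ along a countable sequence gives the claim almost surely.

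\textbf{The ``$\le$'' case.} This is symmetric. For $s\ge T_\varepsilon$ we have $a(s)\le(a+\varepsilon)s$, so
$$\int_0^t e^{a(s)}ds\ \le\ C_\varepsilon+\frac{e^{(a+\varepsilon)t}}{a+\varepsilon},$$
where $C_\varepsilon=\int_0^{T_\varepsilon}e^{a(s)}ds<\infty$ by local boundedness. Hence $bF(t)\le\log\big(1+bC_\varepsilon+\tfrac{b}{a+\varepsilon}e^{(a+\varepsilon)t}\big)=(a+\varepsilon)t+O(1)$. This gives $\limsup_t F(t)/t\le(a+\varepsilon)/b$, and we let $\varepsilon\downarrow0$.

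\textbf{Main obstacle.} The only delicate point is bookkeeping rather than analysis. One must ensure the exceptional null sets are handled uniformly, which is done by fixing $\omega$ once and using countably many $\varepsilon$. One must also justify finiteness of $\int_0^{T_\varepsilon}e^{a(s)}ds$, which requires some local integrability of $e^{a(\cdot)}$ that I would state explicitly. In the ``$\ge$'' direction no such integrability issue arises, since a lower bound on the integral suffices.
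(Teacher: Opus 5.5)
Your proof is correct and is essentially the paper's argument: both rewrite the hypothesis as $\frac{d}{dt}e^{bF(t)}\ge b\,e^{a(t)}$ (the paper writes $G=bF$), bound $e^{a(t)}$ by $e^{(a\mp\varepsilon)t}$ for large $t$, integrate, and take logarithms. Your pathwise treatment with countably many $\varepsilon$ is cleaner than the paper's $\Omega_\varepsilon$ bookkeeping. One small point: if you integrate from $T_\varepsilon$ rather than from $0$, as the paper does, the constant becomes $e^{bF(T_\varepsilon)}<\infty$ (finite because $f$ is continuous), so the extra local-boundedness assumption on $a(\cdot)$ that you add for the ``$\le$'' case is unnecessary.
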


\begin{proof} We prove the "$\geq$" case; the "$\leq$" case follows by the same argument. Let $g(t):=b\,f(t),$ so that the inequality becomes
$$
\log(g(t))\geq (\leq) \, a(t)+\log(b)-\int_0^t g(s)ds. 
$$

Since $\lim_{t \to \infty} \dfrac{a(t)}{t}=a$ almost surely, for any sufficiently small $\epsilon>0$ there exists $\Omega_\epsilon\in \mathcal{F}$ with $\mathbb{P}(\Omega_\epsilon)\geq 1-\epsilon$ such that
$$
0<a-\epsilon \leq \dfrac{a(t)}{t}(\omega)\leq a+\epsilon \quad \text{for all} \,t\geq t_0(\omega) \, \text{and each} \, \omega \in \Omega_\epsilon.
$$
Let $G(t)=\int_0^t g(s) ds$ for all $t\geq 0$; then $dG(t)=g(t)dt$ and 
$$
G(t)\geq a(t)+\log(b) -\log(g(t)) \quad \text{for all} \,t\geq 0. 
$$
Therefore, 
$$
de^{G(t)}=e^{G(t)}g(t)dt\geq e^{a(t)+\log(b) -\log(g(t))} g(t)dt=b e^{a(t)} dt,
$$
and $e^{a(t)}\geq e^{(a-\epsilon)t}$ on $\Omega_\epsilon$. Thus, for $t>t_0$
$$
G(t)\geq \log\left(e^{G(t_0)}+\frac{b}{(a-\epsilon)}[e^{(a-\epsilon)t}-e^{(a-\epsilon)t_0}]\right).
$$ 
Hence, by L'Hospital's rule (Lemma \ref{lHop})
$$
\liminf_{n\to \infty} \dfrac{G(t)}{t}\geq a-\epsilon,
$$
and so
$$
\liminf_{t\to \infty} \frac{1}{t} \int_0^t f(s)ds\geq \dfrac{a-\epsilon}{b} \quad \text{ on }\Omega_\epsilon.
$$
Since $\epsilon>0$ is arbitrary, 
$$
\liminf_{t\to \infty}  \frac{1}{t} \int_0^t f(s)ds\geq  \, \dfrac{a}{b} \quad a.s.
$$
\end{proof}

\newpage
\bibliographystyle{acm}
\bibliography{CWD}

\end{document}